\renewcommand{\phi}{\varphi}
\newcommand{\mc}[1]{\mathcal{#1}}
\newcommand{\mf}[1]{\mathfrak{#1}}
\newcommand{\mb}[1]{\mathbb{#1}}
\newcommand{\id}{\mathbbm{1}}
\newcommand{\tint}{{\textstyle\int}}
\DeclareMathOperator{\Mat}{Mat}
\DeclareMathOperator{\diag}{diag}
\DeclareMathOperator{\tr}{tr}
\DeclareMathOperator{\ad}{ad}
\DeclareMathOperator{\im}{Im}
\DeclareMathOperator{\Span}{Span}
\DeclareMathOperator{\Res}{Res}
\theoremstyle{plain}
\newtheorem{theorem}{Theorem}[section]
\newtheorem{lemma}[theorem]{Lemma}
\newtheorem{proposition}[theorem]{Proposition}
\newtheorem{corollary}[theorem]{Corollary}
\theoremstyle{definition}
\newtheorem{definition}[theorem]{Definition}
\newtheorem{example}[theorem]{Example}
\theoremstyle{remark}
\newtheorem{remark}[theorem]{Remark}
\numberwithin{equation}{section}
\definecolor{light}{gray}{.9}
\begin{document}

\title[Classical $\mc W$-algebras for $\mf{gl}_N$]{Classical affine $\mc W$-algebras for $\mf{gl}_N$ 
and associated integrable Hamiltonian hierarchies}

\author{Alberto De Sole}
\address{Dipartimento di Matematica, Sapienza Universit\`a di Roma,
P.le Aldo Moro 2, 00185 Rome, Italy}
\email{desole@mat.uniroma1.it}
\urladdr{www1.mat.uniroma1.it/\$$\sim$\$desole}

\author{Victor G. Kac}
\address{Dept of Mathematics, MIT,
77 Massachusetts Avenue, Cambridge, MA 02139, USA}
\email{kac@math.mit.edu}

\author{Daniele Valeri}
\address{Yau Mathematical Sciences Center, Tsinghua University, 100084 Beijing, China}
\email{daniele@math.tsinghua.edu.cn}



\begin{abstract}
We apply the new method for constructing integrable Hamiltonian hierarchies
of Lax type equations developed in our previous paper,
to show that all $\mc W$-algebras $\mc W(\mf{gl}_N,f)$
carry such a hierarchy.
As an application, we show that all vector constrained KP hierarchies
and their matrix generalizations are obtained from these hierarchies
by Dirac reduction,
which provides the former with a bi-Poisson structure.
\end{abstract}

\keywords{
Classical affine $\mc W$-algebra,
integrable Hamiltonian hierarchy,
Lax equation,
Adler type pseudodifferential operator,
generalized quasideterminant.
}

\maketitle

\tableofcontents

\section{Introduction}\label{sec:1}

In their seminal paper \cite{DS85}
Drinfeld and Sokolov constructed the $\mc W$-algebra $\mc W(\mf g,f)$
for each reductive Lie algebra $\mf g$ and its principal nilpotent element $f$
and discovered the associated integrable hierarchy of bi-Hamiltonian PDE.
Furthermore, they showed that for $\mf g=\mf{gl}_N$
this $\mc W$-algebra is isomorphic to the Adler-Gelfand-Dickey algebra \cite{GD78,Adl79},
and that the associated integrable hierarchy is the $N$-th KdV hierarchy
constructed by Gelfand and Dickey \cite{GD76}
using fractional powers of differential operators.
In the proof of this isomorphism Drinfeld and Sokolov used quasideterminants,
a few years before Gelfand and Retakh began their systematic study in the early 90's,
see \cite{GGRW05} for references.

The classical affine $\mc W$-algebras $\mc W(\mf g,f)$,
for arbitrary reductive Lie algebras $\mf g$ and their non-zero nilpotent elements $f$,
have been studied both in physics and mathematics literature,
see \cite{BFOFW90,DSK06,DSKV13} and references there.
In particular, it has been understood that the adequate setup for the theory
of $\mc W$-algebras is the language of $\lambda$-brackets
in the framework of Poisson vertex algebras (PVA).
This approach has led to an explicit formula for the $\lambda$-brackets
in all classical affine $\mc W$-algebras \cite{DSKV16}.

However, the problem whether any $\mc W$-algebra $\mc W(\mf g,f)$ carries an integrable hierarchy
of Hamiltonian PDE has been solved so far only under a very restrictive assumption on 
the nilpotent element $f$ by adopting the Drinfeld-Sokolov method
\cite{dGHM92,FHM93,BdGHM93,DF95,FGMS95,FGMS96,DSKV13,DSKV14a}.

In fact, one has a pencil of compatible PVA structures on the differential algebra $\mc W(\mf g,f)$,
depending on an element $S$ of $\mf g$,
and we shall denote by $\mc W_\epsilon(\mf g,f,S)$, $\epsilon\in\mb F$,
the corresponding family of PVAs.
Then, the related problem is whether the family $\mc W_\epsilon(\mf g,f,S)$, $\epsilon\in\mb F$,
carries an integrable bi-Hamiltonian hierarchy satisfying the so called Lenard-Magri scheme of integrability \cite{Mag78}.

In the present paper we solve these problems for $\mf g=\mf{gl}_N$
and its arbitrary nilpotent element $f$,
by making use of the scheme of integrability developed in our recent paper \cite{DSKVnew}.
The main ingredients of this scheme are the notion of an Adler type matrix pseudodifferential operator
with respect to a $\lambda$-bracket on a differential algebra,
introduced in \cite{DSKV15},
and the notion of a generalized quasideterminant.

\subsection{PVA's and Adler type pseudodifferential operators}\label{sec:1.1}

Recall that a $\lambda$-\emph{bracket} $\{\cdot\,_\lambda\,\cdot\}$ on a differential algebra $(\mc V,\partial)$
is a bilinear map $\mc V\times\mc V\to\mc V[\lambda]$,
satisfying the following axioms \cite{DSK06,BDSK09} ($a,b,c\in\mc V$):
\begin{enumerate}[(i)]
\item
sesquilinearity:
$\{\partial a_\lambda b\}=-\lambda\{a_\lambda b\}$,
$\{a_\lambda\partial b\}=(\lambda+\partial)\{a_\lambda b\}$;
\item
Leibniz rules:
$\{a_\lambda bc\}=\{a_\lambda b\}c+\{a_\lambda c\}b$,
$\{ab_\lambda c\}=\{a_{\lambda+\partial} c\}_\to b
+\{b_{\lambda+\partial} c\}_\to a$.
\end{enumerate}
If the $\lambda$-bracket $\{\cdot\,_\lambda\,\cdot\}$ satisfies, in addition,
skewsymmetry and Jacobi identity (see Section \ref{sec:2.1})
then $\mc V$ is called a Poisson vertex algebra (PVA) and the $\lambda$-bracket is called
a PVA $\lambda$-bracket.

In the present paper we shall consider pencils $\lambda$-brackets
$\{a_\lambda b\}_\epsilon=\{a_\lambda b\}_0+\epsilon\{a_\lambda b\}_1$, $\epsilon\in\mb F$,
on a differential algebra $\mc V$.
In such case, the $\lambda$-bracket $\{\cdot\,_\lambda\,\cdot\}_0$ (resp. $\{\cdot\,_\lambda\,\cdot\}_1$)
will be called the $0$-th (resp. $1$-st) $\lambda$-bracket.
If they are PVA $\lambda$-brackets,
they are called the $0$-th and the $1$-st PVA (or Poisson) structures on $\mc V$.
(Unfortunately traditionally they are called the $2$-nd and the $1$-st Poisson structures respectively.)

An $M\times N$ matrix pseudodifferential operator $A(\partial)=\big(A_{ij}(\partial)\big)$,
where $A_{ij}(\partial)\in\mc V((\partial^{-1}))$,
is called an operator of \emph{Adler type} with respect to a $\lambda$-bracket $\{\cdot\,_\lambda\,\cdot\}$ on $\mc V$
if for every $(i,j),(h,k)\in\{1,\dots,M\}\times\{1,\dots,N\}$ we have \cite{DSKV15,DSKVnew}:
\begin{equation}\label{eq:adler-intro}
\begin{split}
\{A_{ij}(z)_\lambda A_{hk}(w)\}
& = A_{hj}(w+\lambda+\partial)(z\!-\!w\!-\!\lambda\!-\!\partial)^{-1}(A_{ik})^*(\lambda-z)
\\
& - A_{hj}(z)(z\!-\!w\!-\!\lambda\!-\!\partial)^{-1}A_{ik}(w)
\,.
\end{split}
\end{equation}
We shall also say that $A(\partial)$ is of \emph{bi-Adler type} with respect 
to a pencil of $\lambda$-brackets $\{\cdot\,_\lambda\,\cdot\}_\epsilon$, $\epsilon\in\mb F$,
if $A(\partial)+\epsilon\id_N$ is of Adler type
with respect to $\{\cdot\,_\lambda\,\cdot\}_\epsilon$ for every $\epsilon\in\mb F$.

Given an $N\times N$ invertible matrix $A$ over a unital associative (not necessarily commutative)
algebra $R$, its $(i,j)$-\emph{quasideterminant}
is defined as the inverse (if it exists) of the $(j,i)$ entry of $A^{-1}$ \cite{GGRW05}.
This can be generalized by replacing the $(j,i)$ entry of $A^{-1}$ by an $M\times M$ square submatrix.
More generally, given $I\in\Mat_{N\times M}R$ and $J\in\Mat_{M\times N}R$,
for some $M\leq N$,
the $(I,J)$-quasideterminant of $A$ is defined by \cite{DSKVnew}
\begin{equation}\label{eq:intro-1.2}
|A|_{IJ}:=(JA^{-1}I)^{-1}
\,\in\Mat_{M\times M}R
\,,
\end{equation}
provided that $JA^{-1}I$ is an invertible matrix.

The basic family of Adler type $N\times N$ matrix differential operators
is constructed as follows \cite{DSKVnew}.
Let $\mc V(\mf{gl}_N)$ be the algebra of differential polynomials in the indeterminates $q_{ij}$,
$1\leq i,j\leq N$, 
let $Q=\big(q_{ji}\big)_{i,j=1}^N$,
let $S\in\Mat_{N\times N}\mb F$,
and let $\epsilon\in\mb F$ be a parameter.
Then the operator
\begin{equation}\label{eq:intro-1.3}
A_{\epsilon S}(\partial)= \id_N\partial+Q+\epsilon S
\end{equation}
is of Adler type with respect to the $\lambda$-bracket
\begin{equation}\label{eq:intro-1.4}
\{a_\lambda b\}_\epsilon = [a,b]+\tr(ab)\lambda+\epsilon\tr(S[a,b])
\,\,,\,\,\,\, a,b\in\mf{gl}_N\,.
\end{equation}
Here $\mf{gl}_N$ is identified with a subspace of $\mc V(\mf{gl}_N)$ via $E_{ij}\mapsto q_{ij}$.
Formula \eqref{eq:intro-1.4} endows $\mc V(\mf{gl}_N)$ with a pencil of PVA $\lambda$-brackets
and we denote the corresponding family of PVAs by $\mc V_\epsilon(\mf{gl}_N,S)$, $\epsilon\in\mb F$.

The basic property of an $N\times N$ matrix pseudodifferential operator $A(\partial)$
of Adler type with respect to a $\lambda$-bracket $\{\cdot\,_\lambda\,\cdot\}$ on $\mc V$,
is that any of its generalized quasideterminants is again of Adler type with respect 
to the same $\lambda$-bracket \cite{DSKVnew}.
Moreover, let $S\in\Mat_{N\times N}\mb F$ and assume that, for every $\epsilon\in\mb F$, 
$A(\partial)+\epsilon S$ is an operator of Adler type
with respect to a member $\{\cdot\,_\lambda\,\cdot\}_\epsilon$ of a pencil
of $\lambda$-brackets.
Let also $S=IJ$ be a factorization of $S$ with $I\in\Mat_{N\times r}\mb F$ and $J\in\Mat_{r\times N}\mb F$,
where $r$ is the rank of $S$.
Then, the generalized quasideterminant $|A(\partial)|_{IJ}$ is an operator of bi-Adler type
with respect to the same pencil of $\lambda$-brackets $\{\cdot\,_\lambda\,\cdot\}_\epsilon$, $\epsilon\in\mb F$.

The importance of a square matrix Adler type pseudodifferential operator $A(\partial)$
comes from the fact that it provides a hierarchy of compatible Lax equations
\begin{equation}\label{eq:intro-1.5}
\frac{dA(\partial)}{dt_{n,B}}
=
[(B(\partial)^n)_+,A(\partial)]
\,,
\end{equation}
where $B(\partial)$ is a root of $A(\partial)$,
and $n\in\mb Z_+$.
Moreover, this hierarchy admits the following conserved densities in involution:
\begin{equation}\label{eq:intro-1.6}
h_{n',B'}
=
\Res_\partial 
\tr B'(\partial)^{n'}
\,\,,\,\,\,\,
n'\in\mb Z_+
\,,\,\,
B' \,\text{ a root of } A
\,,
\end{equation}
see \cite{DSKV15,DSKVnew}.
Moreover, for a bi-Adler type operator $A(\partial)$
with respect to a pencil of PVA structures,
the hierarchy \eqref{eq:intro-1.5} consists of bi-Hamiltonian equations
(over the differential subalgebra of $\mc V$ generated by the entries of the coefficients of $A(\partial)$),
and the densities \eqref{eq:intro-1.6} satisfy the (generalized) Lenard-Magri recurrence relation.

\subsection{Classical affine $\mc W$-algebras}\label{sec:1.2}

In order to construct an integrable hierarchy of Hamiltonian equations 
for the pencil of affine $\mc W$-algebras $\mc W_\epsilon(\mf{gl}_N,f,S)$, $\epsilon\in\mb F$,
we shall construct an appropriate generalized quasideterminant
of the $N\times N$ matrix $A_{\epsilon S}(\partial)$ defined by \eqref{eq:intro-1.3}.

Recall the construction of the classical affine $\mc W$-algebra
$\mc W_\epsilon(\mf g,f,S)$, $\epsilon\in\mb F$, from \cite{DSKV13},
for the Lie algebra $\mf g=\mf{gl}_N$,
a nilpotent element $f\in\mf g$,
and a certain element $S\in\mf g$ specified below.
(The construction for an arbitrary reductive Lie algebra $\mf g$ is similar.)
The element $f$ can be embedded in an $\mf{sl}_2$-triple $\{f,2x,e\}$,
and we have the corresponding $\ad x$-eigenspace decomposition
\begin{equation}\label{eq:intro-1.7}
\mf g=\bigoplus_{k\in\frac{1}{2}\mb Z}\mf g_{k}
\,\,,\,\,\text{ where }\,\,
\mf g_k=\big\{a\in\mf g\,\big|\,[x,a]=ka\big\}
\,.
\end{equation}
Let $p=(p_1,\dots,p_r)$, with $p_1\geq p_2\geq\dots\geq p_r>0$,
be the partition of $N$ corresponding to $f$.
Then $d=p_1-1$ is the maximal eigenvalue of $\ad x$,
and $r_1$, the multiplicity of $p_1$ in $p$, is the dimension of $\mf g_d$.
Let $S$ be a non-zero element of $\mf g_d$.
For a subspace $\mf a\subset\mf g$,
we denote by $\mc V(\mf a)$ the algebra of differential polynomials over $\mf a$.
Denote by $\rho:\,\mc V(\mf g)\to\mc V(\mf g_{\leq\frac12})$
the differential algebra homomorphism defined by
\begin{equation}\label{eq:intro-1.8}
\rho(a)=\tr(fa)+\pi_{\leq\frac12}a
\,\,,\,\,\,\, a\in\mf g\,,
\end{equation}
where $\pi_{\leq\frac12}$ denotes the projection on $\mf g_{\leq\frac12}$
with respect to the decomposition \eqref{eq:intro-1.7}.
The \emph{classical affine} $\mc W$-\emph{algebra} $\mc W_\epsilon(\mf g,f,S)$
is the differential algebra
\begin{equation}\label{eq:intro-1.9}
\mc W=\mc W(\mf g,f)
=\big\{
w\in\mc V(\mf g_{\leq\frac12})\,\big|\,\rho\{a_\lambda w\}_{\epsilon}=0
\,\text{ for all }\, a\in\mf g_{\geq\frac12} \big\}
\,,
\end{equation}
where $\{a_\lambda w\}_{\epsilon}$ is defined by \eqref{eq:intro-1.4},
endowed with the $\lambda$-bracket 
\begin{equation}\label{eq:intro-1.10}
\{v_\lambda w\}_{\epsilon}^{\mc W}
=
\rho\{v_\lambda w\}_{\epsilon}
\,.
\end{equation}
This pencil of $\lambda$-brackets provides the differential algebra $\mc W$ with a bi-PVA structure.

In order to perform concrete computations,
we choose a convenient slice $U$ to the adjoint orbit of $f$ in $\mf g=\mf{gl}_N$
(different from the Slodowy slice),
so that 
$\mf g
=[f,\mf g]\oplus U
=\mf g^f\oplus U^\perp$.
Hence, we have the decomposition in a direct sum of subspaces
$$
\mc V(\mf g_{\leq\frac12})=\mc V(\mf g^f)\oplus\langle U^\perp\rangle
\,,
$$
where $\langle U^\perp\rangle$ is the differential algebra ideal generated by $U^\perp$.
The corresponding projection of $\mc V(\mf g_{\leq\frac12})$ on $\mc V(\mf g^f)$
induces a differential algebra homomorphism
$\pi:\,\mc W\to\mc V(\mf g^f)$,
and the key fact is that this is an isomorphism \cite{DSKV13,DSKV16} (see Theorem \ref{thm:structure-W}).
Thus, for each element $q\in\mf g^f$ we have a unique element $w(q)\in\mc W$,
and these elements are differential generators of $\mc W$.
The explicit construction of a bi-Adler type operator indicated below
allows us to construct explicitly these generators
and to compute their $\lambda$-brackets.

\subsection{Adler type operators for the $\mc W$-algebras
and associated integrable systems}\label{sec:1.3}

To construct a bi-Adler type operator for the pencil of PVAs $\mc W_\epsilon(\mf g,f,S)$, $\epsilon\in\mb F$,
we first construct
an $r_1\times r_1$ matrix pseudodifferential operator 
$L_1(\partial)$ with entries of coefficients in $\mc W$,
which is of bi-Adler type for the bi-PVA structure of the family $\mc W_\epsilon(\mf g,f,S_1)$, $\epsilon\in\mb F$,
for the matrix $S_1:=I_1J_1$, where
$I_1:\,\mf g_d\hookrightarrow\mf g$ is the inclusion map,
and $J_1:\,\mf g\twoheadrightarrow\mf g_d$ is the projection
with respect to the decomposition \eqref{eq:intro-1.7}.
It is given by the following generalized quasideterminant
\begin{equation}\label{eq:intro-1.11}
L_1(\partial)
=
|\id_N\partial+\rho(Q)|_{I_1J_1}
\,\,,\,\,\text{ where }\,\,
Q=\big(q_{ji}\big)_{i,j=1}^N
\,.
\end{equation}
In our Theorem \ref{thm:L1} we prove that this generalized quasideterminant exists,
and in our most difficult Theorem \ref{prop:L2} (and its Corollary \ref{thm:L2})
we prove that the entries of the coefficients of $L_1(\partial)$ lie in $\mc W$.
Finally, in Theorem \ref{thm:L3} we show that $L_1(\partial)$
is of bi-Adler type
with respect to the bi-PVA structure of the family $\mc W_\epsilon(\mf g,f,S_1)$, $\epsilon\in\mb F$.
The case of arbitrary non-zero $S\in\mf g_d$
is easily reduced to $S_1$, cf. Theorem \ref{thm:L1}(d), Corollary \ref{thm:L2}(b) and Theorem \ref{thm:L3}(b).

In order to compute the Adler type operator $L_1(\partial)$
in terms of a set of generators of the $\mc W$-algebra $\mc W(\mf g,f)$,
we choose in Section \ref{sec:5} a convenient slice $U$ to the adjoint orbit of $f$,
defined by \eqref{eq:basisU}.
With this choice, we define the corresponding set of generators $\{w_{ji:k}\}$ of $\mc W(\mf g,f)$,
indexed by indices $1\leq i,j\leq r$ and $0\leq k\leq\min\{p_i,p_j\}-1$.
We are then able to find an explicit general formula for $L_1(\partial)$:
\begin{equation}\label{eq:intro-L1}
L_1(\partial)
=
-\id_{r_1}(-\partial)^{p_1}+W_1(\partial)
-W_2(\partial)(-(-\partial)^q+W_4(\partial))^{-1}W_3(\partial)
\,,
\end{equation}
where $W_1,W_2,W_3,W_4$ are the four blocks,
of sizes $r_1\times r_1$, $r_1\times(r-r_1)$, $(r-r_1)\times r_1$ and $(r-r_1)\times(r-r_1)$ respectively,
of the matrix
$$
\left(\begin{array}{ll} W_1(\partial) & W_2(\partial) \\ W_3(\partial) & W_4(\partial) \end{array}\right)
=\big(\sum_kw_{ji;k}(-\partial)^k\big)_{1\leq i,j\leq r}
\,,
$$
and $(-\partial)^q$ is the diagonal $(r-r_1)\times(r-r_1)$ matrix
with diagonal entries $(-\partial)^{p_i}$, $r_1<i\leq r$.
Formula \eqref{eq:intro-L1} has a two-fold application.
When combined with the definition \eqref{eq:intro-1.11} of $L_1(\partial)$
it provides an explicit formula for all the generators $w_{ji;k}$ of $\mc W(\mf g,f)$,
as elements of the differential algebra $\mc V(\mf g_{\leq\frac12})$.
On the other hand, when combined with the Adler identity
\eqref{eq:adler-intro} (resp. the bi-Adler identity \eqref{eq:bi-adler}),
it provides explicit formulas for the $0$-th (resp. $1$-st) $\lambda$-brackets between 
all the generators $w_{ji;k}$.

In Section \ref{sec:7} we will demonstrate how this is implemented in several
examples: the case of a principal nilpotent element
(corresponding to the partition $p=N$) in Section \ref{sec:7.1},
of a rectangular nilpotent element 
(corresponding to $p=(p_1,\dots,p_1)$) in Section \ref{sec:7.2},
of a short nilpotent element (corresponding to $p=(2,\dots,2)$) in Section \ref{sec:7.3},
of a minimal nilpotent element (corresponding to $p=(2,1\dots,1)$) in Section \ref{sec:7.4},
and of a vector and matrix ``constrained'' nilpotent element 
(corresponding to $p=(p_1,1,\dots,1)$ and $p=(p_1,\dots,p_1,1,\dots,1)$ respectively)
in Section \ref{sec:7.5}.
In all these examples we will use equation \eqref{eq:intro-L1}
to find the explicit formulas for the generators $w_{ji;k}$ of the $\mc W$-algebra
and for their $0$-th and $1$-st $\lambda$-brackets.
In each case, we shall compare our results with the analogous formulas
which can be found in literature:
such as \cite{GD78,DSKV15,MR15} for the principal and rectangular nilpotents,
\cite{Che92,DSKV14a} for the minimal and short nilpotents.
For the vector and matrix ``constrained'' nilpotents the explicit formulas the $\lambda$-brackets of $\mc W(\mf g,f)$
were, in fact, not known,
and our formulas \eqref{LHS} and \eqref{eq:constraint-1st} 
constitute a new result obtained by our general method.

Our construction encompasses many well known
reductions of the (matrix) KP hierarchy
and automatically provides them with a bi-Poisson structure.
For example, if $f$ is a principal nilpotent element of $\mf{gl}_N$,
then $L_1(\partial)$ is the ``generic'' monic scalar differential operator of order $N$,
and in this case \eqref{eq:intro-1.5}
is the Gelfand-Dickey $N$-th KdV hierarchy.
If $f$ is a rectangular nilpotent,
we similarly obtain the $p_1$-th $r_1\times r_1$ matrix KdV hierarchy.
If $f$ is a vector constrained nilpotent,
we obtain a bi-Hamiltonian hierarchy
whose Dirac reduction is the $(N-p_1)$-vector $p_1$-constrained KP hierarchy
studied by many authors \cite{YO76,Ma81,KSS91,Che92,KS92,SS93,ZC94}.
If $f$ is a matrix constrained nilpotent,
we obtain a matrix generalization of the vector constrained KP hierarchy.
In fact, for every partition $p$ of $N$,
we obtain a reduction of the $r_1\times r_1$ matrix KP hierarchy,
thereby providing all classical affine $\mc W$-algebras associated to $\mf{gl}_N$ with an integrable
bi-Hamiltonian hierarchy.

Our method can be extended to the other classical Lie algebras $\mf g=\mf{so}_N$
and $\mf{sp}_N$.
Moreover, the Adler type operator approach to $\mc W$-algebras
has a natural quantization, related to the notion of Yangians.
We plan to address these questions in forthcoming publications.

\medskip

The first two authors would like to acknowledge
the hospitality of IHES, France,
where this work was completed in the summer of 2015.
The first author is supported by a national FIRB grant,
the second author is supported by an NSF grant,
and the third author is supported by an NSFC 
``Research Fund for International Young Scientists'' grant.

\section{(bi)Adler type matrix pseudodifferential operators and (bi)Hamiltonian hierarchies}\label{sec:2}

In this section we review the main notions and the main results of \cite{DSKVnew},
which will be used in the following sections.
Throughout the paper the base field $\mb F$ is a field of characteristic $0$.

\subsection{(bi)Poisson vertex algebras and (bi)Hamiltonian equations}\label{sec:2.1}

By a differential algebra we mean a commutative associative unital algebra $\mc V$
with a derivation $\partial:\,\mc V\to\mc V$.
A $\lambda$-\emph{bracket} on $\mc V$ 
is a bilinear (over $\mb F$) map $\{\cdot\,_\lambda\,\cdot\}:\,\mc V\times\mc V\to\mc V[\lambda]$ 
satisfying the following axioms ($a,b,c\in\mc V$):
\begin{enumerate}[(i)]
\item
sesquilinearity:
$\{\partial a_\lambda b\}=-\lambda\{a_\lambda b\}$,
$\{a_\lambda\partial b\}=(\lambda+\partial)\{a_\lambda b\}$;
\item
Leibniz rules:
$\{a_\lambda bc\}=\{a_\lambda b\}c+\{a_\lambda c\}b$,
$\{ab_\lambda c\}=\{a_{\lambda+\partial} c\}_\to b
+\{b_{\lambda+\partial} c\}_\to a$,
\end{enumerate}
where $\to$ means that $\partial$ is moved to the right.
We say that $\mc V$ is a \emph{Poisson vertex algebra} (PVA) if 
it is endowed with a $\lambda$-bracket $\{\cdot\,_\lambda\,\cdot\}$
satisfying ($a,b,c\in\mc V$)
\begin{enumerate}[(i)]
\setcounter{enumi}{2}
\item
skewsymmetry:
$\{b_\lambda a\}=-\{a_{-\lambda-\partial} b\}$ (with $\partial$ acting on the coefficients);
\item
Jacobi identity:
$\{a_\lambda \{b_\mu c\}\}-\{b_\mu\{a_\lambda c\}\}
=\{\{a_\lambda b\}_{\lambda+\mu}c\}$.
\end{enumerate}
If an $\mb F[\partial]$-module $R$ is endowed with a sesquilinear map 
$\{\cdot\,_\lambda\,\cdot\}:\,R\times R\to R[\lambda]$ 
satisfying the skewsymmetry and Jacobi identity axioms,
we say that $R$ is a \emph{Lie conformal algebra}.

Let $\mc V$ be a Poisson vertex algebra with $\lambda$-bracket $\{\cdot\,_\lambda\,\cdot\}$.
We have the corresponding Lie algebra on $\mc V/\partial\mc V$
with Lie bracket $\{\tint f,\tint g\}=\tint\{f_\lambda g\}\big|_{\lambda=0}$,
and a representation of the Lie algebra $\mc V/\partial\mc V$ on $\mc V$
given by the action $\{\tint f,g\}=\{f_\lambda g\}\big|_{\lambda=0}$.
Recall that the basic problem in the theory of integrability
is to construct an infinite sequence of elements $\tint h_n\in\mc V/\partial\mc V,\,n\in\mb Z_+$,
called Hamiltonian functionals, in involution, i.e. such that
$$
\{\tint h_m,\tint h_n\}=0
\,\text{ for all } m,n\in\mb Z_+
\,.
$$
In this case we obtain a hierarchy of compatible Hamiltonian equations
$$
\frac{du}{dt_n}=\{\tint h_n,u\}
\,,\,\, u\in\mc V
\,.
$$

Let $\{\cdot\,_\lambda\,\cdot\}_0$ and $\{\cdot\,_\lambda\,\cdot\}_1$
be two $\lambda$-brackets on the same differential algebra $\mc V$.
We can consider the pencil of $\lambda$-brackets
\begin{equation}\label{eq:pencil}
\{\cdot\,_\lambda\,\cdot\}_\epsilon
=
\{\cdot\,_\lambda\,\cdot\}_0
+
\epsilon\{\cdot\,_\lambda\,\cdot\}_1
\quad,\qquad \epsilon\in\mb F
\,.
\end{equation}
We say that $\mc V$ is a \emph{bi}-\emph{PVA} 
if $\{\cdot\,_\lambda\,\cdot\}_\epsilon$ is a PVA $\lambda$-bracket on $\mc V$ for every $\epsilon\in\mb F$.
Clearly, for this it suffices that $\{\cdot\,_\lambda\,\cdot\}_0$,
$\{\cdot\,_\lambda\,\cdot\}_1$ and $\{\cdot\,_\lambda\,\cdot\}_0+\{\cdot\,_\lambda\,\cdot\}_1$
are PVA $\lambda$-brackets.

Let $\mc V$ be a bi-Poisson vertex algebra with $\lambda$-brackets 
$\{\cdot\,_\lambda\,\cdot\}_0$ and $\{\cdot\,_\lambda\,\cdot\}_1$.
A \emph{bi}-\emph{Hamiltonian equation} is an evolution equation 
which can be written in Hamiltonian form with respect to both PVA $\lambda$-brackets
and two Hamiltonian functionals $\tint h_0,\tint h_1\in\mc V/\partial\mc V$:
$$
\frac{du}{dt}
=
\{\tint h_0,u\}_0
=
\{\tint h_1,u\}_1
\,,\,\, u\in\mc V
\,.
$$
The usual way to prove integrability for a bi-Hamiltonian equation
is to solve the so called Lenard-Magri
recurrence relation ($u\in\mc V$):
\begin{equation}\label{eq:LM}
\{\tint h_n,u\}_0
=
\{\tint h_{n+1},u\}_1
\,\,,\,\,\,\,
n\in\mb Z_+
\,.
\end{equation}
In this way, 
we get the corresponding hierarchy of bi-Hamiltonian equations
$$
\frac{du}{dt_n}
=
\{\tint h_n,u\}_0
=
\{\tint h_{n+1},u\}_1
\,,\,\, 
n\in\mb Z_+,\, u\in\mc V
\,.
$$

\subsection{Adler type matrix pseudodifferential operators}\label{sec:2.2}

\begin{definition}[\cite{DSKV15}]\label{def:adler}
An $M\times N$ matrix pseudodifferential operator $A(\partial)$
over a differential algebra $\mc V$
is of \emph{Adler type} with respect to a $\lambda$-bracket $\{\cdot\,_\lambda\,\cdot\}$ on $\mc V$,
if, for every $(i,j),(h,k)\in\{1,\dots,M\}\times\{1,\dots,N\}$, we have
\begin{equation}\label{eq:adler}
\begin{split}
\{A_{ij}(z)_\lambda A_{hk}(w)\}
& = A_{hj}(w+\lambda+\partial)\iota_z(z\!-\!w\!-\!\lambda\!-\!\partial)^{-1}(A_{ik})^*(\lambda-z)
\\
& - A_{hj}(z)\iota_z(z\!-\!w\!-\!\lambda\!-\!\partial)^{-1}A_{ik}(w)
\,.
\end{split}
\end{equation}
In \eqref{eq:adler} $(A_{ik})^*(\partial)$ denotes the formal adjoint of the scalar 
pseudodifferential operator $A_{ik}(\partial)$,
and $(A_{ik})^*(z)$ is its symbol,
and $\iota_z$ denotes the expansion in geometric series for large $z$.
Also, let $S\in\Mat_{M\times N}\mb F$.
We say that $A$ is of $S$-\emph{Adler type}
with respect to two $\lambda$-brackets $\{\cdot\,_\lambda\,\cdot\}_0$
and $\{\cdot\,_\lambda\,\cdot\}_1$
if, for every $\epsilon\in\mb F$,
$A(\partial)+\epsilon S$ is a matrix of Adler type with respect 
to the $\lambda$-bracket $\{\cdot\,_\lambda\,\cdot\}_\epsilon$.
In the case $M=N$,
we also say that $A$ is of \emph{bi}-\emph{Adler type}
if it is of $\id_N$-Adler type.
This is equivalent to saying that $A(\partial)$
is a matrix of Adler type with respect to the $\lambda$-bracket $\{\cdot\,_\lambda\,\cdot\}_0$,
i.e. \eqref{eq:adler} holds, and that 
\begin{equation}\label{eq:bi-adler}
\begin{split}
\{A_{ij}(z)_\lambda A_{hk}(w)\}_1
& = 
\delta_{ik}
\iota_z(z\!-\!w\!-\!\lambda)^{-1}
\big(
A_{hj}(w+\lambda)
- A_{hj}(z)
\big)
\\
& +
\delta_{hj}
\iota_z(z\!-\!w\!-\!\lambda\!-\!\partial)^{-1}
\big(
(A_{ik})^*(\lambda-z)
-A_{ik}(w)
\big)
\,.
\end{split}
\end{equation}
\end{definition}
\begin{example}\label{ex:A}
For a Lie algebra $\mf g$ with a non-degenerate symmetric invariant bilinear form $(\cdot\,|\,\cdot)$
and an element $S\in\mf g$,
we define the corresponding pencil of affine PVAs $\mc V_\epsilon(\mf g,S)$
as follows.
The underlying differential algebra is the algebra $\mc V(\mf g)$ of differential polynomials over $\mf g$.
The PVA $\lambda$-bracket,
depending on a parameter $\epsilon\in\mb F$,
is given on generators by 
\begin{equation}\label{lambda}
\{a_\lambda b\}_\epsilon=[a,b]+(a| b)\lambda+\epsilon(S|[a,b]),
\qquad a,b\in\mf g\,,
\end{equation}
and extended to $\mc V(\mf g)$ by the sesquilinearity axioms and the Leibniz rules.
As shown in \cite[Ex.3.4]{DSKVnew}
we have the following $N\times N$ matrix differential operator
of $S$-Adler type with respect to the bi-PVA structure of $\mc V_\epsilon(\mf g,S)$,
where $\mf g=\mf{gl}_N$ with the trace form $(\cdot\,|\,\cdot)$:
\begin{equation}\label{eq:A}
A(\partial)
=
\id_N\partial+Q
\,,\,\text{ where }\,\,
Q=\sum_{i,j=1}^Nq_{ji}E_{ij}
\,\in\Mat_{N\times N}\mc V(\mf g)
\,.
\end{equation}
Here and further we denote by
$E_{ij}\in\Mat_{N\times N}\mb F$ the elementary matrix with $1$ in position $(ij)$
and $0$ everywhere else,
and we denote by $q_{ij}\in\mf g=\mf{gl}_N$
the same matrix
when viewed as an element of the differential algebra $\mc V(\mf g)$.
Hence, $Q$ in \eqref{eq:A} is the $N\times N$ matrix which, in position $(ij)$,
has entry $q_{ji}\in\mc V(\mf g)$.
\end{example}
One of the main properties of Adler type operator,
which will be used later, is the following:
\begin{theorem}[{\cite[Thm.3.7(c)]{DSKVnew}}]\label{thm:inverse-adler}
Let $\mc V$ be a differential algebra with a $\lambda$-bracket $\{\cdot\,_\lambda\,\cdot\}$.
Let $A(\partial)\in\Mat_{N\times N}\mc V((\partial^{-1}))$ be a matrix 
pseudodifferential operator of Adler type with respect to the $\lambda$-bracket of $\mc V$.
If $A(\partial)$ is invertible in $\Mat_{N\times N}\mc V((\partial^{-1}))$,
then $A^{-1}(\partial)$ is of Adler type with respect to the opposite $\lambda$-bracket
$-\{\cdot\,_\lambda\,\cdot\}$.
\end{theorem}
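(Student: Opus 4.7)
\emph{Proof sketch.} My plan is to differentiate the matrix relations
\[A(\partial)\circ B(\partial)=B(\partial)\circ A(\partial)=\id_N\,,\]
where $B(\partial):=A^{-1}(\partial)$, by applying the $\lambda$-bracket, and then to substitute the Adler identity \eqref{eq:adler} for $A$ in order to isolate the $\lambda$-bracket between entries of $B$.

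First, I would translate the composition $A\circ B=\id_N$ into the symbol calculus as
\[\sum_{\alpha=1}^N A_{i\alpha}(w+\partial)\,B_{\alpha k}(w)=\delta_{ik}\,,\]
where $\partial$ in $A_{i\alpha}(w+\partial)$ acts on the coefficients of $B_{\alpha k}(w)$. Applying $\{A_{\beta j}(z)_\lambda\cdot\}$ to this relation, and using the Leibniz rule together with the Adler identity \eqref{eq:adler} for $A$, would produce a linear equation in which the unknown is $\{A_{\beta j}(z)_\lambda B_{hk}(w)\}$. The telescoping identities $\sum_\alpha B_{\beta\alpha}(\cdot)A_{\alpha\gamma}(\cdot)=\delta_{\beta\gamma}$ (with appropriately shifted arguments) would collapse the sums and yield an explicit closed form for the mixed bracket $\{A_{\beta j}(z)_\lambda B_{hk}(w)\}$ in terms of entries of $A$, entries of $B$, and the kernel $(z-w-\lambda-\partial)^{-1}$.

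Second, I would run the analogous computation starting from $B\circ A=\id_N$, applying $\{B_{ij}(z)_\lambda\cdot\}$, to reduce $\{B_{ij}(z)_\lambda B_{hk}(w)\}$ to the mixed brackets $\{A_{\beta j}(z)_\lambda B_{hk}(w)\}$ already determined in the previous step. Substituting and simplifying with the telescoping identities, I expect the resulting expression to be exactly the negative of the right-hand side of \eqref{eq:adler} with $A$ replaced by $B$. This is precisely the statement that $B(\partial)$ is of Adler type with respect to the opposite $\lambda$-bracket $-\{\cdot\,_\lambda\,\cdot\}$.

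The hard part will be the bookkeeping in the symbol calculus: one has to track very carefully how $\partial$ propagates through the compositions $A\circ B$ and $B\circ A$ and how the arguments receive shifts by $z$, $w$, and $\lambda+\partial$. In particular, the formal adjoint $(B_{ik})^*(\lambda-z)$ on the right-hand side of the target identity should emerge from applications of the second Leibniz rule
\[\{ab_\lambda c\}=\{a_{\lambda+\partial}c\}_\to b+\{b_{\lambda+\partial}c\}_\to a\,,\]
which is the mechanism that transfers a derivative from one entry of the bracket to the other. Matching signs, shifts, and the placement of the adjoint through this transfer is the delicate technical point of the argument; everything else is an (intricate but routine) substitution.
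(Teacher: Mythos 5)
Your overall strategy---apply the $\lambda$-bracket to the identities $A\circ B=B\circ A=\id_N$, insert the Adler identity for $A$, and solve the resulting linear relations---is exactly the mechanism behind the quoted result (and your first step is literally the computation carried out in Lemma \ref{lem:app1} of the Appendix, which determines $\{A_{ij}(z)_\lambda (A^{-1})_{hk}(w)\}$ by bracketing $\sum_\ell A_{h\ell}(w+\partial)B_{\ell k}(w)=\delta_{hk}$ in the second argument). However, your second step, as written, does not close. Applying $\{B_{ij}(z)_\lambda\,\cdot\,\}$ to $\sum_\alpha B_{h\alpha}(w+\partial)A_{\alpha k}(w)=\delta_{hk}$ produces mixed brackets of the form $\{B_{ij}(z)_\lambda A_{\alpha k}(w)\}$, i.e.\ with $B$ in the \emph{first} slot and $A$ in the second---not the brackets $\{A_{\beta j}(z)_\lambda B_{hk}(w)\}$ you computed in step 1. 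Since only a $\lambda$-bracket is assumed (no skewsymmetry), you cannot flip the arguments to identify the two families. Moreover, bracketing \emph{both} $w$-variable identities $A\circ B=\id$ and $B\circ A=\id$ against $B_{ij}(z)$ in the second argument only yields a homogeneous linear system in the two unknown families $\{B(z)_\lambda A(w)\}$ and $\{B(z)_\lambda B(w)\}$ (schematically $X\cdot B+A\cdot Y=0$ and $Y\cdot A+B\cdot X=0$), which is satisfied by $X=Y=0$ and hence cannot determine the bracket. The known data (Adler identity for $A$, step 1) only constrains brackets with $A$ in the first argument, so no amount of second-argument Leibniz manipulation will reach $\{B_\lambda B\}$.

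The missing step is to expand the \emph{first} argument of the bracket: take the identity in the $z$-variable, $\sum_\alpha A_{i\alpha}(z+\partial)B_{\alpha j}(z)=\delta_{ij}$, bracket it against $B_{hk}(w)$, and expand using the second Leibniz rule $\{ab_\lambda c\}=\{a_{\lambda+\partial}c\}_\to b+\{b_{\lambda+\partial}c\}_\to a$ together with sesquilinearity $\{\partial a_\lambda c\}=-\lambda\{a_\lambda c\}$; this is precisely where the shifted symbols, the argument $\lambda-z$, and the adjoints enter. The resulting relation expresses the unknown $\{B_{\alpha j}(z)_\lambda B_{hk}(w)\}$ in terms of the mixed brackets $\{A_{i\alpha}(\cdot)_\lambda B_{hk}(w)\}$ of step 1, and composing on the appropriate side with symbols of $B$ shifted by $\lambda+\partial$ isolates it and yields the Adler identity for $B$ with the opposite bracket. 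You correctly anticipate that the second Leibniz rule is the source of $(B_{ik})^*(\lambda-z)$, but in your two-step plan the first argument of every bracket is a single entry, so that rule is never actually invoked; reorganizing step 2 as above (or, equivalently, adding an intermediate computation of $\{B_{ij}(z)_\lambda A_{hk}(w)\}$ by the same first-argument expansion) is what turns your sketch into the proof of \cite[Thm.3.7(c)]{DSKVnew}.
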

The relation between operators of $S$-Adler type and Poisson vertex algebras
is described by the following result:
\begin{theorem}[{\cite[thm.6.3]{DSKVnew},\cite{DSKV15}}]\label{thm:main-bi-adler}
Let $A(\partial)\in\Mat_{M\times N}\mc V((\partial^{-1}))$ be an $M\times N$-matrix 
pseudodifferential operator of $S$-Adler type,
for some $S\in\Mat_{M\times N}\mb F$,
with respect to the $\lambda$-brackets 
$\{\cdot\,_\lambda\,\cdot\}_0$ and $\{\cdot\,_\lambda\,\cdot\}_1$ on $\mc V$.
Assume that the coefficients of the entries of the matrix $A(\partial)$
generate $\mc V$ as a differential algebra.
Then $\mc V$ is a bi-PVA with the $\lambda$-brackets $\{\cdot\,_\lambda\,\cdot\}_0$
and $\{\cdot\,_\lambda\,\cdot\}_1$.
\end{theorem}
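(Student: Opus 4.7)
The plan is a two-step reduction followed by a direct verification at the level of symbols. First, showing that $\mc V$ is a bi-PVA means showing that $\{\cdot\,_\lambda\,\cdot\}_\epsilon = \{\cdot\,_\lambda\,\cdot\}_0 + \epsilon\{\cdot\,_\lambda\,\cdot\}_1$ is a PVA $\lambda$-bracket for every $\epsilon\in\mb F$. Since $S$ has scalar entries, the differential subalgebra of $\mc V$ generated by the coefficients of the entries of $A(\partial)+\epsilon S$ coincides with that generated by the coefficients of the entries of $A(\partial)$, which is all of $\mc V$ by hypothesis. So it suffices to prove the single-bracket assertion: if $A(\partial)$ is of Adler type with respect to a $\lambda$-bracket on $\mc V$ whose underlying algebra is differentially generated by the coefficients of its entries, then that $\lambda$-bracket satisfies skewsymmetry and the Jacobi identity.

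Second, sesquilinearity and both Leibniz rules are built into the definition of a $\lambda$-bracket, so only skewsymmetry and Jacobi remain. A standard argument shows that both identities propagate from any set of differential-algebra generators of $\mc V$ to all of $\mc V$ via sesquilinearity and the Leibniz rules. Taking the generators to be the coefficients of the entries $A_{ij}(\partial)$ and packaging them into the symbols $A_{ij}(z)$, the verification reduces to formal identities in the indeterminates $z,w,u,\lambda,\mu$. Skewsymmetry $\{A_{hk}(w)_\mu A_{ij}(z)\} = -\{A_{ij}(z)_{-\mu-\partial}A_{hk}(w)\}$ is then a direct comparison: applying the Adler identity \eqref{eq:adler} to both sides and swapping the geometric expansion $\iota_z\leftrightarrow\iota_w$, one recognizes the two expressions as negatives of each other (the swap is legitimate because the terms with opposite expansions differ by a formal distribution supported at $z=w$, which vanishes against the rational kernel $(z-w-\lambda-\partial)^{-1}$ in the appropriate combination).

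The main obstacle is the Jacobi identity
$$
\{A_{ij}(z)_\lambda\{A_{hk}(w)_\mu A_{mn}(u)\}\}
- \{A_{hk}(w)_\mu\{A_{ij}(z)_\lambda A_{mn}(u)\}\}
= \{\{A_{ij}(z)_\lambda A_{hk}(w)\}_{\lambda+\mu}A_{mn}(u)\}.
$$
To prove it, one applies \eqref{eq:adler} three times---once to the outer bracket and once to the inner bracket on each of the three terms---together with the Leibniz rules, which enter because the inner Adler expression is a sum of products of $A$-entries with rational-in-$\partial$ coefficients. Each of the three terms thereby expands into a sum of monomials of the form ``product of four $A$-entries evaluated at $z,w,u$'' times a rational kernel in $\lambda,\mu,z,w,u,\partial$, and the verification reduces to showing term-by-term cancellation. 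This is a lengthy but mechanical symbolic computation, and the main subtlety is keeping track of how the operator $\partial$ inside each kernel $(\bullet - \partial)^{-1}$ propagates through the nested $\lambda$-brackets via sesquilinearity, and how the two geometric expansions $\iota_z,\iota_w$ are distributed among the resulting monomials. The cancellation was established in \cite{DSKV15} in the single-bracket case and extended in \cite[Thm.~6.3]{DSKVnew} to the $S$-Adler setting; one would reproduce that calculation here and conclude by invoking the propagation argument of the second paragraph to pass from generators to all of $\mc V$.
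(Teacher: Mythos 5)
Your proposal is correct and is essentially the paper's own treatment: the paper gives no proof of Theorem \ref{thm:main-bi-adler}, quoting it from \cite{DSKVnew} and \cite{DSKV15}, and your reduction to the single-bracket Adler statement (using that adding the constant matrix $\epsilon S$ does not change the differential subalgebra generated by the coefficients), together with the standard propagation of skewsymmetry and Jacobi from generators, is exactly how those references argue. Since you defer the symbol-level cancellations for skewsymmetry and the Jacobi identity to the same cited computations, your argument adds only correct routine scaffolding and nothing needs to be changed.
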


\subsection{Integrable hierarchy associated to a matrix pseudodifferential operator of Adler type}
\label{sec:2.3}

The following Theorem, proved in \cite[Theorems 5.1 and 6.4]{DSKVnew},
shows how (bi)Adler type operators can be used to construct
integrable (bi)Hamiltonian hierarchies.
\begin{theorem}\label{thm:hn}
Let $\mc V$ be a differential algebra with a $\lambda$-bracket $\{\cdot\,_\lambda\,\cdot\}$.
Let $A(\partial)\in\Mat_{N\times N}\mc V((\partial^{-1}))$
be a matrix pseudodifferential operator of Adler type 
with respect to the $\lambda$-bracket $\{\cdot\,_\lambda\,\cdot\}$,
and assume that $A(\partial)$ is invertible in $\Mat_{N\times N}\mc V((\partial^{-1}))$.
For $B(\partial)\in\Mat_{N\times N}\mc V((\partial^{-1}))$
a $K$-th root of $A$ (i.e. $A(\partial)=B(\partial)^K$ for $K\in\mb Z\backslash\{0\}$)
define the elements $h_{n,B}\in\mc V$, $n\in\mb Z$, by
\begin{equation}\label{eq:hn}
h_{n,B}=
\frac{-K}{|n|}
\Res_z\tr(B^n(z))
\text{ for } n\neq0
\,,\,\,
h_0=0\,.
\end{equation}
Then: 
\begin{enumerate}[(a)]
\item
All the elements $\tint h_{n,B}$ are Hamiltonian functionals in involution:
\begin{equation}\label{eq:invol}
\{\tint h_{m,B},\tint h_{n,C}\}=0
\,\text{ for all } m,n\in\mb Z,\,
B,C \text{ roots of } A
\,.
\end{equation}
\item
The corresponding compatible hierarchy of Hamiltonian equations is
\begin{equation}\label{eq:hierarchy}
\frac{dA(z)}{dt_{n,B}}
=
\{\tint h_{n,B},A(z)\}
=
[(B^n)_+,A](z)
\,,\,\,n\in\mb Z,\,
B \text{ root of } A
\end{equation}
(in the RHS we are taking the symbol of the commutator of matrix pseudodifferential operators),
and the Hamiltonian functionals $\tint h_{n,C}$, $n\in\mb Z_+$, $C$ root of $A$,
are integrals of motion of all these equations.
\item
If, moreover, $A(\partial)$ is of bi-Adler type 
with respect to two $\lambda$-brackets $\{\cdot\,_\lambda\,\cdot\}_0$ 
and $\{\cdot\,_\lambda\,\cdot\}_1$,
then the elements $h_{n,B}\in\mc V$, $n\in\mb Z_+$, given by \eqref{eq:hn}
satisfy the generalized Lenard-Magri recurrence relation:
\begin{equation}\label{eq:LM-K}
\{\tint h_{n,B},A(z)\}_0
=
\{\tint h_{n+K,B},A(z)\}_1
=
[(B^n)_+,A](z)
\,\,,\,\,\,\,
n\in\mb Z
\,.
\end{equation}
Hence, \eqref{eq:hierarchy} is a compatible hierarchy of bi-Hamiltonian equations,
and all the Hamiltonian functionals $\tint h_{n,C}$, $n\in\mb Z_+$, $C$ root of $A$,
are integrals of motion of all the equations of this hierarchy.
\end{enumerate}
\end{theorem}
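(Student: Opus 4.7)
The plan is to reduce all three assertions to a single \emph{master formula}
\begin{equation*}
\bigl\{\tint h_{n,B},\, A(w)\bigr\} \;=\; [(B^n)_+, A](w) \,,
\end{equation*}
valid for every nonzero integer $n$ and every $K$-th root $B$ of $A$. To establish this, I would take the Adler identity \eqref{eq:adler} for $A$, expand $\{A(z)^n{}_\lambda A(w)\}$ by the Leibniz rules into a telescoping sum of $n$ copies of the Adler kernel, then apply $\tr\,\Res_z$ and set $\lambda=0$. The essential point is that the kernel $\iota_z(z-w-\lambda-\partial)^{-1}$, sandwiched inside $\tr\Res_z$ and evaluated at $\lambda=0$, becomes precisely the projector $(\cdot)_+$ onto the differential part of a pseudodifferential operator; the telescoping sum then reassembles into $[(A^n)_+,A](w)$, and the normalization $-K/|n|$ in \eqref{eq:hn} is exactly what cancels the combinatorial factor $|n|/K$ produced by the computation. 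Negative powers of $A$ are handled by Theorem \ref{thm:inverse-adler}; replacing $A^n$ by $B^n$ throughout (legitimate since $B$ commutes with all its powers and $B^n$ is a well-defined pseudodifferential operator whenever $A$ is invertible) gives the master formula for arbitrary $n$ and arbitrary roots.

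Once the master formula is proved, parts (a) and (b) are immediate. Part (b) is nothing but the matrix form of the master formula, while for part (a) a second application gives
\[
\{\tint h_{m,B},\tint h_{n,C}\} \;\propto\; \tint\Res_w\tr\bigl[(B^m)_+,\,C^n\bigr](w) \,,
\]
which vanishes because the Adler trace $\Res\tr[X,Y]$ lies in $\partial\mc V$ for any two pseudodifferential operators $X,Y$.

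For part (c), I would lift the whole argument to the bi-Adler setting. The perturbed operator $\tilde A_\epsilon(\partial):=A(\partial)+\epsilon\id_N$ is, by hypothesis, of Adler type for $\{\cdot\,_\lambda\,\cdot\}_\epsilon$. Define the formal $K$-th root
\[
\tilde B_\epsilon(\partial) \;:=\; B(\partial)\bigl(\id_N+\epsilon\,A(\partial)^{-1}\bigr)^{1/K} \,,
\]
which exists as a formal series in $\epsilon$ with pseudodifferential coefficients because $A$ is invertible; note that $\tilde B_\epsilon^K=\tilde A_\epsilon$ and
$\tilde B_\epsilon^n = B^n + \tfrac{n}{K}\epsilon\, B^{n-K} + O(\epsilon^2)$.
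Applying parts (a) and (b) to the pair $(\tilde A_\epsilon, \tilde B_\epsilon)$ with respect to $\{\cdot\,_\lambda\,\cdot\}_\epsilon$ and extracting the coefficient of $\epsilon^1$ (using $[\id_N,\cdot]=0$ to kill the $\tilde A_\epsilon$-perturbation on the right, and using the $\epsilon^0$ identity to subtract off the $\{\cdot\,_\lambda\,\cdot\}_0$ contribution coming from $h_{n-K,B}$) yields
\[
\bigl\{\tint h_{n,B},\,A(w)\bigr\}_1 \;=\; [(B^{n-K})_+,A](w) \;=\; \bigl\{\tint h_{n-K,B},\,A(w)\bigr\}_0 \,,
\]
which is exactly \eqref{eq:LM-K} after the shift $n\mapsto n+K$.

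The main obstacle is the derivation of the master formula itself. The Adler identity \eqref{eq:adler} involves four formal variables $z,w,\lambda,\partial$ together with the $\iota_z$ expansion convention, and tracking the bookkeeping through the $n$-fold Leibniz telescoping and the subsequent $\Res_z\tr$ is delicate: one has to verify both that the kernel collapses to the projector $(\cdot)_+$ and that the combinatorial prefactors conspire to reproduce the normalization in \eqref{eq:hn}. A secondary technicality, relevant only for (c), is the well-definedness of $\tilde B_\epsilon$ as a pseudodifferential operator over $\mc V[[\epsilon]]$, which is guaranteed by the invertibility assumption on $A$.
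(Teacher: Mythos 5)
The paper itself contains no proof of this theorem: it is imported verbatim from \cite{DSKVnew} (Theorems 5.1 and 6.4), so your proposal can only be measured against the argument given there. Your overall architecture does match it: a master Lax-type formula $\{\tint h_{n,B},A(w)\}=[(B^n)_+,A](w)$, from which (a) follows by a second application together with the fact that $\tint\Res\tr$ of a commutator of matrix pseudodifferential operators vanishes, and (c) by perturbing $A\mapsto A+\epsilon\id_N$ with the deformed root $\tilde B_\epsilon=B(\id_N+\epsilon A^{-1})^{1/K}$ and matching powers of $\epsilon$ over $\mc V[[\epsilon]]$. The observation that, under $\Res_z\tr$ at $\lambda=0$, the kernel $\iota_z(z-w-\lambda-\partial)^{-1}$ produces the projection $(\cdot)_+$ is also the correct mechanism, and the $\epsilon$-trick in (c) is a clean way to obtain the Lenard--Magri relation (one should only check the constant matching coming from the $|n|$-normalization in \eqref{eq:hn}, which works when $n$ and $n-K$ have the same sign and needs a separate remark otherwise).

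The genuine gap is the derivation of the master formula for an actual $K$-th root with $K\neq\pm1$, which is precisely the nontrivial content of the cited proof. Your Leibniz/telescoping expansion applies to $\{(A^n)_{ij}(z)\,_\lambda\,A_{hk}(w)\}$ because the entries of $A^n$ are differential-polynomial expressions in the entries of $A$, each factor of which has its $\lambda$-bracket with $A_{hk}(w)$ prescribed by the Adler identity \eqref{eq:adler}. For a root $B$ this reasoning is unavailable: the coefficients of $B$ are elements of $\mc V$ related to $A$ only through the equation $B^K=A$, there is no Adler identity for $B$, and the justification you offer (``legitimate since $B$ commutes with all its powers'') is beside the point, since commutativity says nothing about what $\{B_{ij}(z)\,_\lambda\,A_{hk}(w)\}$ is. What is missing is the fractional-power (Gelfand--Dickey type) lemma that carries the weight in \cite{DSKVnew}: under $\tint\Res_z\tr$, the bracket of $\tr B^n(z)$ with anything can be traded for brackets of the entries of $A$ weighted by the symbol of $B^{n-K}$, using $\sum_{a=0}^{K-1}B^a(\delta B)B^{K-1-a}=\delta A$ together with cyclicity of the trace and integration by parts; the same lemma (applied to the root $C$) is also what you use implicitly when you convert $\{\tint h_{m,B},\tint h_{n,C}\}$ into $\tint\Res\tr[(B^m)_+,C^n]$ in part (a). Without it, your argument establishes the theorem only for $B=A^{\pm1}$, i.e.\ $K=\pm1$.
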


\subsection{Generalized quasideterminants}\label{sec:2.4}

Following \cite{DSKVnew} introduce the following generalization of quasideterminants, cf. \cite{GGRW05}.
Let $A\in\Mat_{N\times N}R$,
where $R$ is a unital associative algebra,
and let $I\in\Mat_{N\times M}R$, $J\in\Mat_{M\times N}R$,
for some $M\leq N$.
\begin{definition}\label{def:gen-quasidet}
The $(I,J)$-\emph{quasideterminant} of $A$ is
\begin{equation}\label{eq:gen-quasidet}
|A|_{IJ}
=
(JA^{-1}I)^{-1}\,\in\Mat_{M\times M}R
\,,
\end{equation}
assuming that the RHS makes sense, i.e. that $A$ is invertible in $\Mat_{N\times N}R$
and that $JA^{-1}I$ is invertible in $\Mat_{M\times M}R$.
\end{definition}

A special case is when $I$ and $J$ are the following matrices:
\begin{equation}\label{eq:ENM}
I_{NM}=
\left(\begin{array}{l}
\id_{M\times M} \\ 0_{(N-M)\times M}
\end{array}\right)\in\Mat_{N\times M}\mb F
\,,
\end{equation}
and
\begin{equation}\label{eq:EMN}
J_{MN}=
\left(\begin{array}{ll}
\id_{M\times M} & 0_{M\times(N-M)}
\end{array}\right)\in\Mat_{M\times N}\mb F
\,.
\end{equation}
In this case the corresponding quasideterminant
has the following explicit formula (\cite[Prop.4.2]{DSKVnew})
\begin{equation}\label{eq:spec-quasidet}
|A|_{I_{NM}J_{MN}}
=
a-bd^{-1}c\,,
\,
\end{equation}
where $A$ has the block form $A=\left(\begin{array}{ll} a&b \\ c&d \end{array}\right)$,
where $a,\,b,\,c$ and $d$ are matrices of sizes $M\times M$, $M\times(N-M)$, $(N-M)\times M$, and $(N-M)\times(N-M)$
respectively.

Let $I=I_1I_2$ and $J=J_2J_1$, where
$I_1\in\Mat_{N\times M_1}R$, $J_1\in\Mat_{M_1\times N}R$,
$I_2\in\Mat_{M_1\times M_2}R$ and $J_2\in\Mat_{M_2\times M_1}R$.
The following hereditary property of generalized quasideterminants is an obvious consequence of
the definition \eqref{eq:gen-quasidet}:
\begin{equation}\label{eq:hereditary}
|A|_{IJ}=||A|_{I_1J_1}|_{I_2J_2}
\,,
\end{equation}
provided that all generalized quasideterminants involved exist.

The following result, based on Theorem \ref{thm:inverse-adler},
says that identity \eqref{eq:adler}
is preserved under taking generalized quasideterminants.
\begin{theorem}[{\cite[Prop.4.6]{DSKVnew}}]\label{thm:quasidet-adler}
Let $\mc V$ be a differential algebra with a $\lambda$-bracket $\{\cdot\,_\lambda\,\cdot\}$.
Let $A(\partial)\in\Mat_{N\times N}\mc V((\partial^{-1}))$ be a matrix 
pseudodifferential operator of Adler type with respect to the $\lambda$-bracket of $\mc V$.
Then, for every $I\in\Mat_{N\times M}\mb F$
and $J\in\Mat_{M\times N}\mb F$ with $M\leq N$,
the generalized quasideterminant $|A(\partial)|_{IJ}$,
provided that it exists,
is an $M\times M$ matrix pseudodifferential operator of Adler type.
\end{theorem}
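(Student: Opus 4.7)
The idea is to reduce to Theorem \ref{thm:inverse-adler} (inverses of Adler type operators are Adler type for the opposite bracket) via a three-step sandwich: invert, conjugate by constant matrices, invert again. Writing $|A(\partial)|_{IJ}=(JA^{-1}(\partial)I)^{-1}$, the ``invert'' steps are supplied by Theorem \ref{thm:inverse-adler}, so the only real content is showing that multiplication on the left by a constant matrix $J\in\Mat_{M\times N}\mb F$ and on the right by a constant matrix $I\in\Mat_{N\times M}\mb F$ preserves the Adler identity \eqref{eq:adler}.

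First, since $A(\partial)$ is invertible in $\Mat_{N\times N}\mc V((\partial^{-1}))$ (this is part of the hypothesis that $|A|_{IJ}$ exists), Theorem \ref{thm:inverse-adler} yields that $A^{-1}(\partial)$ is of Adler type with respect to the opposite $\lambda$-bracket $-\{\cdot\,_\lambda\,\cdot\}$. Next, set $D(\partial):=JA^{-1}(\partial)I$, an $M\times M$ matrix pseudodifferential operator. Finally, the standing hypothesis is that $D(\partial)$ is invertible, so if we prove that $D(\partial)$ is of Adler type for $-\{\cdot\,_\lambda\,\cdot\}$, then a second application of Theorem \ref{thm:inverse-adler} gives that $D^{-1}(\partial)=|A|_{IJ}(\partial)$ is of Adler type for $-(-\{\cdot\,_\lambda\,\cdot\})=\{\cdot\,_\lambda\,\cdot\}$, which is what we want.

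So the core step is the lemma: if $C(\partial)\in\Mat_{N\times N}\mc V((\partial^{-1}))$ is of Adler type with respect to some $\lambda$-bracket, then for any constant matrices $J\in\Mat_{M\times N}\mb F$ and $I\in\Mat_{N\times M}\mb F$, the matrix $D(\partial)=JC(\partial)I$ is of Adler type with respect to the same $\lambda$-bracket. The plan is a direct calculation: writing $D_{ab}(\partial)=\sum_{i,j}J_{ai}C_{ij}(\partial)I_{jb}$ and expanding $\{D_{ab}(z)_\lambda D_{cd}(w)\}$ by bilinearity, one substitutes the Adler identity \eqref{eq:adler} for $C$. The constancy of $I,J$ lets them pass through $\partial$ and through the formal adjoint; one then recognizes
\[
\sum_{k,j}J_{ck}C_{kj}(w+\lambda+\partial)I_{jb}=D_{cb}(w+\lambda+\partial),
\qquad
\sum_{i,l}J_{ai}(C_{il})^*(\lambda-z)I_{ld}=(D_{ad})^*(\lambda-z),
\]
together with the analogous identifications for $D_{cb}(z)$ and $D_{ad}(w)$. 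Reassembling the two terms gives precisely \eqref{eq:adler} for $D$.

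The main (and only genuine) obstacle is this combinatorial bookkeeping in the core step: one must make sure that the cross-indices $i,j,k,l$ pair up correctly on the right-hand side of \eqref{eq:adler} so that the constant matrices $J$ and $I$ group into products yielding the entries of $D$ and its formal adjoint. Once that is checked, chaining the two applications of Theorem \ref{thm:inverse-adler} with the lemma completes the proof.
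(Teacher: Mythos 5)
Your argument is correct and is essentially the paper's own route: the result is proved (in the cited reference, and mirrored here in the proof of Theorem \ref{thm:L3}, cf.\ \eqref{eq:L3-pr3}--\eqref{eq:L3-pr4}) by exactly this sandwich — Theorem \ref{thm:inverse-adler} to pass to $A^{-1}$ with the opposite bracket, the observation that left/right multiplication by the constant matrices $J,I$ preserves the Adler identity since they commute with $\partial$ and with formal adjoints, and Theorem \ref{thm:inverse-adler} again to invert $JA^{-1}I$. Your index bookkeeping in the core lemma ($D_{cb}$ and $(D_{ad})^*$ appearing on the right-hand side) is the correct and complete verification.
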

A simple but important result,
which we will use in Section \ref{sec:4}, is the following generalization of \cite[Thm.4.5]{DSKVnew}:
\begin{theorem}\label{thm:main-quasidet}
Let $A\in\Mat_{N\times N}R$, 
$I\in\Mat_{N\times M}R$, $J\in\Mat_{M\times N}R$,
and $S_0\in\Mat_{M\times M}R$,for some $M\leq N$.
Let also $S=IS_0J\in\Mat_{N\times N}R$.
Assume that the $(I,J)$-quasideterminant $|A|_{IJ}$ exists
and that the matrix $A+S\in\Mat_{N\times N}R$ is invertible.
Then, the $(I,J)$-quasideterminant of $A+S$ exists, and it is given by
\begin{equation}\label{eq:main-quasidet}
|A+S|_{IJ}=|A|_{IJ}+S_0
\,.
\end{equation}
\end{theorem}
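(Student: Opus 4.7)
The plan is to relate $(A+S)^{-1}$ to $A^{-1}$ by a standard Woodbury-style manipulation, then sandwich by $J$ on the left and $I$ on the right to turn the $N \times N$ identity into an $M \times M$ identity involving the sought-after quantities $|A|_{IJ}^{-1} = JA^{-1}I$ and $|A+S|_{IJ}^{-1} = J(A+S)^{-1}I$.

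Concretely, set $B = A+S = A + IS_0J$ and let $X = JA^{-1}I$, $Y = JB^{-1}I$. The hypothesis that $|A|_{IJ}$ exists means precisely that $X$ is invertible, and the hypothesis that $B$ is invertible lets us talk about $Y$. From the trivial identity $B \cdot A^{-1} = \id + IS_0 J A^{-1}$, multiplying by $B^{-1}$ on the left gives
\begin{equation*}
A^{-1} = B^{-1} + B^{-1} I S_0 J A^{-1}.
\end{equation*}
Sandwiching between $J$ and $I$ yields $X = Y + Y S_0 X$, i.e. $X = Y(\id_M + S_0 X)$. Multiplying on the right by $X^{-1}$ gives
\begin{equation*}
Y (X^{-1} + S_0) = \id_M.
\end{equation*}
Running the same argument with the roles of left and right swapped (starting from $A^{-1} \cdot B = \id + A^{-1}IS_0 J$) produces the companion identity $(X^{-1} + S_0) \, Y = \id_M$.

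These two one-sided identities together show that $Y$ is two-sided invertible in $\Mat_{M\times M}R$, with inverse $X^{-1} + S_0$. This gives both conclusions of the theorem at once: the $(I,J)$-quasideterminant of $A+S$ exists, and $|A+S|_{IJ} = Y^{-1} = X^{-1} + S_0 = |A|_{IJ} + S_0$. There is no serious obstacle here: the only subtle point is that $R$ is noncommutative, so invertibility of $X^{-1}+S_0$ cannot be asserted a priori and must be extracted simultaneously from both sides of the Woodbury-type expansion, which is exactly what the symmetric computation above achieves.
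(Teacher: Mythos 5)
Your proof is correct and is essentially the paper's own argument: both establish that $|A|_{IJ}+S_0$ is simultaneously a right and a left inverse of $J(A+S)^{-1}I$ via the two symmetric multiplications (your Woodbury-type identities $A^{-1}=B^{-1}+B^{-1}IS_0JA^{-1}$ and its mirror are exactly what the paper obtains by multiplying $A+S=A+S$ by $J(A+S)^{-1}$, $A^{-1}I|A|_{IJ}$ and by $|A|_{IJ}JA^{-1}$, $(A+S)^{-1}I$ respectively). The only implicit point, which is fine, is that existence of $|A|_{IJ}$ already includes invertibility of $A$, so writing $A^{-1}$ is legitimate.
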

\begin{proof}
It is the same as the proof of \cite[Thm.4.5]{DSKVnew}:
multiplying the identity $A+S=A+S$ by $J(A+S)^{-1}$ on the left and by $A^{-1}I|A|_{IJ}$ on the right,
we get that $S_0+|A|_{IJ}$ is a right inverse of $J(A+S)^{-1}I$,
and multiplying the same identity $A+S=A+S$ by $|A|_{IJ}JA^{-1}$ on the left and by $(A+S)^{-1}I$ on the right,
we get that $S_0+|A|_{IJ}$ is a left inverse of $J(A+S)^{-1}I$.
\end{proof}

\subsection{A new scheme of integrability of bi-Hamiltonian PDE}\label{sec:2.5}

In \cite[Sec.6.3]{DSKVnew} we propose the following scheme of integrability,
based on Theorems \ref{thm:hn} and \ref{thm:main-quasidet}.
Let $\mc V$ be a differential algebra with compatible PVA $\lambda$-brackets 
$\{\cdot\,_\lambda\,\cdot\}_0$ and $\{\cdot\,_\lambda\,\cdot\}_1$.
Let $S\in\Mat_{N\times N}\mb F$ 
and let $A(\partial)\in\Mat_{N\times N}\mc V((\partial^{-1}))$
be an operator of  $S$-Adler type with respect to the
$\lambda$-brackets $\{\cdot\,_\lambda\,\cdot\}_0$ and $\{\cdot\,_\lambda\,\cdot\}_1$.
Assume (without loss of generality) that the differential algebra $\mc V$
is generated by the coefficients of $A(\partial)$.
Then, we obtain an integrable hierarchy of bi-Hamiltonian equations
as follows:
\begin{enumerate}[1.]
\item
consider the canonical factorization $S=IJ$,
where $J:\,\mb F^N\twoheadrightarrow\im(S)$
and $I:\,\im(S)\hookrightarrow\mb F^N$;
\item
assume that the $(I,J)$-quasideterminant $|A|_{IJ}(\partial)$ exists;
then, by Theorem \ref{thm:main-quasidet} and Proposition \ref{thm:quasidet-adler}
$|A|_{IJ}$ is an $M\times M$ matrix pseudodifferential operator
(where $M=\dim\im(S)$) of bi-Adler type
with respect to the same $\lambda$-brackets 
$\{\cdot\,_\lambda\,\cdot\}_0$ and $\{\cdot\,_\lambda\,\cdot\}_1$;
\item
consider the family of local functionals
$\{\tint h_{n,B}\,|\,n\in\mb Z,\,B \text{ a } K\text{-th root of } |A|_{IJ}\}$
given by \eqref{eq:hn};
then, by Theorem \ref{thm:hn}
they are all Hamiltonian functionals in involution with respect to both PVA $\lambda$-brackets
$\{\cdot\,_\lambda\,\cdot\}_0$ and $\{\cdot\,_\lambda\,\cdot\}_1$,
and they satisfy the Lenard-Magri recurrence relation \eqref{eq:LM-K};
\item
we thus get an integrable hierarchy of bi-Hamiltonian equations
\begin{equation}\label{eq:bi-hierarchy}
\frac{du}{dt_{n,B}}
=
\{\tint h_{n,B},u\}_0
=
\{\tint h_{n+K,B},u\}_1
\,,
\end{equation}
provided that the $\tint h_{n,B}$ span an infinite dimensional space.
\end{enumerate}

In the present paper we implement the above scheme
to construct integrable hierarchies associated to the classical affine $\mc W$-algebras
$\mc W(\mf g,f)$ for the Lie algebra $\mf g=\mf{gl}_N$ 
and an arbitrary nilpotent element $f\in\mf g$.

\begin{remark}\label{rem:chg-basis}
The canonical factorization $S=IJ$, with $I\in\Mat_{N\times M}\mb F$ and $J\in\Mat_{M\times N}\mb F$,
is unique up to a choice of basis of $\im S$.
Changing basis leads to a conjugation of the generalized quasideterminant $|A|_{IJ}$
by the change of basis matrix.
Hence, the functionals $\tint h_{n,B}$, $n\in\mb Z_+$, defined by \eqref{eq:hn},
are independent of the choice of basis.
\end{remark}

\subsection{Generic matrices and their properties}
\label{sec:2.6}

Let $\mc V$ be a differential algebra.
Assume that $\mc V$ is an integral domain, and let $\mc K$ be its field of fractions,
which is automatically a differential field.
\begin{definition}\label{def:generic}
A matrix $Q\in\Mat_{N\times\tilde N}\mc V$ is called \emph{generic}
if its entries are differentially independent,
i.e. there is no non-zero differential polynomial over 
the base field $\mb F$
satisfied by the entries of the matrix $Q$.
\end{definition}
\begin{lemma}\label{lem:generic1}
If $Q\in\Mat_{N\times\tilde N}\mc V$ is a generic matrix
then every submatrix obtained by considering some rows and columns of $Q$ is generic.
\end{lemma}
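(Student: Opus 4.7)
The plan is to prove this by a direct contrapositive (equivalently, contradiction) argument: a differential-polynomial relation among a proper subset of variables is already a differential-polynomial relation among the full set of variables. So any would-be relation on a submatrix immediately violates the genericness of $Q$ itself.

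More concretely, suppose some submatrix $Q'$ of $Q$, indexed by rows $R\subset\{1,\dots,N\}$ and columns $C\subset\{1,\dots,\tilde N\}$, fails to be generic. Then by Definition \ref{def:generic} there exists a non-zero differential polynomial $P\in\mb F\{x_{ij}\mid i\in R, j\in C\}$ (the differential polynomial algebra over $\mb F$ in the indeterminates $x_{ij}$) such that substituting $x_{ij}\mapsto Q_{ij}$ for $(i,j)\in R\times C$ yields $0$ in $\mc V$. Now view this same $P$ as an element of the larger differential polynomial algebra $\mb F\{x_{ij}\mid 1\le i\le N,\, 1\le j\le\tilde N\}$, simply by not using the additional variables. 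It is still non-zero as a differential polynomial, and it still vanishes after substituting all entries of $Q$, since only the entries indexed by $R\times C$ actually appear. This produces a non-trivial differential polynomial relation among the entries of $Q$, contradicting the assumption that $Q$ is generic.

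The only thing one might worry about is a subtlety in the definition: one must be sure that the inclusion of differential polynomial algebras $\mb F\{x_{ij}\mid (i,j)\in R\times C\}\hookrightarrow\mb F\{x_{ij}\mid (i,j)\in\{1,\dots,N\}\times\{1,\dots,\tilde N\}\}$ sends non-zero elements to non-zero elements. This is immediate, since both are free commutative differential algebras over $\mb F$ on the respective sets of generators, and the natural map induced by inclusion of generating sets is injective. There is no genuine obstacle here; the whole content of the lemma is the trivial-but-useful observation that differential independence is preserved under passing to subfamilies. The argument is entirely formal and does not interact with the rest of the paper's Adler/quasideterminant machinery.
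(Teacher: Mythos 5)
Your argument is correct and is precisely the standard observation the paper dismisses with the one-word proof ``Obvious'': a non-trivial differential polynomial relation among the entries of a submatrix is already a non-trivial relation among the entries of $Q$, contradicting genericness. Nothing further is needed.
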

\begin{proof}
Obvious.
\end{proof}
\begin{lemma}\label{lem:generic2}
Let $P_1\in\Mat_{N\times N}\mb F$ and $P_2\in\Mat_{\tilde N\times\tilde N}\mb F$
be invertible matrices with entries in the field of constants $\mb F$.
A matrix $Q\in\Mat_{N\times\tilde N}\mc V$ is generic
if and only if the matrix $P_1QP_2\in\Mat_{N\times\tilde N}\mc V$ is generic.
\end{lemma}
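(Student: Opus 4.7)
The plan is to recast genericity as injectivity of a differential algebra homomorphism from a free differential polynomial algebra, and then observe that passing from $Q$ to $P_1QP_2$ corresponds to a differential automorphism of that free algebra.

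More concretely, let $\mc R=\mb F\{x_{ij}\mid 1\le i\le N,\,1\le j\le \tilde N\}$ denote the algebra of differential polynomials in $N\tilde N$ variables. For any matrix $Q\in\Mat_{N\times\tilde N}\mc V$ with entries $q_{ij}$, let $\phi_Q:\mc R\to\mc V$ be the (unique) differential algebra homomorphism sending $x_{ij}\mapsto q_{ij}$. By Definition \ref{def:generic}, $Q$ is generic if and only if $\phi_Q$ is injective.

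The key step is to define a differential algebra endomorphism $\sigma=\sigma_{P_1,P_2}$ of $\mc R$ by declaring
\[
\sigma(x_{ab})=\sum_{i,j}(P_1)_{ai}\,x_{ij}\,(P_2)_{jb},
\]
and extending by the Leibniz rule and commutation with $\partial$. Since $P_1$ and $P_2$ have entries in the field of constants $\mb F$, this is a well-defined differential algebra homomorphism (the scalars $(P_1)_{ai},(P_2)_{jb}$ commute with $\partial$). The same construction applied to $P_1^{-1}$ and $P_2^{-1}$ gives a two-sided inverse $\sigma_{P_1^{-1},P_2^{-1}}$, so $\sigma$ is in fact a differential algebra automorphism of $\mc R$.

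Finally, one checks directly from the definitions that $\phi_{P_1QP_2}=\phi_Q\circ\sigma$. Since $\sigma$ is an automorphism, $\phi_{P_1QP_2}$ is injective if and only if $\phi_Q$ is; equivalently, $P_1QP_2$ is generic if and only if $Q$ is. There is no serious obstacle here; the only subtlety worth emphasizing is the use of $\mb F$ being the field of constants, which is exactly what lets $\sigma$ commute with $\partial$ and hence be a \emph{differential} automorphism.
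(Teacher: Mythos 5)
Your proof is correct and is essentially the paper's argument made formal: the paper's "invertible change of variables" on the entries is exactly your differential automorphism $\sigma_{P_1,P_2}$ of the free differential polynomial algebra, with genericity restated as injectivity of $\phi_Q$. Nothing is missing, and the emphasis on $P_1,P_2$ having constant entries (so that $\sigma$ commutes with $\partial$) is the right point to flag.
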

\begin{proof}
The change of variable mapping the entries $q_{ij}$ of the matrix $Q$
to the entries $\tilde q_{ij}$ of the matrix $P_1QP_2$ is invertible.
Hence, if there exists a non-zero differential polynomial over $\mb F$ 
which is zero when evaluated on the $q_{ij}$'s,
after the change of variable we get a non trivial differential polynomial 
which is zero when evaluated on the $\tilde q_{ij}$'s,
and conversely.
\end{proof}
\begin{lemma}\label{lem:generic3}
Let $Q\in\Mat_{N\times\tilde N}\mc V$ be a generic matrix.
Let $J\in\Mat_{M\times N}\mb F$ be a constant matrix of rank $M\, (\leq N)$,
and let $I\in\Mat_{\tilde N\times\tilde M}\mb F$ be a constant matrix of rank $\tilde M\,(\leq\tilde N)$.
Then the matrix $JQI\in\Mat_{M\times\tilde M}\mc V$ is generic.
\end{lemma}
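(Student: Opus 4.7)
My plan is to reduce this to the two preceding lemmas by completing $J$ and $I$ to invertible square matrices.

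First, I would use the rank hypothesis on $J$ to choose an invertible matrix $P_1\in\Mat_{N\times N}\mb F$ whose first $M$ rows coincide with $J$ (this is possible since the $M$ rows of $J$ are linearly independent, so they can be extended to a basis of $\mb F^N$). Dually, I would extend $I$ to an invertible matrix $P_2\in\Mat_{\tilde N\times\tilde N}\mb F$ whose first $\tilde M$ columns coincide with $I$, by extending the linearly independent columns of $I$ to a basis.

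Next, I would observe that with these choices the product $P_1 Q P_2$ has upper-left $M\times\tilde M$ block precisely equal to $JQI$. Indeed, multiplying on the left by $P_1$ acts on rows so that the first $M$ rows of $P_1 Q$ are $JQ$; multiplying on the right by $P_2$ acts on columns so that the first $\tilde M$ columns of $P_1 Q P_2$ are $(P_1Q)I$, whose first $M$ entries in each column give $JQI$.

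Finally, by Lemma \ref{lem:generic2} the matrix $P_1 Q P_2$ is generic (since $P_1$ and $P_2$ are invertible over $\mb F$), and by Lemma \ref{lem:generic1} any submatrix of a generic matrix is generic; applying this to the upper-left $M\times\tilde M$ block yields that $JQI$ is generic. The proof is essentially bookkeeping; there is no substantive obstacle beyond choosing the extensions $P_1,P_2$ correctly.
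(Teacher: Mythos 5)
Your proof is correct and follows essentially the same route as the paper: both arguments exploit the rank hypotheses to place $JQI$ (up to multiplication by invertible constant matrices) as the upper-left $M\times\tilde M$ block of a matrix of the form $P\,Q\,P'$ with $P,P'$ invertible over $\mb F$, and then invoke Lemmas \ref{lem:generic1} and \ref{lem:generic2}. Your variant of completing $J$ and $I$ to invertible square matrices is slightly more direct, since it makes $JQI$ literally a submatrix of $P_1QP_2$ and so avoids the paper's final extra application of Lemma \ref{lem:generic2}, but the underlying idea is the same.
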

\begin{proof}
By elementary transformations, there exist invertible constant matrices
$P_1\in\Mat_{M\times M}\mb F$, $P_2\in\Mat_{N\times N}\mb F$, 
$P_3\in\Mat_{\tilde N\times\tilde N}\mb F$ and $P_4\in\Mat_{\tilde M\times\tilde M}\mb F$
such that
$$
P_1JP_2=
\left(\begin{array}{ll} \id_{M\times M} & 0_{M\times(N-M)} \end{array}\right)
\,\,,\,\,\,\,
P_3IP_4=
\left(\begin{array}{l} \id_{\tilde M\times\tilde M} \\ 0_{(\tilde N-\tilde M)\times\tilde M} 
\end{array}\right)
\,.
$$
Hence, the matrix
$$
P_1JQIP_4=
\left(\begin{array}{ll} \id_{M\times M} & 0_{M\times(N-M)} \end{array}\right)
P_2^{-1}QP_3^{-1}
\left(\begin{array}{l} \id_{\tilde M\times\tilde M} \\ 0_{(\tilde N-\tilde M)\times\tilde M} 
\end{array}\right)
$$
coincides with the upper left $M\times\tilde M$ block of the matrix $P_2^{-1}QP_3^{-1}$,
which is generic by Lemmas \ref{lem:generic1} and \ref{lem:generic2}.
Therefore, by Lemma \ref{lem:generic2} it follows that the matrix $JQI$
is generic as well.
\end{proof}
\begin{lemma}\label{lem:generic4}
If $Q\in\Mat_{N\times N}\mc K$ is a generic matrix
then it is invertible.
\end{lemma}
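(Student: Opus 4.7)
The plan is to reduce invertibility to non-vanishing of the determinant and then invoke differential independence. Since $\mc K$ is a field, a square matrix $Q\in\Mat_{N\times N}\mc K$ is invertible if and only if $\det Q\neq 0$, so it suffices to show that $\det Q$ is non-zero in $\mc K$.

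First, I would observe that $\det Q$ is an ordinary polynomial (in fact the standard Leibniz expansion) in the entries $q_{ij}$ of $Q$, with coefficients $\pm 1\in\mb F$. In particular, $\det Q$ is a \emph{differential} polynomial over $\mb F$ evaluated on the entries $q_{ij}$ (one that happens to involve no derivatives). Next, the determinant polynomial $D(X_{ij}):=\sum_{\sigma\in S_N}\sign(\sigma)\prod_i X_{i,\sigma(i)}$ in $N^2$ commuting indeterminates is manifestly non-zero as an element of $\mb F[X_{ij}]$, since for instance the monomial $\prod_i X_{ii}$ appears with coefficient $1$.

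Therefore $D$ is a non-zero differential polynomial over $\mb F$ in $N^2$ variables. By the definition of genericity (Definition \ref{def:generic}), no non-zero differential polynomial over $\mb F$ vanishes when evaluated at the entries of $Q$. Hence $\det Q = D(q_{ij})\neq 0$ in $\mc V$, and so it is non-zero in $\mc K$ as well. Consequently $Q^{-1}=(\det Q)^{-1}\,\mathrm{adj}(Q)\in\Mat_{N\times N}\mc K$ exists, proving invertibility.

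There is essentially no obstacle here; the only subtle point is making explicit that differential independence is a strictly stronger condition than algebraic independence (polynomials are the zero-derivative case of differential polynomials), which is what allows the passage from the non-triviality of the Leibniz expansion to $\det Q\neq 0$.
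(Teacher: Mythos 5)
Your proposal is correct and follows exactly the paper's argument: the determinant is a non-zero (differential) polynomial over $\mb F$ in the entries of $Q$, so genericity forces $\det Q\neq 0$, and invertibility over the field $\mc K$ follows. You have merely spelled out the one-line proof in the paper in more detail, including the observation that algebraic independence is a special case of differential independence.
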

\begin{proof}
The determinant of $Q$, being a non-zero polynomial in the entries of $Q$,
cannot be zero.
\end{proof}
\begin{lemma}\label{lem:generic5}
Let $A(\partial)\in\Mat_{N\times N}\mc K((\partial^{-1}))$
be a matrix pseudodifferential operator with the block form
$$
A(\partial)
=
\left(\begin{array}{ll} 
A_{11}(\partial) & A_{12}(\partial) \\
A_{21}(\partial) & A_{22}(\partial) 
\end{array}\right)
\,,
$$
where the submatrices
$A_{11}(\partial)\in\Mat_{r\times r}\mc K((\partial^{-1}))$,
$A_{12}(\partial)\in\Mat_{r\times(N-r)}\mc K((\partial^{-1}))$,
$A_{21}(\partial)\in\Mat_{(N-r)\times r}\mc K((\partial^{-1}))$,
$A_{22}(\partial)\in\Mat_{(N-r)\times(N-r)}\mc K((\partial^{-1}))$,
are pseudodifferential operators of orders 
(= maximal orders of their entries) 
$n_{11}$, $n_{12}$, $n_{21}$ and $n_{22}\,\in\mb Z$ respectively,
such that $n_{11}+n_{22}>n_{12}+n_{21}$.
Assume moreover that the square matrices $A_{11}(\partial)$ and $A_{22}(\partial)$ 
have invertible leading coefficients.
Then the matrix $A(\partial)$ is invertible.
\end{lemma}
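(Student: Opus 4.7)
The plan is to invert $A(\partial)$ by a block LU factorization, using the Schur complement with respect to the $(1,1)$-block. First, I will recall (or briefly verify) the standard fact that a square matrix pseudodifferential operator over $\mc K$ with invertible leading coefficient matrix is invertible in $\Mat_{r\times r}\mc K((\partial^{-1}))$, and that its inverse has order opposite to the original. Applying this to $A_{11}(\partial)$, I obtain an inverse $A_{11}(\partial)^{-1}\in\Mat_{r\times r}\mc K((\partial^{-1}))$ of order $-n_{11}$, whose leading coefficient is the inverse of the leading coefficient of $A_{11}(\partial)$.

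Next, I will form the Schur complement
\[
S(\partial) := A_{22}(\partial) - A_{21}(\partial) A_{11}(\partial)^{-1} A_{12}(\partial) \in \Mat_{(N-r)\times(N-r)}\mc K((\partial^{-1})).
\]
The key order estimate is that the product $A_{21}(\partial) A_{11}(\partial)^{-1} A_{12}(\partial)$ has order at most $n_{21}-n_{11}+n_{12}$, which by the hypothesis $n_{11}+n_{22}>n_{12}+n_{21}$ is strictly less than $n_{22}$. Therefore the leading term of $S(\partial)$ comes entirely from $A_{22}(\partial)$, so $S(\partial)$ has order $n_{22}$ with the same invertible leading coefficient as $A_{22}(\partial)$. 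By the same standard fact invoked for $A_{11}$, the Schur complement $S(\partial)$ is invertible.

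Finally, I will verify by direct block multiplication the identity
\[
A(\partial) =
\begin{pmatrix} \id_r & 0 \\ A_{21}(\partial)A_{11}(\partial)^{-1} & \id_{N-r} \end{pmatrix}
\begin{pmatrix} A_{11}(\partial) & 0 \\ 0 & S(\partial) \end{pmatrix}
\begin{pmatrix} \id_r & A_{11}(\partial)^{-1}A_{12}(\partial) \\ 0 & \id_{N-r} \end{pmatrix}.
\]
The two triangular factors are manifestly invertible (unipotent block-triangular matrices are inverted by changing the sign of their off-diagonal block), and the middle factor is invertible since both diagonal blocks are. Hence $A(\partial)$ is invertible in $\Mat_{N\times N}\mc K((\partial^{-1}))$.

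The principal point of the argument is the order estimate on the Schur complement: this is precisely where the hypothesis $n_{11}+n_{22}>n_{12}+n_{21}$ enters, and it is what allows $S(\partial)$ to inherit an invertible leading coefficient from $A_{22}(\partial)$. Everything else is a routine manipulation of matrix pseudodifferential operators over the differential field $\mc K$, and no genericity of $Q$ from the earlier lemmas is needed here.
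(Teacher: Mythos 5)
Your proof is correct and follows essentially the same route as the paper: both rest on the fact that a square matrix pseudodifferential operator over $\mc K$ with invertible leading coefficient is invertible, together with the order estimate showing that, since $n_{21}-n_{11}+n_{12}<n_{22}$, the Schur complement inherits the order $n_{22}$ and the invertible leading coefficient of $A_{22}(\partial)$. The only cosmetic difference is that you conclude via the block LU factorization, which needs just the one Schur complement $A_{22}(\partial)-A_{21}(\partial)A_{11}(\partial)^{-1}A_{12}(\partial)$, whereas the paper exhibits the block inverse explicitly and therefore also checks the other Schur complement $A_{11}(\partial)-A_{12}(\partial)A_{22}(\partial)^{-1}A_{21}(\partial)$ by the same estimate.
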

\begin{proof}
Under our assumptions, the matrices  $A_{11}(\partial)$ and $A_{22}(\partial)$,
having invertible leading coefficients, are invertible,
and their inverses have order $-n_{11}$ and $-n_{22}$ respectively.
Moreover, under the conditions on the orders,
the matrices
$$
A_{11}(\partial)-A_{12}(\partial)A_{22}(\partial)^{-1}A_{21}(\partial)
\,\,\text{ and }\,\,
A_{22}(\partial)-A_{21}(\partial)A_{11}(\partial)^{-1}A_{12}(\partial)
$$
have order $n_{11}$ and $n_{22}$ respectively, 
and they have the same (invertible) leading coefficients
of $A_{11}(\partial)$ and $A_{22}(\partial)$ respectively.
Hence, they are invertible as well.
But then the inverse matrix of $A(\partial)$
exists since it has the block form
$$
A(\partial)^{-1}
=
\left(\begin{array}{ll} 
B_{11}(\partial) & B_{12}(\partial) \\
B_{21}(\partial) & B_{22}(\partial) 
\end{array}\right)
\,,
$$
where
\begin{equation*}
\begin{split}
& B_{11}(\partial)
=
(A_{11}(\partial)-A_{12}(\partial)A_{22}(\partial)^{-1}A_{21}(\partial))^{-1}
\,,\\
& B_{12}(\partial)
=
-A_{11}(\partial)^{-1}A_{12}(\partial)
(A_{22}(\partial)-A_{21}(\partial)A_{11}(\partial)^{-1}A_{12}(\partial))^{-1}
\,,\\
& B_{21}(\partial)
=
-A_{22}(\partial)^{-1}A_{21}(\partial)
(A_{11}(\partial)-A_{12}(\partial)A_{22}(\partial)^{-1}A_{21}(\partial))^{-1}
\,,\\
& B_{22}(\partial)
=
(A_{22}(\partial)-A_{21}(\partial)A_{11}(\partial)^{-1}A_{12}(\partial))^{-1}
\,.
\end{split}
\end{equation*}
\end{proof}
\begin{proposition}\label{prop:generic}
Let $A(\partial)=A_n\partial^n+A_{n-1}\partial^{n-1}+\dots\in\Mat_{N\times N}\mc K((\partial^{-1}))$
be a matrix pseudodifferential operator
such that $A_n\in\Mat_{N\times N}\mb F$ 
and the matrix $A_{n-1}\in\Mat_{N\times N}\mc K$ is generic.
Then $A(\partial)$ is invertible in $\Mat_{N\times N}\mc K((\partial^{-1}))$.
\end{proposition}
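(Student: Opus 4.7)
The plan is to reduce to the hypotheses of Lemma \ref{lem:generic5} by left and right multiplication by constant invertible matrices. Since $A_n\in\Mat_{N\times N}\mb F$ has some rank $r$, elementary row and column operations yield invertible $P,Q\in\Mat_{N\times N}\mb F$ with
\[
PA_nQ=\left(\begin{array}{cc}\id_{r\times r} & 0 \\ 0 & 0\end{array}\right).
\]
Since $P,Q$ have constant entries they commute with $\partial$, so $\tilde A(\partial):=PA(\partial)Q=(PA_nQ)\partial^n+(PA_{n-1}Q)\partial^{n-1}+\cdots$, and $\tilde A(\partial)$ is invertible iff $A(\partial)$ is. By Lemma \ref{lem:generic2} the new subleading coefficient $PA_{n-1}Q$ remains generic, so I may assume from the start that $A_n$ already has this standard block form.

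If $r=N$ then $A_n=\id_N$ is invertible and there is nothing to prove; if $r=0$ then $A_n=0$, so the effective leading coefficient is $A_{n-1}$, which is a generic square matrix, hence invertible by Lemma \ref{lem:generic4}, and again $A(\partial)$ is invertible. In the main case $0<r<N$, I write $A(\partial)$ in $2\times 2$ block form conformal with $A_n$. The upper-left block $A_{11}(\partial)$ has order $n_{11}=n$ with constant invertible leading coefficient $\id_r$. The $\partial^n$-contributions to the other three blocks vanish, so $A_{12}(\partial)$ and $A_{21}(\partial)$ have orders $n_{12},n_{21}\leq n-1$, while the $\partial^{n-1}$-coefficient of $A_{22}(\partial)$ is the bottom-right $(N-r)\times(N-r)$ submatrix of $A_{n-1}$, which is generic by Lemma \ref{lem:generic1} and hence invertible by Lemma \ref{lem:generic4}. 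Thus $A_{22}(\partial)$ has order exactly $n_{22}=n-1$ with invertible leading coefficient, and
\[
n_{11}+n_{22}=2n-1>2(n-1)\geq n_{12}+n_{21}.
\]
Lemma \ref{lem:generic5} now yields invertibility of $A(\partial)$.

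The only point that deserves attention is tracking the genericity of $A_{n-1}$: first under the constant equivalence $A_{n-1}\mapsto PA_{n-1}Q$, handled by Lemma \ref{lem:generic2}, and then under passage to the bottom-right $(N-r)\times(N-r)$ block, handled by Lemma \ref{lem:generic1}. Both steps are immediate, so no substantive obstacle remains.
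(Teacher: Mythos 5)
Your proof is correct and follows essentially the same route as the paper's: reduce $A_n$ to the standard block form by constant invertible matrices, observe that the bottom-right block of $A_{n-1}$ stays generic (Lemmas \ref{lem:generic1}, \ref{lem:generic2}) hence invertible (Lemma \ref{lem:generic4}), and conclude by Lemma \ref{lem:generic5}. The only difference is that you treat the degenerate cases $r=0$ and $r=N$ explicitly, which the paper leaves implicit; otherwise the arguments coincide.
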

\begin{proof}
After multiplying $A(\partial)$ on the left and on the right
by invertible matrices in $\Mat_{N\times N}\mb F$,
we can assume that $A_n$ has block form
$$
A_n
=
\left(\begin{array}{ll} 
\id_r & 0 \\
0 & 0 
\end{array}\right)
\,,
$$
where $r$ is the rank of $A_n$.
In this case, the matrix $A(\partial)$ has the block form
$$
A(\partial)
=
\left(\begin{array}{ll} 
\id_r\partial^n+A_{n-1;11}\partial^{n-1}+\dots & A_{n-1;12}\partial^{n-1}+\dots \\
A_{n-1;21}\partial^{n-1}+\dots & A_{n-1;22}\partial^{n-1}+\dots 
\end{array}\right)
\,.
$$
The matrix $A_{n-1;22}$ is generic by Lemma \ref{lem:generic1},
and therefore it is invertible by Lemma \ref{lem:generic4}.
Hence, the above block form of the matrix $A(\partial)$ satisfies 
all the assumptions of Lemma \ref{lem:generic5}.
\end{proof}
\begin{example}\label{ex:generic}
Proposition \ref{prop:generic} is false if we replace the assumption that $A_{n-1}$ 
is generic by the assumption that $A_{n-1}$ is invertible. 
For example, the matrix
$\left(\begin{array}{rrr}
0 & a' & -a \\
0 & 0 & 1 \\
1 & a & 0
\end{array}\right)$,
$a\in\mc K$,
is non-degenerate provided that $a'\neq0$,
but the matrix
$\left(\begin{array}{rrr}
1 & 0 & 0 \\
0 & 1 & 0 \\
0 & 0 & 0
\end{array}\right)
\partial
+\left(\begin{array}{rrr}
0 & a' & -a \\
0 & 0 & 1 \\
1 & a & 0
\end{array}\right)$
is degenerate for every $a$.
\end{example}

\section{Classical affine \texorpdfstring{$\mc W$}{W}-algebras and associated bi-Poisson structures}\label{sec:3}

\subsection{Definition of the classical affine $\mc W$-algebra $\mc W_\epsilon(\mf g,f,S)$}\label{sec:3.1}

We review here the construction of the classical affine $\mc W$-algebra
following \cite{DSKV13}.
Let $\mf g$ be a reductive Lie algebra with a non-degenerate symmetric 
invariant bilinear form $(\cdot\,|\,\cdot)$,
and let $\{f,2x,e\}\subset\mf g$ be an $\mf{sl}_2$-triple in $\mf g$.
We have the corresponding $\ad x$-eigenspace decomposition
\begin{equation}\label{eq:grading}
\mf g=\bigoplus_{k\in\frac{1}{2}\mb Z}\mf g_{k}
\,\,\text{ where }\,\,
\mf g_k=\big\{a\in\mf g\,\big|\,[x,a]=ka\big\}
\,,
\end{equation}
so that $f\in\mf g_{-1}$, $x\in\mf g_{0}$ and $e\in\mf g_{1}$.
We let $d$ be the \emph{depth} of the grading, i.e. the maximal eigenvalue of $\ad x$.
For a subspace $\mf a\subset\mf g$ we denote by $\mc V(\mf a)$
the algebra of differential polynomials over $\mf a$,
i.e. $\mc V(\mf a)=S(\mb F[\partial]\mf a)$.

Consider the pencil of affine PVAs $\mc V_\epsilon(\mf g,S)$ defined in Example \ref{ex:A}.
We shall assume that $S$ lies in $\mf g_d$.
In this case the $\mb F[\partial]$-submodule
$\mb F[\partial]\mf g_{\geq\frac12}\subset\mc V(\mf g)$ 
is a Lie conformal subalgebra of $\mc V_\epsilon(\mf g,S)$
with the $\lambda$-bracket $\{a_\lambda b\}_\epsilon=[a,b]$, $a,b\in\mf g_{\geq\frac12}$
(it is independent of $\epsilon$, since $S$ commutes with $\mf g_{\geq\frac12}$).
Consider the differential subalgebra
$\mc V(\mf g_{\leq\frac12})$ of $\mc V(\mf g)$,
and denote by $\rho:\,\mc V(\mf g)\twoheadrightarrow\mc V(\mf g_{\leq\frac12})$,
the differential algebra homomorphism defined on generators by
\begin{equation}\label{rho}
\rho(a)=\pi_{\leq\frac12}(a)+(f| a),
\qquad a\in\mf g\,,
\end{equation}
where $\pi_{\leq\frac12}:\,\mf g\to\mf g_{\leq\frac12}$ denotes 
the projection with kernel $\mf g_{\geq1}$.
We have a representation of the Lie conformal algebra $\mb F[\partial]\mf g_{\geq\frac12}$ 
on the differential subalgebra $\mc V(\mf g_{\leq\frac12})\subset\mc V(\mf g)$,
with the action of $a\in\mf g_{\geq\frac12}$ on $g\in\mc V(\mf g_{\leq\frac12})$
given by $\rho\{a_\lambda g\}_\epsilon$
(note that the RHS is independent of $\epsilon$ since, by assumption, $S\in\mf g_d$).

The \emph{classical} $\mc W$-\emph{algebra} $\mc W_\epsilon(\mf g,f,S)$ is, by definition,
the differential algebra
\begin{equation}\label{20120511:eq2}
\mc W=\mc W(\mf g,f)
=\big\{w\in\mc V(\mf g_{\leq\frac12})\,\big|\,\rho\{a_\lambda w\}_\epsilon=0\,
\text{ for all }a\in\mf g_{\geq\frac12}\}\,,
\end{equation}
endowed with the following pencil of PVA $\lambda$-brackets
\begin{equation}\label{20120511:eq3}
\{v_\lambda w\}^{\mc W}_{\epsilon}=\rho\{v_\lambda w\}_\epsilon,
\qquad v,w\in\mc W\,.
\end{equation}
With a slight abuse of notation,
we shall denote by $\mc W(\mf g,f)$ also the $\mc W$-algebra $\mc W_\epsilon(\mf g,f,S)$
for $\epsilon=0$ (or, equivalently, $S=0$).
\begin{theorem}[{\cite[Lem.3.1, Lem.3.2, Cor.3.3]{DSKV13}}]\phantomsection\label{daniele1}
\begin{enumerate}[(a)]
\item
$\mc W\subset\mc V(\mf g_{\leq\frac12})$ is a differential subalgebra
and, for every $v,w\in\mc W$, we have $\rho\{v_\lambda w\}_\epsilon\in\mc W[\lambda]$.
Hence, the $\lambda$-bracket 
$\{\cdot\,_\lambda\,\cdot\}^{\mc W}_{\epsilon}:\,\mc W\otimes\mc W\to\mc W[\lambda]$,
given by \eqref{20120511:eq3}, defines a pencil of PVA structures on $\mc W$.
\item
For $g,h\in\mc V(\mf g)$ such that $\rho(g),\rho(h)\in\mc W$, we have 
$\{\rho(g)_\lambda\rho(h)\}^{\mc W}_\epsilon=\rho\{g_\lambda h\}_\epsilon$.
\end{enumerate}
\end{theorem}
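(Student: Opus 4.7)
The plan is to first establish a technical lemma that is the workhorse for both parts: for any $r\in\ker(\rho)$ and any $s$ lying in $\mf g_{\geq\frac12}$, in $\mc W$, or in $\ker(\rho)$, one has $\rho\{r_\lambda s\}_\epsilon=0$ (and by skewsymmetry of $\mc V_\epsilon(\mf g,S)$, the analogous statement with arguments swapped). Here $\ker(\rho)$ is the differential ideal of $\mc V(\mf g)$ generated by $\{a-(f|a):a\in\mf g_{\geq 1}\}$, since $\rho$ restricts to the identity on $\mc V(\mf g_{\leq\frac12})$.

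The lemma is proved by induction on the differential-ideal presentation of elements of $\ker(\rho)$. The inductive step uses only the Leibniz rule, sesquilinearity, and the fact that $\rho$ is a differential algebra homomorphism: e.g.\ for $r=pr_0$ with $\rho(r_0)=0$, the identity $\{pr_0{}_\lambda s\}=\{p_{\lambda+\partial}s\}_\to r_0+\{r_0{}_{\lambda+\partial}s\}_\to p$ forces the first summand to vanish under $\rho$ (since $\rho(r_0)=0$) and the second to vanish by the inductive hypothesis. The base cases amount to computing $\rho\{(a-(f|a))_\lambda b\}_\epsilon=\rho([a,b])+(a|b)\lambda+\epsilon(S|[a,b])$ for $a\in\mf g_{\geq 1}$ and $b$ a generator in the relevant class. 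When $b\in\mc W$ this is zero by the very definition \eqref{20120511:eq2}. When $b\in\mf g_{\geq\frac12}$ or $b=b'-(f|b')$ with $b'\in\mf g_{\geq 1}$, the grading \eqref{eq:grading} forces $[a,b]\in\mf g_{\geq\frac32}$ (killed by both $\pi_{\leq\frac12}$ and pairing with $f\in\mf g_{-1}$), and the scalar terms $(a|b)$ and $(S|[a,b])$ vanish since $(\cdot|\cdot)$ pairs $\mf g_k$ with $\mf g_{-k}$ and $S\in\mf g_d$ with $d>0$.

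Part (b) then follows by writing $g=\rho(g)+r_g$, $h=\rho(h)+r_h$ with $r_g,r_h\in\ker(\rho)$: expanding $\{g_\lambda h\}_\epsilon$ bilinearly gives four summands, three of which involve $\ker(\rho)$ paired with $\mc W$ or $\ker(\rho)$ and hence vanish under $\rho$ by the lemma, leaving precisely $\rho\{\rho(g)_\lambda\rho(h)\}_\epsilon=\{\rho(g)_\lambda\rho(h)\}^{\mc W}_\epsilon$.

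For part (a), closure of $\mc W$ under multiplication and $\partial$ is immediate from Leibniz and sesquilinearity, using that $\rho$ is a differential homomorphism. To see $\rho\{v_\lambda w\}_\epsilon\in\mc W[\lambda]$ for $v,w\in\mc W$, we need $\rho\{a_\mu\rho\{v_\lambda w\}_\epsilon\}_\epsilon=0$ for every $a\in\mf g_{\geq\frac12}$; since $\{v_\lambda w\}_\epsilon-\rho\{v_\lambda w\}_\epsilon\in\ker(\rho)[\lambda]$, the lemma lets us replace the inner $\rho\{v_\lambda w\}_\epsilon$ by $\{v_\lambda w\}_\epsilon$, after which the Jacobi identity in $\mc V_\epsilon(\mf g,S)$ expresses the result as a sum of two terms each containing an inner $\{a_\mu v\}_\epsilon$ or $\{a_\mu w\}_\epsilon$; these lie in $\ker(\rho)[\mu]$ because $v,w\in\mc W$, so the lemma kills both. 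Finally, skewsymmetry and Jacobi for $\{\cdot\,_\lambda\,\cdot\}^{\mc W}_\epsilon$ transfer directly from $\mc V_\epsilon(\mf g,S)$ by applying $\rho$ to the corresponding identities and invoking part (b). The main obstacle is the lemma; once it is in hand, everything else is bookkeeping.
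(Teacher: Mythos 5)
Your proposal is correct, and it actually supplies an argument where the present paper gives none: Theorem \ref{daniele1} is quoted from \cite{DSKV13} (Lem.~3.1, 3.2, Cor.~3.3), and the proof there runs along essentially the same lines as yours — everything is reduced, via the Leibniz rules and sesquilinearity and the fact that $\rho$ is a differential algebra homomorphism commuting with $\partial$, to the vanishing of $\rho\{\cdot\,_\lambda\,\cdot\}_\epsilon$ when one argument lies in $\ker\rho$ and the other in $\mf g_{\geq\frac12}$, $\mc W$ or $\ker\rho$, with the base cases settled by the grading (your degree counts are right: $[a,b]\in\mf g_{\geq\frac32}$ is killed by $\pi_{\leq\frac12}$ and by pairing with $f\in\mf g_{-1}$, and the terms $(a|b)$, $(S|[a,b])$ vanish since $S\in\mf g_d$ with $d\geq1$). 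The deduction of (b) from the four-term expansion, and of the closure statement in (a) via Jacobi identity plus the kernel lemma, as well as the transfer of skewsymmetry and Jacobi through $\rho$ using (b), are all sound.

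One point you should make explicit: as written, your induction runs only over the first argument, with the base case taking $b$ to be a \emph{generator} of the relevant class; but part (b) uses the lemma for $\{{r_g}_\lambda r_h\}_\epsilon$ with \emph{both} arguments arbitrary in $\ker\rho$, so the base case of your induction on $r$ must allow arbitrary $s\in\ker\rho$, not just generators. This is not a real obstacle: for $s=q\,\partial^n\big(b-(f|b)\big)$ the left Leibniz rule gives $\rho\{r_\lambda s\}_\epsilon=\rho\{r_\lambda q\}_\epsilon\,\rho\big(\partial^n(b-(f|b))\big)+\rho\{r_\lambda \partial^n(b-(f|b))\}_\epsilon\,\rho(q)$, and the first summand dies because $\rho$ kills $\partial^n(b-(f|b))$ (irrespective of $r$), while the second dies by your generator--generator computation together with sesquilinearity in the second slot. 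With that one sentence added, the proof is complete.
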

\begin{remark}\label{rem:generalizations}
The definition of the $\mc W$-algebra can be generalized
to the case of an arbitrary good grading $\mf g=\oplus_j\mf g_j$ 
such that $f\in\mf g_{-1}$ (not necessarily the Dynkin grading) \cite{EK05},
and to arbitrary isotropic subspace $\ell\subset\mf g_{\frac12}$
(not necessarily $\ell=0$, as above) cf. \cite{DSKV13}.
In fact, it can be proved that the ``second'' Poisson structure $\{\cdot\,_\lambda\,\cdot\}_0$
is independent of the choice of good grading and isotropic subspace \cite{BG07}.
On the other hand, the ``first'' Poisson structure $\{\cdot\,_\lambda\,\cdot\}_1$
may vary with these choices, and so the corresponding bi-Hamiltonian integrable hierarchies
as described in Section \ref{sec:4} may be different.
In this paper, for simplicity, we stick to the traditional choice
of Dynkin grading and isotropic subspace $\ell=0$.
However, it should be interesting to investigate how the choices of good grading and
isotropic subspaces affect the corresponding bi-Hamiltonian hierarchies.
\end{remark}

\subsection{Structure Theorem for classical affine $\mc W$-algebras}
\label{sec:3.2}

Fix a subspace $U\subset\mf g$ complementary to $[f,\mf g]$,
which is compatible with the grading \eqref{eq:grading}.
For example, we could take $U=\mf g^e$, the Slodowy slice, as we did in \cite{DSKV13} and \cite{DSKV16},
however, in Section \ref{sec:5.1} we will make a different, more convenient, choice for $U$.
Since $\ad f:\,\mf g_{j}\to\mf g_{j-1}$ is surjective for $j\leq\frac12$, 
we have $\mf g_{\leq-\frac12}\subset[f,\mf g]$.
In particular, we have the direct sum decomposition
\begin{equation}\label{eq:U}
\mf g_{\geq-\frac12}=[f,\mf g_{\geq\frac12}]\oplus U\,.
\end{equation}
Note that, by the non-degeneracy of $(\cdot\,|\,\cdot)$, the orthocomplement to $[f,\mf g]$
is $\mf g^f$, the centralizer of $f$ in $\mf g$.
Hence, the direct sum decomposition dual to \eqref{eq:U}
has the form
\begin{equation}\label{eq:Uperp}
\mf g_{\leq\frac12}=U^\perp\oplus\mf g^f\,.
\end{equation}
As a consequence of \eqref{eq:Uperp}
we have the decomposition in a direct sum of subspaces
\begin{equation}\label{eq:decomp}
\mc V(\mf g_{\leq\frac12})=\mc V(\mf g^f)\oplus\langle U^\perp\rangle\,,
\end{equation}
where $\langle U^\perp\rangle$
is the differential algebra ideal of $\mc V(\mf g_{\leq\frac12})$ generated by $U^\perp$.
Let $\pi_{\mf g^f}:\,\mc V(\mf g_{\leq\frac12})\twoheadrightarrow\mc V(\mf g^f)$
be the canonical quotient map, with kernel $\langle U^\perp\rangle$.

As an immediate consequence of \cite[Thm.3.14(c)]{DSKV13} and \cite[Cor.4.1]{DSKV16},
we get the following:
\begin{theorem}\label{thm:structure-W}
The map $\pi_{\mf g^f}$ restricts to a differential algebra isomorphism
$$
\pi:=\pi_{\mf g^f}|_{\mc W}:\,\mc W\,\stackrel{\sim}{\longrightarrow}\,\mc V(\mf g^f)
\,,
$$
hence we have the inverse differential algebra isomorphism
$$
w=:\,\mc V(\mf g^f)\,\stackrel{\sim}{\longrightarrow}\,\mc W
\,,
$$
which associates to every element $q\in\mf g^f$ the (unique) element $w(q)\in\mc W$
of the form $w(q)=q+r$, with $r\in\langle U^\perp\rangle$.
\end{theorem}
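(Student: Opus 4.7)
The proof has two parts: injectivity and surjectivity of $\pi$. Once bijectivity is established, $\pi$ is automatically a differential algebra isomorphism since it is the restriction of the differential algebra homomorphism $\pi_{\mf g^f}$, and its inverse then produces the lift $q\mapsto w(q)$. Uniqueness of this lift is equivalent to injectivity, i.e. $\mc W\cap\langle U^\perp\rangle = \{0\}$: any two lifts differ by an element of this intersection.

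My plan is to introduce the conformal weight grading on $\mc V(\mf g_{\leq\frac12})$, assigning $\Delta(a) = 1-k$ to homogeneous $a\in\mf g_k\cap\mf g_{\leq\frac12}$, weight $1$ to $\partial$, and extending additively to monomials. Then $\mc V(\mf g_{\leq\frac12})$ decomposes as a graded differential algebra, and both $\mc V(\mf g^f)$ and $\langle U^\perp\rangle$ are graded subspaces. For a homogeneous $a\in\mf g_{\geq\frac12}$, the invariance operator $T_a := \rho\{a_\lambda\cdot\}_\epsilon|_{\lambda=0}$ has a well-defined top-degree piece: on a monomial $b_1\cdots b_n$ this is essentially the Leibniz sum $\sum_i b_1\cdots\pi_{\leq\frac12}[a,b_i]\cdots b_n$, the other contributions from \eqref{lambda} (namely $(a|b)\lambda$ and the scalar $\epsilon(S|[a,b])$) being subleading in conformal weight.

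For surjectivity I would construct $w(q)$ inductively: start with $w^{(0)} = q$ and, at step $n$, add a correction $r_n\in\langle U^\perp\rangle$ of strictly greater conformal weight so that $T_a(w^{(n)})$ vanishes up to successively higher weights, for every homogeneous $a\in\mf g_{\geq\frac12}$. The inductive step, after pairing against test elements of $\mf g_{\geq\frac12}$, reduces to a linear equation of the form $[f,u]=(\text{known})$ on $U^\perp$-valued quantities; solvability follows from the decomposition \eqref{eq:U}, because $[f,\mf g_{\geq\frac12}]$ is precisely the complement of $U$ in $\mf g_{\geq -\frac12}$ and the obstruction is forced to live in this image. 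The induction terminates since at each fixed conformal weight only finitely many monomials appear. For injectivity, suppose $0\neq w\in\mc W\cap\langle U^\perp\rangle$ and let $w_{\min}$ be its lowest-weight nonzero homogeneous component. Extracting the leading part of the invariance $T_a(w)=0$ for all $a\in\mf g_{\geq\frac12}$ forces the $\mf g$-factors of each monomial of $w_{\min}$, in the symmetric algebra on $\mb F[\partial]\mf g_{\leq\frac12}$, to be jointly annihilated by $\pi_{\leq\frac12}\circ\ad a$ for all such $a$. By the dual decomposition \eqref{eq:Uperp}, these factors must lie in $\mf g^f$, contradicting $w\in\langle U^\perp\rangle$.

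The main obstacle is making the top-degree analysis rigorous: one must carefully track how the Leibniz rule $\{a_\lambda bc\}_\epsilon = \{a_\lambda b\}_\epsilon c + \{a_\lambda c\}_\epsilon b$ and sesquilinearity propagate through products and derivatives, and control the $\epsilon$-dependent scalar $\epsilon(S|[a,b])\in\mb F$, which has conformal weight $0$ and could a priori compete with the leading-order $\ad f$-picture. The hypothesis $S\in\mf g_d$ concentrates this scalar to very specific bidegrees of $(a,b)$, so that it contributes at most in a controlled range of weights, and the leading operator governing both the inductive step in surjectivity and the rigidity in injectivity remains the $\pi_{\leq\frac12}\circ\ad a$ action. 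With this under control, everything reduces to the linear algebra of \eqref{eq:U} and \eqref{eq:Uperp}, and the theorem follows.
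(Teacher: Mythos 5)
There is no in-paper proof to compare against here: the paper deduces Theorem \ref{thm:structure-W} by citing \cite{DSKV13} (Thm.~3.14(c)) and \cite{DSKV16} (Cor.~4.1), together with Remark \ref{rem:general-U}. Your plan (graded induction for existence of $w(q)=q+r$, leading-term rigidity for uniqueness, with the linear algebra coming from \eqref{eq:U}--\eqref{eq:Uperp}) is the right general shape, but as written it has two genuine gaps. First, you replace the defining condition $\rho\{a_\lambda w\}_\epsilon=0$, which is an identity of polynomials in $\lambda$, by the operator $T_a=\rho\{a_\lambda\cdot\}_\epsilon|_{\lambda=0}$. The common kernel of the $T_a$ is strictly larger than $\mc W$, and for it your injectivity claim is simply false: already for $\mf{gl}_2$ with the principal nilpotent one has $U^\perp=\mb F\,q_{(12),(12)}$ and $\rho\{q_{(11),(12)}\,{}_\lambda\, \partial q_{(12),(12)}\}_\epsilon=\lambda$, so $\partial q_{(12),(12)}$ is a nonzero element of $\bigcap_a\Ker T_a\cap\langle U^\perp\rangle$; likewise an induction that only kills $T_a$ does not land in $\mc W$ (e.g.\ it cannot detect the $\tfrac12\partial h$ correction in the Virasoro generator). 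You must carry the full $\lambda$-polynomial through the whole argument.

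Second, your leading-symbol identification is wrong. For homogeneous $a\in\mf g_j$ ($j\geq\tfrac12$) and $b\in\mf g_k$ one has $\rho\{a_\lambda b\}_\epsilon=\pi_{\leq\frac12}[a,b]+(f|[a,b])+(a|b)\lambda$ (the term $\epsilon(S|[a,b])$ vanishes identically because $S\in\mf g_d$, so there is nothing to control there); with $\Delta(b)=1-k$ and $\Delta(\lambda)=1$ all three pieces lower conformal weight by exactly $j$, so the conformal-weight grading does not make the Leibniz piece $\pi_{\leq\frac12}\circ\ad a$ dominant, and you have omitted the affine piece $(f|[a,\cdot])$ altogether. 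That affine piece is in fact the engine of the proof: the step in your injectivity argument asserting that factors annihilated by $\pi_{\leq\frac12}\circ\ad a$ for all $a\in\mf g_{\geq\frac12}$ must lie in $\mf g^f$ is false ($q_{(12),(12)}$ in $\mf{gl}_2$ is killed by $\pi_{\leq\frac12}\circ\ad e$ but $[f,q_{(12),(12)}]\neq0$). What \eqref{eq:U} and \eqref{eq:Uperp} actually give you is the nondegenerate pairing $\mf g_{\geq\frac12}\times U^\perp\to\mb F$, $(a,u)\mapsto(f|[a,u])=([f,a]|u)$, and both the solvability of your recursion and the rigidity statement $\mc W\cap\langle U^\perp\rangle=0$ must be run on a filtration (e.g.\ by polynomial degree in the generators, refined by conformal weight) in which this linear-in-fields term is the leading symbol. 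With these two repairs your outline essentially becomes the argument of \cite{DSKV13}, \cite{DSKV16}; without them, both halves of the proof fail.
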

\begin{remark}\label{rem:general-U}
In \cite[Cor.4.1]{DSKV16}
the analogue of Theorem \ref{thm:structure-W}
is stated with the choice $U=\mf g^e$.
However the proof there works verbatim for every choice of a subspace $U\subset\mf g$
complementary to $[f,\mf g]$ and compatible with the grading \eqref{eq:grading}.
\end{remark}

\subsection{$\mc W$-algebras as limit of Dirac reductions}
\label{sec:3.3}

Let us briefly recall the definition of the Dirac reduction of a PVA,
following \cite{DSKV14b}.
Let $\mc V$ be a Poisson vertex algebra with PVA $\lambda$-bracket $\{\cdot\,_\lambda\,\cdot\}$.
Assume that $\mc V$ is a domain with differential field of fractions $\mc K$.
Let $\langle \theta_\alpha\rangle_{\alpha=1}^\ell\subset\mc V$
be the differential algebra ideal generated 
by elements $\theta_1,\dots,\theta_\ell\in\mc V$.
Assume that the $\ell\times\ell$ matrix differential operator
$C(\partial)=\big(C_{\alpha\beta}(\partial)\big)_{\alpha,\beta=1}^\ell$
with symbol
\begin{equation}\label{eq:dirac3}
C_{\beta\alpha}(\lambda)=\{{\theta_\alpha}_\lambda{\theta_\beta}\}
\end{equation}
is non-degenerate, i.e. it is invertible in the ring
$\Mat_{\ell\times\ell}\mc K((\partial^{-1}))$.
Then, the Dirac reduction of the PVA $\mc V$ by the constraints $\{\theta_\alpha\}_{\alpha=1}^\ell$
is the quotient differential algebra $\mc V^D=\mc V/\langle \theta_\alpha\rangle_{\alpha=1}^\ell$,
endowed with the following Dirac reduced (non-local) PVA $\lambda$-bracket,
\begin{equation}\label{eq:dirac1}
\{\pi(a)_\lambda \pi(b)\}^D
=
\pi\Big(
\{a_\lambda b\}
- \sum_{\alpha,\beta=1}^\ell
\{{\theta_\beta}_{\lambda+\partial}b\}_\to (C^{-1})_{\beta\alpha}(\lambda+\partial)\{a_\lambda\theta_\alpha\}
\Big)
\,,
\end{equation}
where $\pi:\,\mc V\to\mc V^D$ is the canonical quotient map.
For a definition of non-local $\lambda$-brackets and non-local Poisson vertex algebra, see \cite{DSK13}.

In this section we show that the $\mc W$-algebra $\mc W_\epsilon(\mf g,f,S)$
can be obtained as a limit of Dirac reductions of the affine vertex algebra 
$\mc V_\epsilon(\mf g,S)$.
Let $\{u_\alpha\}_{\alpha=1}^\ell$ be a basis of $\mf g_{\geq1}$
and let $\{u^\alpha\}_{\alpha=1}^\ell$ be the dual basis of $\mf g_{\leq-1}$.
Consider the elements, depending on the parameter $t\in\mb F$,
\begin{equation}\label{eq:dirac2}
\theta_\alpha(t)=u_\alpha-(f|u_\alpha)+\frac12tu^\alpha
\,\,,\,\,\,\,
\alpha=1,\dots,\ell
\,.
\end{equation}
Denote by $\pi_t:\,\mc V(\mf g)\to\mc V(\mf g)/\langle \theta_\alpha(t)\rangle_{\alpha=1}^\ell$
the canonical quotient map.
Note that, for $t=0$, it coincides with the map
$\rho:\,\mc V(\mf g)\to\mc V(\mf g)/\langle \theta_\alpha(0)\rangle_{\alpha=1}^\ell
\simeq\mc V(\mf g_{\leq\frac12})$
given by \eqref{rho}.

By the definition \eqref{lambda} of the $\lambda$-bracket on $\mc V_\epsilon(\mf g,S)$,
the matrix $C(\partial)$ defined by \eqref{eq:dirac3} has entries
\begin{equation}\label{eq:dirac4}
C_{\beta\alpha}(\partial)
=
t\delta_{\alpha\beta}\partial
+[\theta_\alpha(t),\theta_\beta(t)]
+\epsilon(S|[\theta_\alpha(t),\theta_\beta(t)])
\,.
\end{equation}
In particular, for every $t\neq0$,
it is a matrix differential operator of order $1$ with leading coefficient $t\id_\ell$,
and so it is invertible in $\Mat_{\ell\times\ell}\mc V(\mf g)$.
Then, we can consider the Dirac reduction of the PVA $\mc V_\epsilon(\mf g,S)$
by the constraints $\{\theta_\alpha(t)\}_{\alpha=1}^\ell$.

Note that the quotient differential algebra $\mc V(\mf g)/\langle \theta_\alpha(t)\rangle_{\alpha=1}^\ell$
is canonically isomorphic to $\mc V(\mf g_{\leq\frac12})$.
Hence, the $\mc W$-algebra $\mc W_\epsilon(\mf g,f,S)\subset\mc V(\mf g_{\leq\frac12})$, 
as a differential algebra,
is a subalgebra of 
$\mc V_\epsilon(\mf g,S)^D=\mc V(\mf g)/\langle \theta_\alpha(t)\rangle_{\alpha=1}^\ell
\simeq\mc V(\mf g_{\leq\frac12})$.
We claim that the $\mc W$-algebra $\lambda$-bracket on $\mc W_\epsilon(\mf g,f,S)$
can be obtained as the limit for $t\to0$ of the Dirac reduced $\lambda$-bracket 
on $\mc V_\epsilon(\mf g,S)^D$, restricted to $\mc W_\epsilon(\mf g,f,S)$.
Indeed, for $v,w\in\mc W_\epsilon(\mf g,f,S)$ we have,
\begin{equation}\label{eq:dirac5}
\{v_\lambda w\}_\epsilon^D
=
\pi_t\{v_\lambda w\}_\epsilon
- \sum_{\alpha,\beta=1}^\ell
\pi_t\{{\theta_\beta(t)}_{\lambda+\partial}w\}_{\epsilon,\to} 
\pi_t\big( C_{\beta\alpha}(t;\lambda+\partial) \big)
\pi_t \{v_\lambda\theta_\alpha(t)\}_\epsilon
\,.
\end{equation}
But, for $t\to0$, we have $\theta_\alpha(t)=u_\alpha-(f|u_\alpha)+O(t)$,
and therefore, by the definition \eqref{20120511:eq2} of the $\mc W$-algebra,
we have
$\pi_t\{{\theta_\beta(t)}_\lambda w\}_\epsilon
=\rho\{{u_\beta}_\lambda w\}_\epsilon+O(t)
=O(t)$,
and similarly, $\pi_t \{v_\lambda\theta_\alpha(t)\}_\epsilon=O(t)$.
On the other hand, $\pi_t\big( C_{\beta\alpha}(t;\lambda+\partial) \big)=O(\frac1t)$.
Therefore, the summation in the RHS of \eqref{eq:dirac5}
vanishes in the limit $t\to0$.
In conclusion,
\begin{equation}\label{eq:dirac6}
\{v_\lambda w\}_\epsilon^D
\stackrel{t\to0}{\longrightarrow}
\rho\{v_\lambda w\}_\epsilon
=\{v_\lambda w\}_\epsilon^{\mc W}
\,.
\end{equation}

\section{Operator of bi-Adler type for \texorpdfstring{$\mc W_\epsilon(\mf{gl}_N,f,S)$}{W\_e(gl\_N,f,S)}
and the corresponding integrable bi-Hamiltonian hierarchies}\label{sec:4}

\subsection{Setup and notation}\label{sec:4.1}

We fix a convenient basis of $\mf g=\mf{gl}_N$,
associated to a nilpotent element $f\in\mf g$ and the corresponding Dynkin grading.

Let $p=(p_1,\dots,p_r)$, with $p_1\geq\dots\geq p_r>0$, be a partition of $N$.
We associate to it a symmetric (with respect to the $y$-axis) pyramid,
with boxes indexed by $(i,h)$ in the set (of cardinality $N$)
\begin{equation}\label{eq:J}
\mc J=\big\{(i,h)\in\mb Z_+^2\,\big|\,1\leq i\leq r,\,1\leq h\leq p_i\big\}
\,,
\end{equation}
with $i$ and $h$ being respectively the row index (starting from the bottom) 
and the column index (starting from the right).
For example, for the partition $(9,7,4,4)$ of $24$, we have the pyramid in Figure \ref{fig:pyramid}.

\begin{figure}[h]
\setlength{\unitlength}{0.14in}
\centering
\begin{picture}(30,12)
\put(5,4){\framebox(2,2){(19)}}
\put(7,4){\framebox(2,2){(18)}}
\put(9,4){\framebox(2,2){(17)}}
\put(11,4){\framebox(2,2){(16)}}
\put(13,4){\framebox(2,2){(15)}}
\put(15,4){\framebox(2,2){(14)}}
\put(17,4){\framebox(2,2){(13)}}
\put(19,4){\framebox(2,2){(12)}}
\put(21,4){\framebox(2,2){(11)}}

\put(7,6){\framebox(2,2){(27)}}
\put(9,6){\framebox(2,2){(26)}}
\put(11,6){\framebox(2,2){(25)}}
\put(13,6){\framebox(2,2){(24)}}
\put(15,6){\framebox(2,2){(23)}}
\put(17,6){\framebox(2,2){(22)}}
\put(19,6){\framebox(2,2){(21)}}

\put(10,8){\framebox(2,2){(34)}}
\put(12,8){\framebox(2,2){(33)}}
\put(14,8){\framebox(2,2){(32)}}
\put(16,8){\framebox(2,2){(31)}}


\put(10,10){\framebox(2,2){(44)}}
\put(12,10){\framebox(2,2){(43)}}
\put(14,10){\framebox(2,2){(42)}}
\put(16,10){\framebox(2,2){(41)}}

\put(4,2){\vector(1,0){22}}
\put(26,1){$x$}

\put(6,1.6){\line(0,1){0.8}}
\put(7,1.8){\line(0,1){0.4}}
\put(8,1.6){\line(0,1){0.8}}
\put(9,1.8){\line(0,1){0.4}}
\put(10,1.6){\line(0,1){0.8}}
\put(11,1.8){\line(0,1){0.4}}
\put(12,1.6){\line(0,1){0.8}}
\put(13,1.8){\line(0,1){0.4}}
\put(14,1.6){\line(0,1){0.8}}
\put(15,1.8){\line(0,1){0.4}}
\put(16,1.6){\line(0,1){0.8}}
\put(17,1.8){\line(0,1){0.4}}
\put(18,1.6){\line(0,1){0.8}}
\put(19,1.8){\line(0,1){0.4}}
\put(20,1.6){\line(0,1){0.8}}
\put(21,1.8){\line(0,1){0.4}}
\put(22,1.6){\line(0,1){0.8}}

\put(13.8,0.6){0}
\put(15.8,0.6){1}
\put(17.8,0.6){2}
\put(19.8,0.6){3}
\put(21.8,0.6){4}
\put(11.6,0.6){-1}
\put(9.6,0.6){-2}
\put(7.6,0.6){-3}
\put(5.6,0.6){-4}

\put(14.8,-0.3){\tiny{$\frac12$}}
\put(16.8,-0.3){\tiny{$\frac32$}}
\put(18.8,-0.3){\tiny{$\frac52$}}
\put(20.8,-0.3){\tiny{$\frac72$}}
\put(12.6,-0.3){\tiny{$-\frac12$}}
\put(10.6,-0.3){\tiny{$-\frac32$}}
\put(8.6,-0.3){\tiny{$-\frac52$}}
\put(6.6,-0.3){\tiny{$-\frac72$}}

\end{picture}
\caption{} 
\label{fig:pyramid}
\end{figure}

\noindent
We also let $r_1$ be the number or rows of maximal length $p_1$
(i.e. the multiplicity of $p_1$ in the partition $p$).
For example in Figure \ref{fig:pyramid} we have $r_1=1$.

Let $V$ be the $N$-dimensional vector space over $\mb F$ with basis 
$\{e_{ih}\}_{(i,h)\in\mc J}$.
The Lie algebra $\mf g=\mf{gl}(V)$ has a basis 
consisting of the elementary matrices $E_{(ih),(jk)}$, $(ih),(jk)\in\mc J$.
The nilpotent element $f\in\mf g$ associated to the partition $p$
is the ``shift'' operator: 
$f(e_{ih})=e_{i,h+1}$, for $h<p_i$, and $f(e_{i,p_i})=0$.
In terms of elementary matrices, 
\begin{equation}\label{eq:f}
f=\sum_{(ih)\in\mc J\,|\,h<p_i}E_{(i,h+1),(ih)}
\,.
\end{equation}
If we order the indices $(ih)$ lexicographically,
$f$ has Jordan form with nilpotent Jordan blocks of sizes $p_1,\dots,p_r$.
The elementary matrix $E_{(jk),(ih)}$ in $\mf g$ can be depicted by an arrow
going from the center of the box $(ih)$ to the center of the box $(jk)$.
In particular, $f$ is depicted as the sum of all the arrows pointing
from each box to the next one on the left.

We also let $x\in\mf g$ be the diagonal endomorphism of $V$
whose eigenvalue on $e_{ih}$ is $\frac12(p_i+1-2h)$,
i.e. the $1$-st coordinate of the center of the corresponding box (see Figure \ref{fig:pyramid}).
It follows that the elementary matrices $E_{(ih),(jk)}$ are eigenvectors
with respect to the adjoint action of $x$:
\begin{equation}\label{eq:adx}
(\ad x)E_{(ih),(jk)}=\Big(\frac12(p_i-p_j)-(h-k)\Big)E_{(ih),(jk)}
\,.
\end{equation}
This defines a $\frac12\mb Z$-gradation of $\mf g$,
given by the $\ad x$-eigenvalues as in \eqref{eq:grading}.
The depth of this gradation is $d=p_1-1$.

\subsection{Canonical factorization in $\mf g_d$}\label{sec:4.1b}

Note that the $\ad x$-eigenspace of maximal degree is
\begin{equation}\label{eq:gd}
\mf g_d
=
\Span_{\mb F}
\big\{
E_{(i1),(jp_1)}\,\big|\,
i,j=1,\dots,r_1
\big\}
\,.
\end{equation}
We have a natural bijection $\mf g_d\stackrel{\sim}{\longrightarrow}\Mat_{r_1\times r_1}\mb F$
given by
\begin{equation}\label{eq:s0}
S=\sum_{i,j=1}^{r_1}
s_{ij}E_{(i1),(jp_1)}
\mapsto
\bar S=\big(s_{ij}\big)_{i,j=1}^{r_1}
\,.
\end{equation}
For example, the element of $\mf g_d$, corresponding to $\id_{r_1}\in\Mat_{r_1\times r_1}\mb F$,
is the matrix
\begin{equation}\label{eq:s1}
S_1=\sum_{i=1}^{r_1}
E_{(i1),(ip_1)}
\,.
\end{equation}

Let $S\in\mf g_d$, and let $\bar r\leq r_1$ be its rank.
The following proposition gives
an explicit description of its canonical factorization $S=IJ$,
where $I\in\Mat_{N\times\bar r}\mb F$ and $J\in\Mat_{\bar r\times N}\mb F$
are the matrices associated (in some basis of $\im(S)$)
to the maps $I:\,\im(S)\hookrightarrow\mb F^N$, $X\mapsto X$,
and $J:\,\mb F^N\twoheadrightarrow\im(S)$, $X\mapsto S(X)$. 
\begin{proposition}\phantomsection\label{prop:factor}
\begin{enumerate}[(a)]
\item
The canonical factorization of the matrix $S_1$ in \eqref{eq:s1} is
$S_1=I_1J_1$, where
\begin{equation}\label{eq:factor1}
I_1=\sum_{i=1}^{r_1}
E_{(i1),i}
\,\in\Mat_{N\times r_1}\mb F
\,\,,\,\,\,\,
J_1=\sum_{i=1}^{r_1}
E_{i,(ip_1)}
\,\in\Mat_{r_1\times N}\mb F
\,.
\end{equation}
\item
If $\bar S=\bar I\bar J$, with $\bar I\in\Mat_{r_1\times\bar r}\mb F$ and $\bar J\in\Mat_{\bar r\times r_1}\mb F$,
is the canonical factorization of $\bar S\in\Mat_{r_1\times r_1}\mb F$
(where $\bar r\leq r_1$ is the rank of $\bar S$),
then the canonical factorization of the element $S\in\mf g_d$
corresponding to $\bar S$ via \eqref{eq:s0} is $S=IJ$,
where
\begin{equation}\label{eq:factor2}
I=I_1\bar I\in\Mat_{N\times\bar r}\mb F
\,\,,\,\,
J=\bar JJ_1\in\Mat_{\bar r\times N}\mb F
\,.
\end{equation}
\end{enumerate}
\end{proposition}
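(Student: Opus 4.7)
The plan is to verify the factorizations by direct matrix computation and then argue that each factorization is the canonical one, meaning $I$ really is the inclusion of $\im(S)$ and $J$ really is the surjection $X\mapsto S(X)$ in some chosen basis of $\im(S)$. Since both statements are purely linear-algebraic assertions about matrices with entries in $\mb F$, no PVA or differential-algebra machinery enters.

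For part (a), I would compute $I_1 J_1$ directly using $E_{(i1),i}\,E_{j,(jp_1)}=\delta_{ij}E_{(i1),(ip_1)}$, giving
\[
I_1J_1 \;=\; \sum_{i,j=1}^{r_1} E_{(i1),i}\,E_{j,(jp_1)}
\;=\; \sum_{i=1}^{r_1} E_{(i1),(ip_1)} \;=\; S_1.
\]
To see that this is the \emph{canonical} factorization, I would observe that $\im(S_1)=\Span\{e_{(i1)}\mid 1\le i\le r_1\}$, and then choose $\{e_{(i1)}\}_{i=1}^{r_1}$ as the basis of $\im(S_1)$. In this basis, the inclusion $\im(S_1)\hookrightarrow\mb F^N$ is precisely $I_1$, and the surjection $\mb F^N\twoheadrightarrow\im(S_1)$, $X\mapsto S_1(X)$, sends $e_{(jl)}$ to $e_{(j1)}$ when $j\le r_1$ and $l=p_1$, and to $0$ otherwise, which is exactly $J_1$ under the identification $e_{(i1)}\leftrightarrow e_i$.

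For part (b), I would first establish the identity $S=I_1\bar S J_1$ by the same kind of direct computation as in (a): the product $I_1\bar S J_1$ has entry $\bar S_{kj}$ at row $(k,1)$ and column $(j,p_1)$ for $k,j\le r_1$ and zero elsewhere, matching the definition of $S$ via \eqref{eq:s0}. Substituting the canonical factorization $\bar S=\bar I\bar J$ then gives $S=(I_1\bar I)(\bar J J_1)=IJ$. It remains to identify this with the canonical factorization of $S$: the rank of $S$ equals $\mathrm{rk}\,\bar S=\bar r$ because $S$ is just $\bar S$ placed in the $r_1\times r_1$ block of rows indexed by $(i1)$ and columns by $(jp_1)$; the image is $\im(S)=I_1(\im(\bar S))$, since $J_1$ is surjective onto $\mb F^{r_1}$ and $I_1$ is injective. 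Choosing the basis of $\im(S)$ to be the $I_1$-image of the basis of $\im(\bar S)$ used in the canonical factorization of $\bar S$, we see that $I=I_1\bar I$ represents the inclusion $\im(S)\hookrightarrow\mb F^N$ and $J=\bar J J_1$ represents the surjection $\mb F^N\twoheadrightarrow\im(S)$, as required.

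The whole argument is essentially bookkeeping, so there is no real ``hard step''. The one place that requires care is keeping the double indexing $(i,h)\in\mc J$ consistent with the simple indexing $1,\dots,r_1$ used for $I_1,J_1$, and making sure the chosen basis of $\im(S)$ is compatible with the one used for $\im(\bar S)$, so that the block factorization $I=I_1\bar I$, $J=\bar J J_1$ literally represents the canonical maps and not merely a valid but non-canonical factorization.
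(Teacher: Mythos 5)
Your proposal is correct and follows essentially the same route as the paper: verify $I_1J_1=S_1$ and $IJ=I_1\bar I\,\bar J J_1=I_1\bar S J_1=S$ by direct computation with the elementary matrices. The only difference is cosmetic: the paper simply notes that the factors have maximal rank (which, by Remark \ref{rem:chg-basis}, already makes a full-rank factorization canonical for a suitable basis of $\im S$), while you spell out explicitly the basis of $\im(S)$ in which $I$ is the inclusion and $J$ is $X\mapsto S(X)$ — a harmless, slightly more detailed justification of the same point.
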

\begin{proof}
Clearly, $I_1$ and $J_1$ in \eqref{eq:factor1} are rectangular matrices of maximal rank $r_1$,
and it is immediate to check that $I_1J_1=S_1$. This proves part (a).
Moreover, if $I$ and $J$ are as in \eqref{eq:factor2}, then
$$
IJ
=
I_1\bar I\bar JJ_1
=
I_1\bar SJ_1
=
\sum_{i,j=1}^{r_1}
E_{(i1),i}
\bar S
E_{j,(jp_1)}
=
\sum_{i,j=1}^{r_1}
s_{ij}E_{(i1),(jp_1)}
=S\,,
$$
proving (b).
\end{proof}

\subsection{Operator of bi-Adler type for $\mc W_\epsilon(\mf g,f,S)$, $\epsilon\in\mb F$,
and corresponding bi-Hamiltonian integrable hierarchy}
\label{sec:4.2}

Let $\mf g=\mf{gl}_N$. 
Consider the pencil of affine Poisson vertex algebras $\mc V=\mc V_\epsilon(\mf g,S)$, $\epsilon\in\mb F$,
defined in Example \ref{ex:A},
depending on the matrix $S\in\Mat_{N\times N}\mb F$.
Recall that the matrix differential operator $A(\partial)\in\Mat_{N\times N}\mc V[\partial]$ in \eqref{eq:A},
is of $S$-Adler type with respect to the bi-PVA structure of $\mc V_\epsilon(\mf g,S)$,
for every $S\in\Mat_{N\times N}\mb F$.

We want to find an analogous operator for the affine $\mc W$-algebras.
Fix a non-zero nilpotent element $f\in\mf g$,
associated to the partition $p_1\geq p_2\geq\dots\geq p_r>0$ of $N$,
and consider the corresponding pencil of $\mc W$-algebras $\mc W_\epsilon(\mf g,f,S)$,
$\epsilon\in\mb F$,
depending on $S\in\mf g_d$,
where $d=p_1-1$ is the depth of the gradation \eqref{eq:grading}.
We will construct a matrix pseudodifferential operator 
$L_1(\partial)\in\Mat_{r_1\times r_1}\mc W((\partial^{-1}))$,
where $r_1$ is the multiplicity of $p_1$ in the partition,
which is of $\bar S$-Adler type with respect to the bi-PVA structure of $\mc W_\epsilon(\mf g,f,S)$,
where $S\in\mf g_d$ and $\bar S\in\Mat_{r_1\times r_1}\mb F$ are related by \eqref{eq:s0}.

The operator $L_1(\partial)$ is constructed as follows:
\begin{equation}\label{eq:La}
L_1(\partial)
=
|\rho(A(\partial))|_{I_1J_1}
=
|\id_N\partial+\rho(Q)|_{I_1J_1}
\,,
\end{equation}
where 
$I_1$ and $J_1$ are the matrices \eqref{eq:factor1},
given by the canonical factorization of $S_1\in\mf g_d$,
and $\rho:\,\mc V(\mf g)\to\mc V(\mf g_{\leq\frac12})$ is the map defined by \eqref{rho}.

In Section \ref{sec:4.2b} 
we prove that the generalized quasideterminant \eqref{eq:La} exists.
In fact, we show that,
for every $\bar S\in\Mat_{r_1\times r_1}\mb F$ and its canonical factorization $\bar S=\bar I\bar J$,
the generalized quasideterminant $|L_1(\partial)|_{\bar I\bar J}$ exists
over the field of fractions of $\mc V(\mf g_{\leq\frac12})$.
In Section \ref{sec:4.3}
we show that the entries of the coefficients of $L_1(\partial)$ actually lie in the $\mc W$-algebra 
$\mc W_\epsilon(\mf g,f,S)$.
Finally, in Section \ref{sec:4.4} we prove that,
if $S\in\mf g_d$ and $\bar S\in\Mat_{r_1\times r_1}\mb F$ are related by \eqref{eq:s0},
then $L_1(\partial)$ is a matrix pseudodifferential operator
of $\bar S$-Adler type with respect to the bi-PVA structure of $\mc W_\epsilon(\mf g,f,S)$, $\epsilon\in\mb F$.

\bigskip

Using the above stated results and following the scheme described in Section \ref{sec:2.5},
we will be able to construct a bi-Hamiltonian integrable hierarchy 
for the bi-Poisson structure of $\mc W_\epsilon(\mf{gl}_N,f,S)$,
for every nilpotent element $f\in\mf{gl}_N$ and every non-zero element $S\in\mf g_d$.
Such a hierarchy was constructed by Drinfeld and Sokolov \cite{DS85} 
for a principal nilpotent element $f$ of an arbitrary simple Lie algebra $\mf g$,
and their argument was generalized in different directions by many authors,
all under restrictive assumptions on the nilpotent element $f$
\cite{dGHM92,FHM93,BdGHM93,DF95,FGMS95,FGMS96,DSKV13,DSKV14a}.

Our idea is very simple.
Take the matrix $\bar S\in\Mat_{r_1\times r_1}\mb F$ corresponding to $S\in\mf g_d$
via \eqref{eq:s0},
take its canonical factorization $\bar S=\bar I\bar J$,
with $\bar I\in\Mat_{r_1\times\bar r}\mb F$ and $\bar J\in\Mat_{\bar r\times r_1}\mb F$
($\bar r$ is the rank of $S$ or, equivalently, of $\bar S$),
and consider the generalized quasideterminant
\begin{equation}\label{eq:Lb}
L(\partial)
=
|L_1(\partial)|_{\bar I\bar J}
=
|\rho(A(\partial))|_{IJ}
\,.
\end{equation}
The second equality holds, for $I=I_1\bar I$ and $J=\bar JJ_1$,
by the hereditary property \eqref{eq:hereditary} of generalized quasideterminants
and by Proposition \ref{prop:factor}(b).
By the results of Section \ref{sec:4.2b}, the generalized quasideterminant \eqref{eq:Lb} exists
and, by the results of Sections \ref{sec:4.3} and \ref{sec:4.4}, $L(\partial)$ is a matrix pseudodifferential operator
with coefficients in the field of fractions $\mc K$ of $\mc W\,\big(=\mc W(\mf g,f)\big)$,
of bi-Adler type with respect to the bi-Poisson structure of $\mc W_\epsilon(\mf g,f,S)$, $\epsilon\in\mb F$.
Hence, following the scheme described in Section \ref{sec:2.5},
we get that the Hamiltonian densities (cf. \eqref{eq:hn})
\begin{equation}\label{eq:hn-W}
h_n
=
-\frac1n\tr\Res_\partial L(\partial)^n
\,\,,\,\,\,\,n\geq1\,,
\end{equation}
are in involution with respect to both $\lambda$-brackets 
$\{\cdot\,_\lambda\,\cdot\}_0^{\mc W}$ and $\{\cdot\,_\lambda\,\cdot\}_1^{\mc W}$
of the bi-PVA $\mc W_\epsilon(\mf g,f,S)$, $\epsilon\in\mb F$,
they satisfy the Lenard-Magri recurrence relation $\{\tint h_n,w\}_0=\{\tint h_{n+1},w\}_1$
in the bi-PVA subalgebra $\mc W_1\subset\mc K$ generated by the coefficients of $L(\partial)$,
and they thus define an integrable hierarchy of (bi)Hamiltonian equations ($w\in\mc W$)
\begin{equation}\label{eq:hier-W}
\frac{dw}{dt_n}
=
\{\tint h_n,w\}_0^{\mc W}
\,\,\Big(
=
\{\tint h_{n+1},w\}_1^{\mc W}
\,\,\text{ if } w\in\mc W_1
\Big)
\,.
\end{equation}
More generally,
by Theorem \ref{thm:hn},
we have a bigger family of integrals of motion in involution $\{\tint h_{n,B}\}$,
parametrized by $n\in\mb Z_+$
and all possible roots $B\in\Mat_{\bar r\times\bar r}\mc K(\mf g,f)((\partial^{-1}))$ of $L(\partial)$.

In Section \ref{sec:6}
we will compute explicitly the matrix pseudodifferential operator $L_1(\partial)$.
This will be used in two ways:
to find an explicit formula for the generators of the $\mc W$-algebra $\mc W(\mf g,f)$,
and to compute explicitly, in Section \ref{sec:7}, the hierarchies of bi-Hamiltonian equations
for the $\mc W$-algebra $\mc W_\epsilon(\mf g,f,S)$, for various choices 
of the nilpotent element $f\in\mf{gl}_N$ and the element $S\in\mf g_d$.

\subsection{$L(\partial)$ exists}\label{sec:4.2b}

\begin{theorem}\label{thm:L1}
Let, as before, $\mc V(\mf g_{\leq\frac12})$ be the algebra of differential polynomials 
over $\mf g_{\leq\frac12}$,
and let $\mc K(\mf g_{\leq\frac12})$ be its field of fractions.
\begin{enumerate}[(a)]
\item
The matrix differential operator $\rho(A(\partial))=\id_N\partial+\rho(Q)$ in \eqref{eq:A} is invertible
in $\Mat_{N\times N}\mc V(\mf g_{\leq\frac12})((\partial^{-1}))$.
\item
Let $S_1$ be as in \eqref{eq:s1}, 
with its canonical factorization $S_1=I_1J_1$ defined by \eqref{eq:factor1}.
The matrix
$J_1(\id_N\partial+\rho(Q))^{-1}I_1
\in\Mat_{r_1\times r_1}\mc V(\mf g_{\leq\frac12})((\partial^{-1}))$
has an expansion
\begin{equation}\label{eq:thmL1-1}
J_1(\id_N\partial+\rho(Q))^{-1}I_1
=
(-1)^{p_1-1}\id_{r_1}\partial^{-p_1}
+(-1)^{p_1}\overline{Q}\partial^{-p_1-1}
+\dots
\,,
\end{equation}
where $\overline{Q}$ is a generic $r_1\times r_1$ matrix with entries in $\mc V(\mf g_{\leq\frac12})$.
\item
The generalized quasideterminant $L_1(\partial)=|\id_N\partial+\rho(Q)|_{I_1J_1}$ exists,
and it lies in the algebra $\Mat_{r_1\times r_1}\mc V(\mf g_{\leq\frac12})((\partial^{-1}))$.
\item
Let $S\in\mf g_d$ be a non-zero element of rank $\bar r\leq r_1$,
and let $S=IJ$ be its canonical factorization,
with $I\in\Mat_{N\times\bar r}\mb F$ and $J\in\Mat_{\bar r\times N}\mb F$.
Then
$J(\id_N\partial+\rho(Q))^{-1}I$ is invertible in
$\Mat_{\bar r\times\bar r}\mc K(\mf g_{\leq\frac12})$.
In particular, the generalized quasideterminant $L(\partial)=|\id_N\partial+\rho(Q)|_{IJ}$ exists
for every non-zero $S\in\mf g_d$,
and it lies in $\Mat_{\bar r\times\bar r}\mc K(\mf g_{\leq\frac12})((\partial^{-1}))$.
\end{enumerate}
\end{theorem}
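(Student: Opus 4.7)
My plan is to analyze the geometric-series expansion
$$
(\id_N\partial+\rho(Q))^{-1}=\sum_{k\geq0}(-1)^k\partial^{-1}(\rho(Q)\partial^{-1})^k
$$
order by order in $\partial^{-1}$. Part (a) will be immediate: the leading coefficient $\id_N$ is invertible, so the above series defines the inverse in $\Mat_{N\times N}\mc V(\mf g_{\leq\frac12})((\partial^{-1}))$.

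The bulk of the work is part (b). I would decompose $\rho(Q)=[f]+R$, where $[f]$ denotes the constant matrix of the shift operator $f$ and $R$ has entries $\pi_{\leq\frac12}(q_{(jk),(ih)})\in\mc V(\mf g_{\leq\frac12})$. Expanding the series entrywise, each term is a weighted sum over length-$k$ paths $(i,p_1)=\alpha_0\to\alpha_1\to\dots\to\alpha_k=(j,1)$, where each step contributes either an $f$-entry (row preserved, column decreased by $1$) or an $R$-entry. The key input is an $\ad x$-degree count: the total $\lambda$-change along any such path is $\lambda_{(j,1)}-\lambda_{(i,p_1)}=p_1-1$ (for $i,j\leq r_1$), while $f$-steps contribute $+1$ each and $R$-steps at most $+\tfrac12$ each. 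This forces $k\geq p_1-1$, with equality only for all-$f$ paths, which further require $i=j$; this yields the leading coefficient $(-1)^{p_1-1}\id_{r_1}\partial^{-p_1}$. At $k=p_1$, paths with exactly one $R$-step must have that step of $\ad x$-degree $0$, hence of the form $(i,t)\to(j,t)$ with $R$-entry $q_{(jt),(it)}$; summing over $t$ produces the linear part $\sum_{t=1}^{p_1}q_{(jt),(it)}$ of $\overline{Q}_{ij}$. Genericity then follows because these linear parts involve disjoint sets of independent generators for distinct $(i,j)$. (Smaller-$k$ contributions with derivative corrections are ruled out: an all-$f$ length-$(p_1-1)$ path has constant entries with vanishing derivatives, and no other length-$(p_1-1)$ paths exist by the same degree count.)

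Parts (c) and (d) then follow formally. By (b), $J_1(\id_N\partial+\rho(Q))^{-1}I_1$ has the invertible scalar leading coefficient $(-1)^{p_1-1}\id_{r_1}$, so $L_1(\partial)$ exists in $\Mat_{r_1\times r_1}\mc V(\mf g_{\leq\frac12})((\partial^{-1}))$, proving (c). For (d), Proposition \ref{prop:factor}(b) gives $I=I_1\bar I$ and $J=\bar JJ_1$, hence
$$
J(\id_N\partial+\rho(Q))^{-1}I=(-1)^{p_1-1}\bar J\bar I\,\partial^{-p_1}+(-1)^{p_1}\bar J\overline{Q}\bar I\,\partial^{-p_1-1}+O(\partial^{-p_1-2})\,.
$$
The leading coefficient $\bar J\bar I$ is constant and possibly singular, but $\bar J\overline{Q}\bar I$ is generic by Lemma \ref{lem:generic3}. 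After right multiplication by $\partial^{p_1}$ and a constant change of basis bringing $(-1)^{p_1-1}\bar J\bar I$ into the form $\bigl(\begin{smallmatrix}\id_s&0\\0&0\end{smallmatrix}\bigr)$ with $s=\rk(\bar J\bar I)$, the four blocks satisfy the hypotheses of Lemma \ref{lem:generic5}: the $(1,1)$-block has invertible constant leading coefficient $\id_s$ of order $0$, the $(2,2)$-block has generic leading coefficient of order $-1$ (invertible over $\mc K$ by Lemma \ref{lem:generic4}), and $n_{11}+n_{22}=-1>n_{12}+n_{21}$ since the off-diagonal blocks have order $\leq-1$. The lemma yields invertibility of $J(\id_N\partial+\rho(Q))^{-1}I$ over $\mc K(\mf g_{\leq\frac12})((\partial^{-1}))$, and hence existence of $L(\partial)$ in that ring. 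The main hurdle will be the $\ad x$-degree bookkeeping and the genericity argument in part (b); everything else assembles results from Sections \ref{sec:2.4} and \ref{sec:2.6}.
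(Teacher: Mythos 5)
Your plan is correct and follows essentially the same route as the paper's proof: geometric-series inversion for (a) and (c), the path expansion with $\ad x$-degree bookkeeping for (b) (the paper's ``Type 1'' terms are exactly your single degree-zero $R$-step paths, and its ``Type 2'' terms are the two half-step paths you leave implicit, which only involve $\mf g_{\frac12}$ variables and hence do not affect your genericity argument), and the reduction of (d) to Proposition \ref{prop:factor}(b) together with the genericity lemmas. The only cosmetic difference is that in (d) you re-derive the block-matrix invertibility argument instead of simply citing Proposition \ref{prop:generic}, which packages exactly that reasoning.
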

\begin{proof}
The matrix differential operator $\rho(A(\partial))$
is of order one with leading coefficient $\id_N$.
Hence it is invertible in the algebra $\Mat_{N\times N}\mc V(\mf g_{\leq\frac12})((\partial^{-1}))$,
and its inverse can be computed as geometric series expansion:
\begin{equation}\label{eq:thm1-pr1}
\rho(A(\partial))^{-1}
=
\sum_{\ell=0}^\infty (-1)^\ell \big(\partial^{-1}\circ\rho(Q)\big)^\ell\partial^{-1}
\,.
\end{equation}
This proves part (a). 
Next, we prove part (b).
By the definition \eqref{eq:factor1} of the matrices $I_1$ and $J_1$,
$J_1(\id_N\partial+\rho(Q))^{-1}I_1$ is an $r_1\times r_1$-matrix
with entry in row $i$ and column $j$ (with $i,j\in\{1,\dots,r_1\}$)
given by
\begin{equation}\label{eq:thm1-pr2}
\begin{split}
& \big(J_1(\id_N\partial+\rho(Q))^{-1}I_1\big)_{ij}
=
\sum_{\ell=0}^\infty (-1)^\ell 
\Big(
\partial^{-1}\circ\rho(Q)\dots\partial^{-1}\circ\rho(Q)\partial^{-1}
\Big)_{(ip_1),(j1)} \\
& =
\sum_{\ell=0}^\infty (-1)^\ell 
\sum
_{
\substack{
(i_0h_0),(i_1h_1),\dots,(i_\ell h_\ell)\in\mc J \\
(i_0h_0)=(ip_1),\,(i_\ell h_\ell)=(j1)
}} \\
& \qquad
\partial^{-1}\circ
\rho(q_{(i_1h_1),(i_0h_0)})
\partial^{-1}\circ
\rho(q_{(i_2h_2),(i_1h_1)})
\dots
\partial^{-1}\circ
\rho(q_{(i_\ell h_\ell),(i_{\ell-1}h_{\ell-1})})
\partial^{-1}
\,.
\end{split}
\end{equation}
Let $x_\alpha=\frac12(p_{i_\alpha}+1-2h_\alpha)\in\frac12\mb Z$, $\alpha=0,1,\dots,\ell$.
In particular,
\begin{equation}\label{eq:thm1-pr4}
x_0=-\frac12(p_1-1)=-\frac12 d 
\,\,,\,\,\,\, 
x_\ell=\frac12(p_1-1)=\frac12 d
\,.
\end{equation}
By the definition \eqref{rho} of the map $\rho$,
the summand in the RHS of \eqref{eq:thm1-pr2} vanishes
unless the indices $(i_0h_0),(i_1h_1),\dots,(i_\ell h_\ell)\in\mc J$
satisfy the following conditions:
\begin{equation}\label{eq:thm1-pr3}
x_1-x_0\leq1
\,\,,\,\,\,\,
x_2-x_1\leq1
\,,\,\,\dots\,\,,\,
x_\ell-x_{\ell-1}\leq1 
\,.
\end{equation}
Moreover, 
by the definition \eqref{eq:f} of $f$,
\begin{equation}\label{eq:thm1-pr5}
\text{ if } 
x_\alpha=x_{\alpha-1}+1
\,\,,\,\,\,\,\text{ then }\,
i_\alpha=i_{\alpha-1}
\,\text{ and }\,
h_\alpha=h_{\alpha-1}-1
\,.
\end{equation}
Clearly, from \eqref{eq:thm1-pr4} and \eqref{eq:thm1-pr3}
we get that necessarily $\ell\geq p_1-1$.
Moreover, by \eqref{eq:thm1-pr5} we also have that if $\ell=p_1-1$ 
then necessarily 
$$
i_0=i_1=\dots=i_{p_1-1}
\,\text{ and }\,
h_0=p_1,\, h_1=p_1-1,\dots, h_{p_1-1}=1
\,,
$$
and, in this case,
$$
\rho(q_{(i_1h_1),(i_0h_0)})
=\rho(q_{(i_2h_2),(i_1h_1)})
=\dots
=\rho(q_{(i_\ell h_\ell),(i_{\ell-1}h_{\ell-1})})=1
\,.
$$
This proves that
the pseudodifferential operator $\big(J_1(\id_N\partial+\rho(Q))^{-1}I_1\big)_{ij}$
has order $\leq-p_1$,
and the coefficient of $\partial^{-p_1}$ is $(-1)^{p_1-1}\delta_{ij}$.
In order to prove (b),
we are left to prove that the coefficients $\overline{Q}_{ij}$ of $(-1)^{p_1}\partial^{-p_1-1}$
in $\big(J_1(\id_N\partial+\rho(Q))^{-1}I_1\big)_{ij}$
form a generic matrix $\overline{Q}$ (according to Definition \ref{def:generic}).
By the above observations, the only contributions to $\overline{Q}_{ij}$
come from the term with $\ell=p_1$ in the RHS of \eqref{eq:thm1-pr2}:
\begin{equation}\label{eq:thm1-pr6}
\begin{split}
\overline{Q}_{ij}
=
\!\!\!\!\!\!
\sum
_{\substack{
(i_0h_0),\dots,(i_{p_1}h_{p_1})\in\mc J \\
(i_0h_0)=(ip_1),(i_{p_1}\!h_{p_1})=(j1)
}}
\!\!\!\!\!\!
\rho(q_{(i_1h_1),(i_0h_0)})
\rho(q_{(i_2h_2),(i_1h_1)})
\dots
\rho(q_{(i_{p_1} h_{p_1}),(i_{p_1-1}h_{p_1-1})})
\,.
\end{split}
\end{equation}
There are only two types of contributions to the RHS of \eqref{eq:thm1-pr6}:

\noindent\emph{Type 1.}
The terms with
$$
x_\alpha
=
\left\{\begin{array}{l}
\vphantom{\Big(}
-\frac12 d+\alpha
\,,\,\text{ for } \alpha=0,\dots,s\, \\
\vphantom{\Big(}
-\frac12 d+\alpha-1
\,,\,\text{ for } \alpha=s+1,\dots,p_1\,
\end{array}
\right.
$$
for some $s=0,\dots,p_1-1$.
In this case we have,
by \eqref{eq:thm1-pr4}, \eqref{eq:thm1-pr3} and \eqref{eq:thm1-pr5},
$$
(i_\alpha,h_\alpha)
=
\left\{\begin{array}{l}
\vphantom{\Big(}
(i,p_1-\alpha)
\,,\,\text{ for } \alpha=0,\dots,s\, \\
\vphantom{\Big(}
(j,p_1+1-\alpha)
\,,\,\text{ for } \alpha=s+1,\dots,p_1\,
\end{array}
\right.
$$
so that, by the definition \eqref{rho} of the map $\rho$,
$$
\rho(q_{(i_\alpha h_\alpha),(i_{\alpha-1}h_{\alpha-1})})=1
\,\text{ for }\,\alpha\neq s+1
\,,
$$
and
$$
\rho(q_{(i_{s+1}h_{s+1}),(i_sh_s)})=q_{(j,p_1-s),(i,p_1-s)}\in\mf g_0
\,.
$$

\noindent\emph{Type 2.}
The terms with
$$
x_\alpha
=
\left\{\begin{array}{l}
\vphantom{\Big(}
-\frac12 d+\alpha
\,,\,\text{ for } \alpha=0,\dots,s\, \\
\vphantom{\Big(}
-\frac12 d+\alpha-\frac12
\,,\,\text{ for } \alpha=s+1,\dots,t\, \\
\vphantom{\Big(}
-\frac12 d+\alpha-1
\,,\,\text{ for } \alpha=t+1,\dots,p_1\,
\end{array}
\right.
$$
for some $0\leq s<t< p_1$.
In this case we have,
by \eqref{eq:thm1-pr4}, \eqref{eq:thm1-pr3} and \eqref{eq:thm1-pr5},
\begin{equation*}
(i_\alpha,h_\alpha)
=
\left\{\begin{array}{l}
\vphantom{\Big(}
(i,p_1-\alpha)
\,,\,\text{ for } \alpha=0,\dots,s\, \\
\vphantom{\Big(}
(\tilde{i},\tilde{h}+1+s-\alpha)
\,,\,\text{ for } \alpha=s+1,\dots,t\, \\
\vphantom{\Big(}
(j,p_1+1-\alpha)
\,,\,\text{ for } \alpha=t+1,\dots,p_1\,
\end{array}
\right.
\end{equation*}
for some $(\tilde{i},\tilde{h})\in\mc J$ such that
\begin{equation}\label{eq:thm1-pr8}
p_{\tilde{i}}-2\tilde{h}=-p_1+1+2s
\,.
\end{equation}
Hence, by the definition \eqref{rho} of the map $\rho$, we have
$$
\rho(q_{(i_\alpha h_\alpha),(i_{\alpha-1}h_{\alpha-1})})=1
\,\text{ for }\,\alpha\not\in\{s+1,t+1\}
\,,
$$
and
$$
\rho(q_{(i_{s+1}h_{s+1}),(i_sh_s)})=q_{(i,p_1-s),(\tilde{i},\tilde{h})}
\,,\,\,
\rho(q_{(i_{t+1}h_{t+1}),(i_th_t)})=q_{(j,p_1-t),(\tilde{i},\tilde{h}+s+1-t)}
\,\in\mf g_{\frac12}
\,.
$$

It follows that
\begin{equation}\label{eq:thm1-pr7}
\begin{split}
\overline{Q}_{ij}
=
\sum_{s=0}^{p_1-1}q_{(j,p_1-s),(i,p_1-s)}
+
\sum_{0\leq s<t\leq p_1}
\!\!\!\!\!\!
\sum_{\substack{
(\tilde{i},\tilde{h})\in\mc J \\
\text{ s.t. } \eqref{eq:thm1-pr8} \text{ holds}
}}
\!\!\!\!\!\!
q_{(i,p_1-s),(\tilde{i},\tilde{h})}
q_{(j,p_1-t),(\tilde{i},\tilde{h}+s+1-t)}
\,.
\end{split}
\end{equation}
We then observe that the matrix $\overline{Q}$ in \eqref{eq:thm1-pr7}
is generic since,
for example, by letting all the variables in $\mf g_{\frac12}$ equal to $0$
and all the variables $q_{(jk),(ih)}$ with $h\neq 1$ equal to $0$,
we are left with the matrix $\big(q_{(j1),(i1)}\big)_{i,j=1}^{r_1}$,
which is clearly generic.

Part (c) follows from part (b) by taking geometric series expansion 
of $L_1(\partial)=(J_1(\id_N\partial+\rho(Q))^{-1}I_1)^{-1}$
using \eqref{eq:thmL1-1},
and part (d) is an immediate consequence of part (b) and
Propositions \ref{prop:generic} and \ref{prop:factor}(b).
\end{proof}

\subsection{$L_1(\partial)$ has coefficients with entries in $\mc W$}\label{sec:4.3}

The following key result is the only one which requires quite involved computations.
\begin{theorem}\label{prop:L2}
Consider the matrix pseudodifferential operator
\begin{equation}\label{eq:L2-1}
L_1^{-1}(\partial)=J_1(\id_N\partial+\rho(Q))^{-1}I_1
\,\in\Mat_{r_1\times r_1}\mc V(\mf g_{\leq\frac12})((\partial^{-1}))
\,,
\end{equation}
where $I_1,J_1$ are as in \eqref{eq:factor1} 
and $\rho$ is defined by \eqref{rho}.
Then, 
\begin{equation}\label{eq:L2-2}
\rho\{a_\lambda L_1^{-1}(z)_{ij}\}_\epsilon=0
\,\,\text{ for every }\,\,
i,j\in\{1,\dots,r_1\}
\,\,\text{ and }\,\,
a\in\mf g_{\geq\frac12}
\,.
\end{equation}
In particular, the entries of the coefficients of $L_1^{-1}(\partial)$ lie in
the differential algebra $\mc W(\mf g,f)\subset\mc V(\mf g_{\leq\frac12})$ 
(defined in \eqref{20120511:eq2}).
\end{theorem}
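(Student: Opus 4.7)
The plan is to compute $\{a_\lambda L_1^{-1}(z)_{ij}\}_\epsilon$ for $a\in\mf g_{\geq\frac12}$ and show it vanishes upon applying $\rho$. Two preliminary simplifications. First, $\rho$ is a differential algebra homomorphism, so it extends entry-wise and multiplicatively to matrix pseudodifferential operators: this gives $\rho(A^{-1}) = \rho(A)^{-1}$, whence $L_1^{-1}(\partial)_{ij} = \rho\bigl(A^{-1}(\partial)_{(ip_1),(j1)}\bigr)$. Second, for $a\in\mf g_{\geq\frac12}$ and $S\in\mf g_d$ we have $[a,S]\in\mf g_{\geq d+\frac12}=0$, so the bracket $\{a_\lambda\cdot\}_\epsilon$ is independent of $\epsilon$ on $\mf g_{\geq\frac12}$, and it suffices to work with $\epsilon=0$.

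Step 1. The bracket $\{a_\lambda A(z)_{\alpha\beta}\}$ is given explicitly by \eqref{eq:intro-1.4}: since $A(z)_{\alpha\beta} = \delta_{\alpha\beta}z+q_{\beta\alpha}$, one has $\{a_\lambda A(z)_{\alpha\beta}\} = [a,E_{\alpha\beta}] + (a|E_{\alpha\beta})\lambda$, interpreted as an element of $\mc V(\mf g)[\lambda]$ via the inclusion $\mf g\hookrightarrow\mc V(\mf g)$.

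Step 2. Apply the Leibniz rule to $A(\partial)\cdot A^{-1}(\partial) = \id$. After the standard manipulations in pseudodifferential calculus, this yields an expression for $\{a_\lambda A^{-1}(z)_{hk}\}$ as a finite sum of products of entries of $A^{-1}(z+\lambda+\partial)$, of $[a,E_{\alpha\beta}]\in\mf g$ or $(a|E_{\alpha\beta})\lambda$, and of entries of $A^{-1}(z)$. Specializing to $(h,k) = ((ip_1),(j1))$ and then applying $\rho$ gives an explicit formula for $\rho\{a_\lambda L_1^{-1}(z)_{ij}\}$ in terms of $\rho(A^{-1})$.

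Step 3. Show that the formula from Step 2 vanishes. Using the geometric-series expansion of $A^{-1}$ developed in the proof of Theorem \ref{thm:L1}(b), the summands can be re-indexed by broken paths in the pyramid of Figure \ref{fig:pyramid}: a path from column $1$ to a first intermediate index $\alpha$, then an insertion of $[a,\cdot]$ or $(a|\cdot)\lambda$, then a path from a second index $\beta$ to column $p_1$. The key structural inputs are: $(i)$ the operator $[a,\cdot]$ raises the $\ad x$-grading by at least $\frac12$, so only certain segments of path are accessible; $(ii)$ $\rho(E_{(mn),(pq)}) = (f|E_{(mn),(pq)})$ is a scalar supported on the $\mf g_1$-component, providing the cancellations that balance the shifts in grading; $(iii)$ the extremal columns $1$ and $p_1$ constrain the endpoints of the path. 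Combining these, one groups the terms into contributions whose total sum lies in $\ker\rho$, yielding $\rho\{a_\lambda L_1^{-1}(z)_{ij}\}_\epsilon = 0$.

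The main obstacle is the combinatorial argument in Step 3. The cancellations are non-local, spanning distinct insertion points of $a$ along the pyramid paths. A careful case analysis is needed, separating $a\in\mf g_{\frac12}$ (for which $\rho(a) = a$ survives as a differential generator of $\mc V(\mf g_{\leq\frac12})$) from $a\in\mf g_{\geq 1}$ (for which $\rho(a)$ collapses to a scalar from the pairing with $f$). Within each case, one tracks the interaction of the inserted $a$ with the admissibility constraints \eqref{eq:thm1-pr3}--\eqref{eq:thm1-pr5} that govern the non-vanishing path summands. This bookkeeping matches the characterization in the introduction of this theorem as the one requiring the most involved computations.
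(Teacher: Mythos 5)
Your reductions are fine: $L_1^{-1}(\partial)_{ij}=\rho\big(A^{-1}(\partial)_{(ip_1),(j1)}\big)$ since $\rho$ is a differential algebra homomorphism, the bracket with $a\in\mf g_{\geq\frac12}$ is $\epsilon$-independent because $[a,S]\in\mf g_{\geq d+\frac12}=0$, and the Leibniz rule applied to $A(\partial)A^{-1}(\partial)=\id$ gives
$\{a_\lambda (A^{-1})_{hk}(w)\}=-\sum_{\ell,r}(A^{-1})_{hr}(w+\lambda+\partial)\{a_\lambda A_{r\ell}(w+x)\}\big|_{x=\partial}(A^{-1})_{\ell k}(w)$.
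But Step 3, which is the entire content of the theorem, is asserted rather than proved. The observations you list ((i) $[a,\cdot]$ raises the grading, (ii) $\rho$ collapses $\mf g_{\geq1}$ to the pairing with $f$, (iii) the extremal columns constrain endpoints) do not by themselves produce the vanishing: in the paper's direct proof the cancellation is a genuine telescoping in which the ``non-Leibniz'' terms (those coming from $(a|q)\lambda+(f|[a,q])$) are re-indexed, split using $\lambda-(z+\lambda+\partial)+(z+\partial)=0$, and shown to equal minus the commutator terms at path length $s-1$, after separately checking that the boundary insertions $\ell=1$ and $\ell=s$ vanish for grading reasons. Saying ``one groups the terms into contributions whose total sum lies in $\ker\rho$'' names the desired conclusion, not an argument; as it stands the proposal is a plan whose hardest step is left open, and nothing in it guarantees the grouping exists. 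A minor additional point: you also need $\rho\{a_\lambda\rho(g)\}_\epsilon=\rho\{a_\lambda g\}_\epsilon$ for $a\in\mf g_{\geq\frac12}$ (Lemma \ref{lem:app2}, quoted from \cite{DSKV13}) to replace $L_1^{-1}=\rho(A^{-1})$ by $A^{-1}$ inside the bracket; you use this silently.

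Ironically, your Step 2 is one resummation away from a complete proof that needs no path combinatorics at all (this is the route of Appendix \ref{sec:appendix}). Writing $a=E_{\alpha\beta}$ with $\alpha=(\tilde\imath\tilde h)$, $\beta=(\tilde\jmath\tilde k)$, the explicit affine bracket gives $\{a_\lambda A_{r\ell}(w+x)\}=\delta_{\beta\ell}q_{\alpha r}-\delta_{r\alpha}q_{\ell\beta}+\delta_{\beta\ell}\delta_{r\alpha}\lambda$; summing over $r$ via $\sum_r(A^{-1})_{hr}(\mu)\circ A_{r\alpha}(\mu)=\delta_{h\alpha}$ (with $\mu=w+\lambda+\partial$) and over $\ell$ via $\sum_\ell A_{\beta\ell}(w+\partial)(A^{-1})_{\ell k}(w)=\delta_{\beta k}$, the three middle terms cancel because $(w+\lambda+\partial)-(w+\partial)-\lambda=0$, leaving only
$\{a_\lambda (A^{-1})_{hk}(w)\}=-\delta_{h\alpha}(A^{-1})_{\beta k}(w)+\delta_{\beta k}\big((A^{-1})_{h\alpha}(w+\lambda+\partial)1\big)$.
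Specializing to $(h,k)=((ip_1),(j1))$ with $i,j\leq r_1$, both Kronecker deltas vanish: if $\alpha=(ip_1)$ then the $\ad x$-eigenvalue of $E_{(ip_1),\beta}$ is $\frac12(p_1-p_{\tilde\jmath})-(p_1-\tilde k)\leq\frac12(p_{\tilde\jmath}-p_1)\leq0$, and if $\beta=(j1)$ the eigenvalue of $E_{\alpha,(j1)}$ is $\frac12(p_{\tilde\imath}-p_1)-(\tilde h-1)\leq0$, contradicting $a\in\mf g_{\geq\frac12}$ in either case. So the bracket is zero already in $\mc V(\mf g)$, and \eqref{eq:L2-2} follows. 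Either carry out this resummation, or carry out in full the telescoping cancellation you sketch; as submitted, the decisive step is missing.
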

\begin{proof}
As in \eqref{eq:thm1-pr2}, we can expand $L_1^{-1}(\partial)$ in geometric series.
Recalling that 
$f_{(jk),(ih)}=\delta_{(jk),(i,h+1)}$,
we get
\begin{equation}\label{eq:L2-pr1}
\begin{split}
& L_1^{-1}(\partial)_{ij} 
=
\sum_{\ell=0}^\infty (-1)^\ell 
\Big(\big(\partial^{-1}\circ(f+\pi_{\leq\frac12}Q)\big)^\ell
\partial^{-1}
\Big)_{(ip_1),(j1)} 
\\
& =
\sum_{\ell=0}^\infty (-1)^\ell 
\!\!\!\!\!\!
\sum
_{(i_0h_0),(i_1h_1),\dots,(i_\ell h_\ell)\in\mc J} 
\!\!\!\!\!\!
\delta_{(i_0h_0)(ip_1)}\delta_{(i_\ell h_\ell)(j1)}
\\
& \qquad
\partial^{-1}\circ
(\delta_{(i_1h_1),(i_0,h_0-1)}+\pi_{\leq\frac12}q_{(i_1h_1),(i_0h_0)})
\\
& \qquad
\partial^{-1}\circ
(\delta_{(i_2h_2),(i_1,h_1-1)}+\pi_{\leq\frac12}q_{(i_2h_2),(i_1h_1)})
\dots \\
& \qquad
\dots
\partial^{-1}\circ
(\delta_{(i_\ell h_\ell),(i_{\ell-1},h_{\ell-1}-1)}
+\pi_{\leq\frac12}q_{(i_\ell h_\ell),(i_{\ell-1}h_{\ell-1})})
\partial^{-1}
\,.
\end{split}
\end{equation}
By grouping the terms with the same number of factors $\pi_{\leq\frac12}q$, 
we can rewrite equation \eqref{eq:L2-pr1} as
\begin{equation}\label{eq:L2-pr2}
L_1^{-1}(\partial)_{ij} 
=
\delta_{ij}(-1)^{p_1-1}\partial^{-p_1}+
\sum_{s=1}^\infty X_{s;ij}(\partial) 
\,,
\end{equation}
where
\begin{equation}\label{eq:L2-pr3}
\begin{split}
& 
X_{s;ij}(\partial)
=
\sum_{n_0,n_1,\dots,n_s=0}^\infty
(-1)^{n_0+n_1+\dots+n_s+s}
\!\!\!\!\!\!\!\!\!\!\!\!
\sum
_{
(i_0h_0),(i_1h_1),\dots,(i_s h_s)\in\mc J
}
\!\!\!\!\!\!
\delta_{(i_0h_0)(ip_1)}\delta_{(i_s,h_s-n_s)(j1)}
\\
&\vphantom{\Big(} \qquad
\partial^{-n_0-1}\circ\pi_{\leq\frac12}q_{(i_1h_1),(i_0,h_0-n_0)}
\,\,
\partial^{-n_1-1}\circ\pi_{\leq\frac12}q_{(i_2h_2),(i_1,h_1-n_1)}
\dots
\\
&\vphantom{\Big(} \qquad 
\dots
\partial^{-n_{s-1}-1}\circ\pi_{\leq\frac12}q_{(i_sh_s),(i_{s-1},h_{s-1}-n_{s-1})}
\partial^{-n_s-1}
\\
&\vphantom{\Big(} \qquad
=
\sum_{n_0,n_1,\dots,n_s=0}^\infty
(-1)^{n_0+n_1+\dots+n_s+s}
\!\!\!\!\!\!\!\!\!\!\!\!
\sum
_{
(i_0h_0),(i_1h_1),\dots,(i_s h_s)\in\mc J
}
\!\!\!\!\!\!
\delta_{(i_0h_0)(ip_1)}\delta_{(i_s,h_s-n_s)(j1)}
\\
&\vphantom{\Big(} \qquad\qquad\quad
\prod_{r=1}^s
\Big(
\partial^{-n_{r-1}-1}\circ\pi_{\leq\frac12}q_{(i_rh_r),(i_{r-1},h_{r-1}-n_{r-1})}
\Big)
\partial^{-n_s-1}
\,.
\end{split}
\end{equation}
In order to prove \eqref{eq:L2-2},
we need to compute $\rho\{a_\lambda X_{s;ij}(z)\}_\epsilon$ for $a\in\mf g_{\geq\frac12}$.
Recall that, for $a\in\mf g_{\geq\frac12}$ and $q\in\mf g$, we have
$$
\rho\{a_\lambda\pi_{\leq\frac12}q\}_\epsilon
=
\rho\{a_\lambda q\}_\epsilon
=
\pi_{\leq\frac12}[a,q]
+(a|q)\lambda+(f|[a,q])
\,.
$$
Hence, 
by \eqref{eq:L2-pr3}, the sesquilinearity axioms and the Leibniz rules, we have
\begin{equation}\label{eq:L2-pr4}
\rho\{a_\lambda X_{s;ij}(z)\}_\epsilon
=
\sum_{\ell=1}^s(Y_{s\ell;ij}(z)+Z_{s,\ell;ij}(z))
\,,
\end{equation}
where
\begin{equation}\label{eq:L2-pr5}
\begin{split}
& 
Y_{s\ell;ij}(z)
=
\sum_{n_0,\dots,n_s=0}^\infty
(-1)^{n_0+\dots+n_s+s}
\!\!\!\!\!\!\!\!\!\!\!\!
\sum
_{
(i_0h_0),(i_1h_1),\dots,(i_s h_s)\in\mc J
}
\!\!\!\!\!\!
\delta_{(i_0h_0)(ip_1)}\delta_{(i_s,h_s-n_s)(j1)}
\\
&\vphantom{\Big(} \qquad
\prod_{r=1}^{\ell-1}
\Big(
(z+\lambda+\partial)^{-n_{r-1}-1}\pi_{\leq\frac12}q_{(i_rh_r),(i_{r-1},h_{r-1}-n_{r-1})}
\Big)
\\
&\vphantom{\Big(} \qquad
(z+\lambda+\partial)^{-n_{\ell-1}-1}\pi_{\leq\frac12}[a,q_{(i_\ell h_\ell),(i_{\ell-1},h_{\ell-1}-n_{\ell-1})}]
\\
&\vphantom{\Big(} \qquad
\prod_{r=\ell+1}^{s}
\Big(
(z+\partial)^{-n_{r-1}-1}\pi_{\leq\frac12}q_{(i_r h_r),(i_{r-1},h_{r-1}-n_{r-1})}
\Big)
(z+\partial)^{-n_s-1}
\,,
\end{split}
\end{equation}
and
\begin{equation}\label{eq:L2-pr6}
\begin{split}
& 
Z_{s\ell;ij}(z)
=
\sum_{n_0,\dots,n_s=0}^\infty
(-1)^{n_0+\dots+n_s+s}
\!\!\!\!\!\!\!\!\!\!\!\!
\sum
_{
(i_0h_0),(i_1h_1),\dots,(i_s h_s)\in\mc J
}
\!\!\!\!\!\!
\delta_{(i_0h_0)(ip_1)}\delta_{(i_s,h_s-n_s)(j1)}
\\
&\vphantom{\Big(} \qquad
\prod_{r=1}^{\ell-1}
\Big(
(z+\lambda+\partial)^{-n_{r-1}-1}\pi_{\leq\frac12}q_{(i_rh_r),(i_{r-1},h_{r-1}-n_{r-1})}
\Big)
\\
&\vphantom{\Big(} \qquad
(z\!+\!\lambda\!+\!\partial)^{-n_{\ell-1}-1}
\Big(
(a|q_{(i_\ell h_\ell),(i_{\ell-1},h_{\ell-1}-n_{\ell-1})})\lambda
+(f|[a,q_{(i_\ell h_\ell),(i_{\ell-1},h_{\ell-1}-n_{\ell-1})}])
\Big)
\\
&\vphantom{\Big(} \qquad
(z+\partial)^{-n_\ell-1}
\prod_{r=\ell+1}^{s}
\Big(
\pi_{\leq\frac12}q_{(i_r h_r),(i_{r-1},h_{r-1}-n_{r-1})}
(z+\partial)^{-n_r-1}
\Big)
\,.
\end{split}
\end{equation}
Let $a=q_{(\tilde j,\tilde k),(\tilde i,\tilde h)}\in\mf g_{\geq\frac12}$.
Then we have
$$
(a|q_{(i_\ell h_\ell),(i_{\ell-1},h_{\ell-1}-n_{\ell-1})})
=\delta_{(i_\ell h_\ell),(\tilde i,\tilde h)}\delta_{(i_{\ell-1}h_{\ell-1}),(\tilde j,\tilde k+n_{\ell-1})}
\,,
$$
and
\begin{equation*}
\begin{split}
\vphantom{\Big(}
(f|[a,q_{(i_\ell h_\ell),(i_{\ell-1},h_{\ell-1}-n_{\ell-1})}])
=
\delta_{(i_\ell h_\ell),(\tilde i \tilde h)}\delta_{(i_{\ell-1}h_{\ell-1}),(\tilde j,\tilde k+n_{\ell-1}+1)}
\\
\vphantom{\Big(}
-\delta_{(i_\ell h_\ell),(\tilde i,\tilde h-1)}\delta_{(i_{\ell-1}h_{\ell-1}),(\tilde j,\tilde k+n_{\ell-1})}
\,.
\end{split}
\end{equation*}
Hence, \eqref{eq:L2-pr6} becomes
\begin{equation}\label{eq:L2-pr7}
\begin{split}
& 
Z_{s\ell;ij}(z)
=
\sum_{n_0,\dots,n_s=0}^\infty
(-1)^{n_0+\dots+n_s+s}
\!\!\!\!\!\!\!\!\!\!\!\!
\sum
_{
(i_0h_0),(i_1h_1),\dots,(i_s h_s)\in\mc J
}
\!\!\!\!\!\!
\delta_{(i_0h_0)(ip_1)}\delta_{(i_s,h_s-n_s)(j1)}
\\
&\vphantom{\Big(} \qquad
\prod_{r=1}^{\ell-1}
\Big(
(z+\lambda+\partial)^{-n_{r-1}-1}\pi_{\leq\frac12}q_{(i_rh_r),(i_{r-1},h_{r-1}-n_{r-1})}
\Big)
(z\!+\!\lambda\!+\!\partial)^{-n_{\ell-1}-1}
\\
&\vphantom{\Big(} \qquad
\Big(
\delta_{(i_\ell h_\ell),(\tilde i,\tilde h)}\delta_{(i_{\ell-1}h_{\ell-1}),(\tilde j,\tilde k+n_{\ell-1})}
\lambda
+
\delta_{(i_\ell h_\ell),(\tilde i \tilde h)}\delta_{(i_{\ell-1}h_{\ell-1}),(\tilde j,\tilde k+n_{\ell-1}+1)}
\\
&\vphantom{\Big(} \qquad
-
\delta_{(i_\ell h_\ell),(\tilde i,\tilde h-1)}\delta_{(i_{\ell-1}h_{\ell-1}),(\tilde j,\tilde k+n_{\ell-1})}
\Big)
(z+\partial)^{-n_\ell-1}
\\
&\vphantom{\Big(} \qquad
\prod_{r=\ell+1}^{s}
\Big(
\pi_{\leq\frac12}q_{(i_r h_r),(i_{r-1},h_{r-1}-n_{r-1})}
(z+\partial)^{-n_r-1}
\Big)
\,.
\end{split}
\end{equation}
The RHS of \eqref{eq:L2-pr7} is sum of three terms,
according to the three terms in the middle parenthesis.
We then make the following change of variables:
we replace $n_{\ell-1}+1$ by $n_{\ell-1}$ in the second summand,
and $h_\ell+1$ by $h_\ell$ and $n_\ell+1$ by $n_\ell$ in the third summand.
As a result we get
\begin{equation}\label{eq:L2-pr8}
\begin{split}
& 
Z_{s\ell;ij}(z)
=
\!\!\!
\sum_{n_0,\dots,n_s=0}^\infty
(-1)^{n_0+\dots+n_s+s}
\!\!\!\!\!\!\!\!\!\!\!\!
\!\!\!\!\!\!
\sum
_{
(i_0h_0),(i_1h_1),\dots,(i_s h_s)\in\mc J
}
\!\!\!\!\!\!\!\!\!\!\!\!
\delta_{(i_0h_0)(ip_1)}\delta_{(i_s,h_s-n_s)(j1)}
\delta_{(i_\ell h_\ell),(\tilde i,\tilde h)}
\\
&\vphantom{\Big(} \qquad
\delta_{(i_{\ell-1}h_{\ell-1}),(\tilde j,\tilde k+n_{\ell-1})}
\prod_{r=1}^{\ell-1}
\Big(
(z+\lambda+\partial)^{-n_{r-1}-1}\pi_{\leq\frac12}q_{(i_rh_r),(i_{r-1},h_{r-1}-n_{r-1})}
\Big)
\\
&\vphantom{\Big(} \qquad
(z\!+\!\lambda\!+\!\partial)^{-n_{\ell-1}-1}
\Big(
\lambda
-
(1-\delta_{n_{\ell-1},0})
(z+\lambda+\partial)
+
(1-\delta_{n_\ell,0})
(z+\partial)
\Big)
\\
&\vphantom{\Big(} \qquad
(z+\partial)^{-n_\ell-1}
\prod_{r=\ell+1}^{s}
\Big(
\pi_{\leq\frac12}q_{(i_r h_r),(i_{r-1},h_{r-1}-n_{r-1})}
(z+\partial)^{-n_r-1}
\Big)
\,.
\end{split}
\end{equation}
Since $\lambda-(z+\lambda+\partial)-(z+\partial)=0$,
the RHS of \eqref{eq:L2-pr8} is the sum of the following two contributions:
\begin{equation}\label{eq:L2-pr9}
\begin{split}
& 
\sum_{n_0,\stackrel{\ell-1}{\check{\dots}},n_s=0}^\infty
(-1)^{n_0+\stackrel{\ell-1}{\check{\dots}}+n_s+s}
\!\!\!\!\!\!\!\!\!\!\!\!
\sum
_{
(i_0h_0),(i_1h_1),\dots,(i_s h_s)\in\mc J
}
\!\!\!\!\!\!
\delta_{(i_0h_0)(ip_1)}\delta_{(i_s,h_s-n_s)(j1)}
\delta_{(i_\ell h_\ell),(\tilde i,\tilde h)}
\\
&\vphantom{\Big(} \qquad
\delta_{(i_{\ell-1}h_{\ell-1}),(\tilde j,\tilde k)}
\prod_{r=1}^{\ell-1}
\Big(
(z+\lambda+\partial)^{-n_{r-1}-1}\pi_{\leq\frac12}q_{(i_rh_r),(i_{r-1},h_{r-1}-n_{r-1})}
\Big)
\\
&\vphantom{\Big(} \qquad
(z+\partial)^{-n_\ell-1}
\prod_{r=\ell+1}^{s}
\Big(
\pi_{\leq\frac12}q_{(i_r h_r),(i_{r-1},h_{r-1}-n_{r-1})}
(z+\partial)^{-n_r-1}
\Big)
\,,
\end{split}
\end{equation}
and
\begin{equation}\label{eq:L2-pr10}
\begin{split}
& 
-\sum_{n_0,\stackrel{\ell}{\check{\dots}},n_s=0}^\infty
(-1)^{n_0+\stackrel{\ell}{\check{\dots}}+n_s+s}
\!\!\!\!\!\!\!\!\!\!\!\!
\sum
_{
(i_0h_0),(i_1h_1),\dots,(i_s h_s)\in\mc J
}
\!\!\!\!\!\!
\delta_{(i_0h_0)(ip_1)}\delta_{(i_s,h_s-n_s)(j1)}
\delta_{(i_\ell h_\ell),(\tilde i,\tilde h)}
\\
&\vphantom{\Big(} \qquad
\delta_{(i_{\ell-1}h_{\ell-1}),(\tilde j,\tilde k+n_{\ell-1})}
\prod_{r=1}^{\ell-1}
\Big(
(z+\lambda+\partial)^{-n_{r-1}-1}\pi_{\leq\frac12}q_{(i_rh_r),(i_{r-1},h_{r-1}-n_{r-1})}
\Big)
\\
&\vphantom{\Big(} \qquad
(z\!+\!\lambda\!+\!\partial)^{-n_{\ell-1}-1}
\prod_{r=\ell+1}^{s}
\Big(
\pi_{\leq\frac12}q_{(i_r h_r),(i_{r-1},h_{r-1}-n_{r-1})}
(z+\partial)^{-n_r-1}
\Big)
\,.
\end{split}
\end{equation}
Note that, for $\ell=1$, \eqref{eq:L2-pr9} becomes $0$,
since we get a factor $\delta_{(\tilde j\tilde k),(ip_1)}$,
and $q_{(ip_1),(\tilde i\tilde h)}\in\mf g_{\leq0}$ for every $(\tilde i\tilde h)\in\mc J$,
contrary to the assumption that $a\in\mf g_{\geq\frac12}$.
Similarly, for $\ell=s$, \eqref{eq:L2-pr10} becomes $0$,
since we get a factor $\delta_{(\tilde i\tilde h),(j1)}$,
and $q_{(\tilde j\tilde k),(j1)}\in\mf g_{\leq0}$ for every $(\tilde i\tilde h)\in\mc J$,
contrary to the assumption that $a\in\mf g_{\geq\frac12}$.
For $2\leq\ell\leq s$, we can rewrite \eqref{eq:L2-pr9} as
\begin{equation}\label{eq:L2-pr11}
\begin{split}
& 
\sum_{n_0,\stackrel{\ell-1}{\check{\dots}},n_s=0}^\infty
(-1)^{n_0+\stackrel{\ell-1}{\check{\dots}}+n_s+s}
\!\!\!\!\!\!\!\!\!\!\!\!
\sum
_{
(i_0h_0),\stackrel{\ell-1}{\check{\dots}},(i_s h_s)\in\mc J
}
\!\!\!\!\!\!
\delta_{(i_0h_0)(ip_1)}\delta_{(i_s,h_s-n_s)(j1)}
\\
&\vphantom{\Big(} \qquad
\prod_{r=1}^{\ell-2}
\Big(
(z+\lambda+\partial)^{-n_{r-1}-1}\pi_{\leq\frac12}q_{(i_rh_r),(i_{r-1},h_{r-1}-n_{r-1})}
\Big)
\\
&\vphantom{\Big(} \qquad
(z+\lambda+\partial)^{-n_{\ell-2}-1}
\pi_{\leq\frac12}q_{(\tilde j,\tilde k),(i_{\ell-2},h_{\ell-2}-n_{\ell-2})}
\delta_{(i_\ell h_\ell),(\tilde i,\tilde h)}
\\
&\vphantom{\Big(} \qquad
\prod_{r=\ell+1}^{s}
\Big(
(z+\partial)^{-n_{r-1}-1}
\pi_{\leq\frac12}q_{(i_r h_r),(i_{r-1},h_{r-1}-n_{r-1})}
\Big)
(z+\partial)^{-n_s-1}
\,,
\end{split}
\end{equation}
while, for $1\leq\ell\leq s-1$, we can rewrite \eqref{eq:L2-pr10} as
\begin{equation}\label{eq:L2-pr12}
\begin{split}
& 
-\sum_{n_0,\stackrel{\ell}{\check{\dots}},n_s=0}^\infty
(-1)^{n_0+\stackrel{\ell}{\check{\dots}}+n_s+s}
\!\!\!\!\!\!\!\!\!\!\!\!
\sum
_{
(i_0h_0),\stackrel{\ell}{\check{\dots}},(i_s h_s)\in\mc J
}
\!\!\!\!\!\!
\delta_{(i_0h_0)(ip_1)}\delta_{(i_s,h_s-n_s)(j1)}
\\
&\vphantom{\Big(} \qquad
\prod_{r=1}^{\ell-1}
\Big(
(z+\lambda+\partial)^{-n_{r-1}-1}\pi_{\leq\frac12}q_{(i_rh_r),(i_{r-1},h_{r-1}-n_{r-1})}
\Big)
\\
&\vphantom{\Big(} \qquad
(z\!+\!\lambda\!+\!\partial)^{-n_{\ell-1}-1}
\pi_{\leq\frac12}q_{(i_{\ell+1} h_{\ell+1}),(\tilde i,\tilde h)}
\delta_{(i_{\ell-1}h_{\ell-1}-n_{\ell-1}),(\tilde j,\tilde k)}
\\
&\vphantom{\Big(} \qquad
\prod_{r=\ell+2}^{s}
\Big(
(z+\partial)^{-n_{r-1}-1}
\pi_{\leq\frac12}q_{(i_r h_r),(i_{r-1},h_{r-1}-n_{r-1})}
\Big)
(z+\partial)^{-n_s-1}
\,.
\end{split}
\end{equation}
Summing \eqref{eq:L2-pr11} over all values of $\ell=2,\dots,s$
we get, after a shift of indices
\begin{equation}\label{eq:L2-pr13}
\begin{split}
& 
\sum_{\ell=2}^{s}
\sum_{n_0,\dots,n_{s-1}=0}^\infty
(-1)^{n_0+\dots+n_{s-1}+s}
\!\!\!\!\!\!\!\!\!\!\!\!
\sum
_{
(i_0h_0),\dots,(i_{s-1} h_{s-1})\in\mc J
}
\!\!\!\!\!\!
\delta_{(i_0h_0)(ip_1)}\delta_{(i_{s-1},h_{s-1}-n_{s-1})(j1)}
\\
&\vphantom{\Big(} \qquad
\prod_{r=1}^{\ell-2}
\Big(
(z+\lambda+\partial)^{-n_{r-1}-1}\pi_{\leq\frac12}q_{(i_rh_r),(i_{r-1},h_{r-1}-n_{r-1})}
\Big)
\\
&\vphantom{\Big(} \qquad
(z+\lambda+\partial)^{-n_{\ell-2}-1}
\pi_{\leq\frac12}q_{(\tilde j,\tilde k),(i_{\ell-2},h_{\ell-2}-n_{\ell-2})}
\delta_{(i_{\ell-1} h_{\ell-1}),(\tilde i,\tilde h)}
\\
&\vphantom{\Big(} \qquad
\prod_{r=\ell}^{s-1}
\Big(
(z+\partial)^{-n_{r-1}-1}
\pi_{\leq\frac12}q_{(i_r h_r),(i_{r-1},h_{r-1}-n_{r-1})}
\Big)
(z+\partial)^{-n_{s-1}-1}
\,,
\end{split}
\end{equation}
and similarly, 
summing \eqref{eq:L2-pr12} over all values of $\ell=1,\dots,s-1$
we get,
\begin{equation}\label{eq:L2-pr14}
\begin{split}
&
-
\sum_{\ell=1}^{s-1}
\sum_{n_0,\dots,n_{s-1}=0}^\infty
(-1)^{n_0+\dots+n_{s-1}+s}
\!\!\!\!\!\!\!\!\!\!\!\!
\sum
_{
(i_0h_0),\dots,(i_{s-1} h_{s-1})\in\mc J
}
\!\!\!\!\!\!
\delta_{(i_0h_0)(ip_1)}\delta_{(i_{s-1},h_{s-1}-n_{s-1})(j1)}
\\
&\vphantom{\Big(} \qquad
\prod_{r=1}^{\ell-1}
\Big(
(z+\lambda+\partial)^{-n_{r-1}-1}\pi_{\leq\frac12}q_{(i_rh_r),(i_{r-1},h_{r-1}-n_{r-1})}
\Big)
\\
&\vphantom{\Big(} \qquad
(z\!+\!\lambda\!+\!\partial)^{-n_{\ell-1}-1}
\pi_{\leq\frac12}q_{(i_{\ell} h_{\ell}),(\tilde i,\tilde h)}
\delta_{(i_{\ell-1}h_{\ell-1}-n_{\ell-1}),(\tilde j,\tilde k)}
\\
&\vphantom{\Big(} \qquad
\prod_{r=\ell+1}^{s-1}
\Big(
(z+\partial)^{-n_{r-1}-1}
\pi_{\leq\frac12}q_{(i_r h_r),(i_{r-1},h_{r-1}-n_{r-1})}
\Big)
(z+\partial)^{-n_{s-1}-1}
\,.
\end{split}
\end{equation}
Combining \eqref{eq:L2-pr13} and \eqref{eq:L2-pr14}
we finally get, recalling \eqref{eq:L2-pr5},
\begin{equation}\label{eq:L2-pr15}
\begin{split}
& 
\sum_{\ell=1}^s
Z_{s\ell;ij}(z)
\\
&\vphantom{\Big(}
=
\sum_{\ell=1}^{s-1}
\sum_{n_0,\dots,n_{s-1}=0}^\infty
(-1)^{n_0+\dots+n_{s-1}+s}
\!\!\!\!\!\!\!\!\!\!\!\!
\sum
_{
(i_0h_0),\dots,(i_{s-1} h_{s-1})\in\mc J
}
\!\!\!\!\!\!
\delta_{(i_0h_0)(ip_1)}\delta_{(i_{s-1},h_{s-1}-n_{s-1})(j1)}
\\
&\vphantom{\Big(} \qquad
\prod_{r=1}^{\ell-1}
\Big(
(z+\lambda+\partial)^{-n_{r-1}-1}\pi_{\leq\frac12}q_{(i_rh_r),(i_{r-1},h_{r-1}-n_{r-1})}
\Big)
\\
&\vphantom{\Big(} \qquad
(z+\lambda+\partial)^{-n_{\ell-1}-1}
\pi_{\leq\frac12}
[q_{(\tilde j,\tilde k),(\tilde i,\tilde h)},q_{(i_{\ell} h_{\ell}),(i_{\ell-1},h_{\ell-1}-n_{\ell-1})}]
\\
&\vphantom{\Big(} \qquad
\prod_{r=\ell+1}^{s-1}
\Big(
(z+\partial)^{-n_{r-1}-1}
\pi_{\leq\frac12}q_{(i_r h_r),(i_{r-1},h_{r-1}-n_{r-1})}
\Big)
(z+\partial)^{-n_{s-1}-1}
\\
&\vphantom{\Big(}
=
-\sum_{\ell=1}^{s-1}
Y_{s-1,\ell;ij}(z)
\,.
\end{split}
\end{equation}
In conclusion, recalling \eqref{eq:L2-pr4},
we get $\rho\{a_\lambda\sum_{s=1}^\infty X_{s;ij}(z)\}_\epsilon=0$, as claimed.
\end{proof}
\begin{remark}\label{cor:daniele-proof}
After submitting the paper, in trying to quantize the result of the present paper,
we discovered a simpler, more conceptual proof of Theorem \ref{prop:L2},
which we present in Appendix \ref{sec:appendix}.
\end{remark}
\begin{corollary}\phantomsection\label{thm:L2}
\begin{enumerate}[(a)]
\item
The matrix pseudodifferential operator $L_1(\partial)$
defined by \eqref{eq:La} has coefficients with entries in the differential algebra $\mc W(\mf g,f)$.
\item
The matrix pseudodifferential operator $L(\partial)$
defined by \eqref{eq:Lb} has coefficients with entries in the field of fractions $\mc K(\mf g,f)$
of the differential algebra $\mc W(\mf g,f)$.
\end{enumerate}
\end{corollary}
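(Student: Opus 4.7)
The plan is to deduce both parts from Theorem \ref{prop:L2} together with the invertibility results of Theorem \ref{thm:L1} and the hereditary property of generalized quasideterminants.

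For part (a), the starting point is Theorem \ref{prop:L2}, which guarantees that the entries of the coefficients of $L_1^{-1}(\partial)=J_1(\id_N\partial+\rho(Q))^{-1}I_1$ lie in $\mc W$. By Theorem \ref{thm:L1}(b), the leading term of $L_1^{-1}(\partial)$ is $(-1)^{p_1-1}\id_{r_1}\partial^{-p_1}$, whose leading coefficient is a scalar matrix, manifestly invertible in $\mb F\subset\mc W$. I would then invoke the standard principle that, for any differential subalgebra $R\subset\mc V(\mf g_{\leq\frac12})$, a matrix pseudodifferential operator over $R$ whose leading coefficient is invertible in $R$ has its inverse also over $R$: the coefficients of the inverse are computed recursively, degree by degree, from the identity $A(\partial)\,A(\partial)^{-1}=\id$, using only products, derivatives, and the (invertible) leading coefficient of $A$. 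Applied with $R=\mc W$ and $A=L_1^{-1}(\partial)$, this immediately yields $L_1(\partial)\in\Mat_{r_1\times r_1}\mc W((\partial^{-1}))$, proving (a).

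For part (b), the plan is to reduce to (a) via the hereditary property \eqref{eq:hereditary}. By Proposition \ref{prop:factor}(b), the canonical factorization of $S$ reads $I=I_1\bar I$ and $J=\bar J J_1$, so
$$
L(\partial)=|L_1(\partial)|_{\bar I\bar J}=\bigl(\bar J\, L_1^{-1}(\partial)\,\bar I\bigr)^{-1}.
$$
The matrix $M(\partial):=\bar J L_1^{-1}(\partial)\bar I$ has all its coefficients in $\mc W$, by Theorem \ref{prop:L2} and the fact that $\bar I,\bar J$ are $\mb F$-linear. By Theorem \ref{thm:L1}(d), $M(\partial)$ is invertible over $\mc K(\mf g_{\leq\frac12})$, and what remains is to argue that the inverse actually lies in $\mc K(\mf g,f)=\Frac(\mc W)$; this makes sense because $\mc W$, being a subalgebra of the polynomial algebra $\mc V(\mf g_{\leq\frac12})$, is an integral domain, so $\mc K(\mf g,f)$ embeds naturally in $\mc K(\mf g_{\leq\frac12})$.

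The main (and I expect the only) obstacle arises here: the leading coefficient of $M(\partial)$ is the constant matrix $(-1)^{p_1-1}\bar J\bar I$, which can fail to be invertible when $\bar r<r_1$, so the direct recursion of part (a) does not apply. To bypass this I would mimic the proof of Theorem \ref{thm:L1}(d): after an $\mb F$-linear change of basis (which does not enlarge the ring of coefficients), $M(\partial)$ can be cast in the block form of Lemma \ref{lem:generic5}, with the lower-right block having a leading coefficient that is a generic matrix over $\mc W$ (generic by Lemma \ref{lem:generic3}, hence invertible over $\Frac(\mc W)$ by Lemma \ref{lem:generic4}) and the upper-left block having invertible constant leading coefficient. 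The explicit block-inversion formulas of Lemma \ref{lem:generic5} then express the coefficients of $M(\partial)^{-1}$ as sums, products, derivatives, and inversions of matrices with entries in $\mc W$, hence as elements of $\mc K(\mf g,f)$. This yields (b).
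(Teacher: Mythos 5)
Your proposal is correct and follows essentially the same route as the paper: part (a) is the geometric-series inversion of $L_1^{-1}(\partial)$ over $\mc W$ using Theorem \ref{thm:L1}(b) and Theorem \ref{prop:L2}, and part (b) identifies $L(\partial)$ with $\bigl(\bar J L_1^{-1}(\partial)\bar I\bigr)^{-1}$ via Proposition \ref{prop:factor}(b) and inverts over $\Frac(\mc W)$ using the genericity of the subleading coefficient. The only cosmetic difference is that where you unpack the block-form argument (Lemmas \ref{lem:generic3}--\ref{lem:generic5}) by hand, the paper simply cites Proposition \ref{prop:generic}, whose proof is exactly that argument.
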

\begin{proof}
By Theorem \ref{thm:L1}(b), $L_1^{-1}(\partial)$ has an expansion as in \eqref{eq:thmL1-1},
and by Theorem \ref{prop:L2} 
its coefficients have entries in the differential algebra $\mc W(\mf g,f)$.
Then $L_1(\partial)$ can be obtained by the geometric series expansion of the inverse of \eqref{eq:thmL1-1},
and therefore its coefficients will still have entries in $\mc W(\mf g,f)$.
This proves part (a).
By Proposition \ref{prop:generic},
$\bar JL_1^{-1}(\partial)\bar I$ is invertible
and its inverse has coefficients with entries in the field of fractions of $\mc W(\mf g,f)$.
On the other hand, by Proposition \ref{prop:factor}(b), 
$L(\partial)$ coincides with the inverse of $\bar JL_1^{-1}(\partial)\bar I$,
proving (b).
\end{proof}

\subsection{$L_1(\partial)$ is of $\bar S$-Adler type for $\mc W_\epsilon(\mf g,f,S)$}\label{sec:4.4}

\begin{theorem}\label{thm:L3}
Let $S\in\mf g_d$, and let $\bar S\in\Mat_{r_1\times r_1}\mb F$
be the corresponding matrix via \eqref{eq:s0}.
\begin{enumerate}[(a)]
\item
The matrix pseudodifferential operator $L_1(\partial)\in\Mat_{r_1\times r_1}\mc W(\mf g,f)((\partial^{-1}))$
defined by \eqref{eq:La} (cf. Corollary \ref{thm:L2}(a)) is of $\bar S$-Adler type
with respect to the compatible $\lambda$-brackets 
$\{\cdot\,_\lambda\,\cdot\}_0^{\mc W}$ and $\{\cdot\,_\lambda\,\cdot\}_1^{\mc W}$
of the family of PVA's $\mc W_\epsilon(\mf g,f,S)$, $\epsilon\in\mb F$.
\item
The matrix pseudodifferential operator $L(\partial)\in\Mat_{\bar r\times\bar r}\mc K(\mf g,f)((\partial^{-1}))$
defined by \eqref{eq:Lb} (cf. Corollary \ref{thm:L2}(b)) is of bi-Adler type
with respect to the compatible $\lambda$-brackets 
$\{\cdot\,_\lambda\,\cdot\}_0^{\mc W}$ and $\{\cdot\,_\lambda\,\cdot\}_1^{\mc W}$
of the family of PVA's $\mc W_\epsilon(\mf g,f,S)$, $\epsilon\in\mb F$.
\end{enumerate}
\end{theorem}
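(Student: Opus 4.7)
The plan is to carry the $S$-Adler identity for $A_{\epsilon S}(\partial)=\id_N\partial+Q+\epsilon S$ on $\mc V(\mf g)$ (available from Example \ref{ex:A}) down to the $\mc W$-algebra by composing three Adler-preserving operations (inversion, block restriction, generalized quasideterminant) with the projection $\rho$, using Theorem \ref{daniele1}(b) as the only bridge between the two $\lambda$-brackets.

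First I would dispose of part (b) assuming part (a). Given that $L_1(\partial)$ is of $\bar S$-Adler type, Theorem \ref{thm:main-quasidet} applied to the factorization $\bar S=\bar I\cdot\id_{\bar r}\cdot\bar J$ gives
\[
|L_1(\partial)+\epsilon\bar S|_{\bar I\bar J}=|L_1(\partial)|_{\bar I\bar J}+\epsilon\id_{\bar r}=L(\partial)+\epsilon\id_{\bar r},
\]
and Theorem \ref{thm:quasidet-adler} (applied inside $\mc K(\mf g,f)$) shows the left-hand side is Adler type with respect to $\{\cdot\,_\lambda\,\cdot\}^{\mc W}_\epsilon$. Hence $L(\partial)+\epsilon\id_{\bar r}$ is Adler type for every $\epsilon$, which is the bi-Adler property of $L(\partial)$.

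For part (a), the plan is a four-step path. \emph{Step 1 (inversion):} Since $A_{\epsilon S}(\partial)$ has invertible leading coefficient $\id_N$, it is invertible in $\Mat_{N\times N}\mc V(\mf g)((\partial^{-1}))$; Theorem \ref{thm:inverse-adler} then yields that $A_{\epsilon S}(\partial)^{-1}$ is Adler type with respect to $-\{\cdot\,_\lambda\,\cdot\}_\epsilon$. \emph{Step 2 (block restriction):} The submatrix
\[
B_\epsilon(\partial):=J_1A_{\epsilon S}(\partial)^{-1}I_1\in\Mat_{r_1\times r_1}\mc V(\mf g)((\partial^{-1}))
\]
is still Adler type with respect to $-\{\cdot\,_\lambda\,\cdot\}_\epsilon$, since the defining identity \eqref{eq:adler} at a pair $(i,j),(h,k)$ only involves entries at positions $(h,j),(i,k)$, which remain in the same block under row restriction to $\{(ip_1)\}_i$ and column restriction to $\{(j1)\}_j$. \emph{Step 3 (projection by $\rho$):} Because $\rho:\,\mc V(\mf g)\to\mc V(\mf g_{\leq\frac12})$ is a differential algebra homomorphism, it commutes with matrix multiplication and hence with inversion of PDOs whose leading coefficient is invertible; using $S=I_1\bar SJ_1$ (from Section \ref{sec:4.1b}) and Theorem \ref{thm:main-quasidet} this gives
\[
\rho(B_\epsilon(\partial))=J_1\rho(A_{\epsilon S}(\partial))^{-1}I_1=\bigl(|\rho(A_{\epsilon S}(\partial))|_{I_1J_1}\bigr)^{-1}=\bigl(L_1(\partial)+\epsilon\bar S\bigr)^{-1}.
\]
Corollary \ref{thm:L2}(a), combined with the invertibility of the (constant) leading coefficient of $L_1+\epsilon\bar S$, shows that the coefficients of $\rho(B_\epsilon)$ lie in $\mc W$. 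Applying $\rho$ to both sides of the opposite-sign Adler identity for $B_\epsilon$: the right-hand side, being polynomial in the $B_{\epsilon,ij}$'s and their derivatives, transforms into the Adler right-hand side for $\rho(B_\epsilon)=(L_1+\epsilon\bar S)^{-1}$; and by Theorem \ref{daniele1}(b) the left-hand side equals $\{\rho(B_{\epsilon,ij})(z)_\lambda\rho(B_{\epsilon,hk})(w)\}^{\mc W}_\epsilon$. Thus $(L_1(\partial)+\epsilon\bar S)^{-1}$ is Adler type with respect to $-\{\cdot\,_\lambda\,\cdot\}^{\mc W}_\epsilon$. \emph{Step 4 (inverting back):} A final application of Theorem \ref{thm:inverse-adler} inside $\mc W$ (or $\mc K(\mf g,f)$) yields that $L_1(\partial)+\epsilon\bar S$ is Adler type with respect to $\{\cdot\,_\lambda\,\cdot\}^{\mc W}_\epsilon$. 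Since $\epsilon\in\mb F$ is arbitrary, $L_1(\partial)$ is of $\bar S$-Adler type.

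The step I expect to be the main obstacle is the $\rho$-projection in Step 3: the Adler identity for $B_\epsilon$ holds inside $\mc V(\mf g)$ while the target identity must live on the subalgebra $\mc W\subset\mc V(\mf g_{\leq\frac12})$, and $\rho$ is \emph{not} a PVA homomorphism. The bridge supplied by Theorem \ref{daniele1}(b) can only be invoked once we know that the coefficients of $\rho(B_\epsilon)$, i.e.\ of $(L_1+\epsilon\bar S)^{-1}$, already lie in $\mc W$; this is precisely why Corollary \ref{thm:L2}(a) (and hence the technically demanding Theorem \ref{prop:L2}) is a genuine prerequisite for Theorem \ref{thm:L3} rather than an incidental consequence.
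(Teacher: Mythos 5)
Your proposal is correct and follows essentially the same route as the paper's proof: invert $A_{\epsilon S}$ and use Theorem \ref{thm:inverse-adler}, restrict to the $(I_1,J_1)$-block, push through $\rho$ via Theorem \ref{daniele1}(b) (which, as you rightly stress, requires Theorem \ref{prop:L2}/Corollary \ref{thm:L2}(a) to know the coefficients land in $\mc W$), invert back, and identify $|\rho A_{\epsilon S}|_{I_1J_1}=L_1+\epsilon\bar S$ by Theorem \ref{thm:main-quasidet}; part (b) is likewise obtained from Theorems \ref{thm:main-quasidet} and \ref{thm:quasidet-adler}. The only difference is presentational: you isolate the block-restriction step as a separate observation, which the paper folds into its displayed chain of equalities.
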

\begin{proof}
Since the matrix $S$ has constant entries,
it follows by Theorem \ref{prop:L2} (and the geometric series expansion)
that the matrix
\begin{equation}\label{eq:L3-pr1}
L_{1,\epsilon}^{-1}(\partial)
=
J_1(\id_N\partial+\rho(Q)+\epsilon S)^{-1}I_1
\end{equation}
lies in $\Mat_{r_1\times r_1}\mc W(\mf g,f)((\partial^{-1}))$
for every $\epsilon$.
Since the map 
$\rho:\,\mc V(\mf g)\to\mc V(\mf g_{\leq\frac12})$
is a homomorphism of differential algebras,
we can rewrite $L_{1,\epsilon}^{-1}(\partial)$ as
\begin{equation}\label{eq:L3-pr2}
L_{1,\epsilon}^{-1}(\partial)
=
\rho(J_1A_{\epsilon S}^{-1}(\partial)I_1)
\,\,\text{ where }\,\,
A_{\epsilon S}(\partial)=\id_N\partial+Q+\epsilon S
\,.
\end{equation}
Recall from Example \ref{ex:A}
that $A_{\epsilon S}(\partial)$ is of Adler type with respect 
to the $\lambda$-bracket $\{\cdot\,_\lambda\,\cdot\}_\epsilon$ defined by \eqref{lambda}.
Hence, by Theorem \ref{thm:inverse-adler},
$A_{\epsilon S}^{-1}(\partial)$ is of Adler type with respect to the opposite $\lambda$-bracket
$-\{\cdot\,_\lambda\,\cdot\}_\epsilon$.
It follows that
\begin{equation}\label{eq:L3-pr3}
\begin{split}
& -\{{L_{1,\epsilon}^{-1}(z)_{ij}}_\lambda{L_{1,\epsilon}^{-1}(w)_{hk}}\}_\epsilon^{\mc W}
=
-\{{\rho(J_1A_{\epsilon S}^{-1}(z)I_1)_{ij}}_\lambda
{\rho(J_1A_{\epsilon S}^{-1}(w)I_1)_{hk}}\}_\epsilon^{\mc W} \\
& =
-\rho\{{(J_1A_{\epsilon S}^{-1}(z)I_1)_{ij}}_\lambda
{(J_1A_{\epsilon S}^{-1}(w)I_1)_{hk}}\}_\epsilon \\
& =
-\rho\{{A_{\epsilon S}^{-1}(z)_{(ip_1),(j1)}}_\lambda
{A_{\epsilon S}^{-1}(w)_{(hp_1),(k1)}}\}_\epsilon \\
& =
\rho
A_{\epsilon S}^{-1}(w+\lambda+\partial)_{(hp_1),(j1)}
\iota_z(z-w-\lambda-\partial)^{-1}
((A_{\epsilon S}^{-1})_{(ip_1),(k1)})^*(\lambda-z) \\
&\quad
- 
\rho
A_{\epsilon S}^{-1}(z)_{(hp_1),(j1)}
\iota_z(z-w-\lambda-\partial)^{-1}
A_{\epsilon S}^{-1}(w)_{(ip_1),(k1)} \\
& =
(L_{1,\epsilon}^{-1})_{hj}(w+\lambda+\partial)
\iota_z(z-w-\lambda-\partial)^{-1}
((L_{1,\epsilon}^{-1})_{ik})^*(\lambda-z) \\
&\quad
- 
(L_{1,\epsilon}^{-1})_{hj}(z)
\iota_z(z-w-\lambda-\partial)^{-1}
(L_{1,\epsilon}^{-1})_{ik}(w)
\,.
\end{split}
\end{equation}
In the second equality we used Theorem \ref{daniele1}(b)
(and the above observation that $L_{1,\epsilon}^{-1}(\partial)$ has
coefficients with entries in $\mc W(\mf g,f)$),
while in the third and the fifth equality we used the definition \eqref{eq:factor1}
of the matrices $I_1$ and $J_1$.
It follows by \eqref{eq:L3-pr3} that $L_{1,\epsilon}^{-1}$
is of Adler type with respect to the negative of the $\lambda$-bracket
of the PVA $\mc W_\epsilon(\mf g,f,S)$.
Therefore, by Theorem \ref{thm:inverse-adler},
its inverse 
\begin{equation}\label{eq:L3-pr4}
L_{1,\epsilon}(\partial)
=
(J_1\rho(A_{\epsilon S})^{-1}(\partial)I_1)^{-1}
=
|\rho A_{\epsilon S}|_{I_1J_1}(\partial)
\,,
\end{equation}
is of Adler type with respect to the $\lambda$-bracket $\{\cdot\,_\lambda\,\cdot\}_\epsilon^{\mc W}$.
Note that in \eqref{eq:L3-pr4} we can take map $\rho$ out of the generalized quasideterminant,
since it is a differential algebra homomorphism.
If $S=IJ\in\mf g_d$ and $\bar S=\bar I\bar J\in\Mat_{r_1\times r_1}\mb F$ are as in Proposition \ref{prop:factor},
we then get, by Theorem \ref{thm:main-quasidet}, that
\begin{equation}\label{eq:L3-pr5}
L_{1,\epsilon}(\partial)
=
\rho|A(\partial)+\epsilon S|_{I_1J_1}
=
\rho|A(\partial)|_{I_1J_1}+\epsilon \bar S
=
|\rho A(\partial)|_{I_1J_1}+\epsilon\bar S
=
L_1(\partial)+\epsilon\bar S
\,.
\end{equation}
Hence, 
$L_1(\partial)$ is of $\bar S$-Adler type with respect to the pencil
of $\lambda$-brackets $\{\cdot\,_\lambda\,\cdot\}_\epsilon^{\mc W}$, $\epsilon\in\mb F$,
proving (a).
Furthermore, by the hereditary property \eqref{eq:hereditary} 
of generalized quasideterminants and Theorem \ref{thm:main-quasidet} again,
we also have
\begin{equation}\label{eq:L3-pr5b}
\begin{split}
& |L_{1,\epsilon}(\partial)|_{\bar I\bar J}
=
||\rho A(\partial)|_{I_1J_1}+\epsilon\bar S|_{\bar I\bar J}
=
||\rho A(\partial)|_{I_1J_1}|_{\bar I\bar J}+\epsilon \id_{r} \\
& =
|\rho A(\partial)|_{IJ}+\epsilon \id_{\bar r}
=
L(\partial)+\epsilon \id_{\bar r}
\,.
\end{split}
\end{equation}
Hence, by \eqref{eq:L3-pr5b} and Theorem \ref{thm:quasidet-adler},
we conclude that $L(\partial)$
is of bi-Adler type with respect to the pencil of $\lambda$-brackets 
$\{\cdot\,_\lambda\,\cdot\}_\epsilon^{\mc W}$, $\epsilon\in\mb F$,
proving the claim.
\end{proof}

\section{Explicit form of \texorpdfstring{$L$}{L}}\label{sec:5}

\subsection{A choice of a cross section to a nilpotent orbit in \texorpdfstring{$\mf{gl}_N$}{gl\_N}}\label{sec:5.1}

Let $\mf g=\mf{gl}_N$ and let $f\in\mf g$ be a non-zero nilpotent element, corresponding to a partition $p$.
In terms of the basis and notation introduced in Section \ref{sec:4.1}
we have the following result:
\begin{proposition}\label{20150601:prop}
For any partition $p$ we have
$\mf g=[f,\mf g]\oplus U$, where
\begin{equation}\label{eq:basisU}
U=\Span\Big\{E_{(j1),(i,p_i-k)}
\,\,,\,\,\text{ where }\,\,
1\leq i,j\leq r
\,\,\text{ and }\,\,
0\leq k\leq\min\{p_i,p_j\}-1
\Big\}
\,.
\end{equation}
\end{proposition}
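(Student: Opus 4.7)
The plan is to prove the spanning identity $[f,\mf g] + U = \mf g$ and then verify that the dimensions of the two summands add up to $\dim\mf g$, which forces the sum to be direct. By the parametrization in \eqref{eq:basisU}, $\dim U = \sum_{i,j=1}^r \min\{p_i, p_j\}$, which coincides with $\dim \mf g^f$ by the standard centralizer dimension formula for a nilpotent in $\mf{gl}_N$ of Jordan type $p$. Since $\dim [f,\mf g] = \dim\mf g - \dim\mf g^f$, the dimensions of $U$ and $[f,\mf g]$ sum to $\dim\mf g$, so the direct sum decomposition will follow from the spanning identity alone.

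For the spanning identity, I would exploit the block decomposition $\mf g = \bigoplus_{i,j=1}^r W_{ij}$, where $W_{ij} := \Span\{E_{(jk),(ih)} : 1 \le k \le p_j,\, 1 \le h \le p_i\}$. The formula
\[
[f, E_{(jk),(ih)}] = \chi(k<p_j)\, E_{(j,k+1),(ih)} - \chi(h>1)\, E_{(jk),(i,h-1)},
\]
deduced directly from \eqref{eq:f}, shows that $[f, W_{ij}] \subset W_{ij}$. Since $U$ also decomposes as $\bigoplus_{i,j} U_{ij}$ with $U_{ij} = \Span\{E_{(j1),(i,p_i-k)} : 0 \le k \le \min\{p_i,p_j\} - 1\}$, the problem reduces to proving $W_{ij} = [f, W_{ij}] + U_{ij}$ for each fixed $(i,j)$.

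Working modulo $[f,W_{ij}]$, the bracket formula yields dual \emph{descent} and \emph{ascent} congruences that move $E_{(jk),(ih)}$ along the anti-diagonal $\delta := k - h$: descent $E_{(jk),(ih)} \equiv E_{(j,k-1),(i,h-1)}$ for $k,h \ge 2$, with $E_{(jk),(i,1)} \equiv 0$ for $k \ge 2$; ascent $E_{(jk),(ih)} \equiv E_{(j,k+1),(i,h+1)}$ for $k<p_j$, $h<p_i$, with $E_{(j,p_j),(ih)} \equiv 0$ for $h<p_i$. The case analysis then splits into three regimes. If $\delta \ge 1$, iterated descent reaches $E_{(j,\delta+1),(i,1)} \equiv 0$. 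If $\delta \le \min\{0, p_j - p_i\}$, iterated descent terminates at $E_{(j,1),(i,1-\delta)}$, and the inequality $1-\delta \ge \max\{1,\, p_i - p_j + 1\}$, automatic under the case hypothesis, places this element in $U_{ij}$. Finally, if $p_j - p_i < \delta \le 0$ (possible only when $p_j < p_i$), the descent target escapes $U_{ij}$, but the condition $\delta > p_j - p_i$ forces the ascent along the anti-diagonal to hit the $k = p_j$ boundary before $h = p_i$, yielding $E_{(j,p_j),(i,h+p_j-k)} \equiv 0$ at a position with $h + p_j - k < p_i$. The principal subtlety is this third regime, where switching from descent to ascent is essential; once all three cases are handled, every generator of $W_{ij}$ is congruent modulo $[f, W_{ij}]$ to either $0$ or to an element of $U_{ij}$, giving $W_{ij} = [f,W_{ij}] + U_{ij}$, and summing over $(i,j)$ completes the proof.
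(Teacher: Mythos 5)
Your proof is correct, and its core is essentially the paper's argument. Your descent/ascent congruences modulo $[f,\mf g]$, together with the two boundary vanishing rules, are exactly the relations the paper expresses pictorially (two arrows are equivalent under horizontal shift, and arrows with tail in the rightmost box or head in the leftmost box lie in $[f,\mf g]$), and your three-regime case analysis in $\delta=k-h$ is a coordinate rendering of the paper's choice of representatives $E_{(j1),(i,p_i-\ell)}$, $0\le\ell\le\min\{p_i,p_j\}-1$; I checked the index bookkeeping in all three regimes and it is sound. The only genuine divergence is the directness step: you observe that the listed generators of $U$ are distinct elementary matrices, so $\dim U=\sum_{i,j}\min\{p_i,p_j\}$, and then invoke the standard formula $\dim\mf g^f=\sum_{i,j}\min\{p_i,p_j\}$ for a nilpotent of Jordan type $p$, whereas the paper avoids quoting that formula: it only uses $\dim U\ge\dim\mf g-\dim[f,\mf g]=\dim\mf g^f$ from the spanning identity, and gets the reverse inequality from the explicit injection $U\to\mf g^{\widetilde{e}}$ (with $\widetilde{e}$ the shift to the right, see \eqref{eq:ge}) together with the symmetry $\dim\mf g^{\widetilde{e}}=\dim\mf g^f$. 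Both routes are valid; the paper's is self-contained and also produces the map \eqref{eq:ge} used to relate $U$ to a centralizer, while yours is shorter if one takes the centralizer dimension formula as known.
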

\begin{proof}
Given an elementary matrix $E_{(jk),(ih)}$, we have
$[f,E_{(jk),(ih)}]=E_{(j,k+1),(ih)}-E_{(jk),(i,h-1)}$, 
which is depicted as
%
$$
\begin{tikzpicture}[scale=0.4]
\draw (0,0)--(10,0)--(10,2)--(9,2)--(9,3)--(7,3)--(7,4.5)--(5.5,4.5)--(5.5,5)--(4.5,5)--(4.5,4.5)--(3,4.5)--(3,3)--(1,3)--(1,2)--(0,2)--(0,0);
\draw [dotted,->] (6.5,1.5)--(5.5,3);
\draw [dotted,->] (5.5,3)--(3.5,3);
\draw [dotted,->] (8.5,1.5)--(6.5,1.5);
\draw [->] (6.5,1.5)--(3.5,3);
\draw [->] (8.5,1.5)--(5.5,3);
\node [below] at (6.5,1.5) {\tiny{(ih)}};
\node [above] at (3,3) {\tiny{(jk+1)}};
\node [below] at (8.5,1.5) {\tiny{(ih-1)}};
\node [above] at (5.5,3) {\tiny{(jk)}};
\node [above] at (6.2,1.5) {{\bf --}};
\end{tikzpicture}\in[f,\mf g]\,.
$$
%
Hence, in the quotient space $\mf g/[f,\mf g]$, two arrows are equivalent
if one is obtained from the other by a horizontal shift to the left or to the right.
%
%
Moreover, we have $E_{(jk),(i1)}=[f,E_{(jk-1),(i1)}]\in[f,\mf g]$, for $k>1$,
and $E_{(jp_j),(ih)}=-[f,E_{(jp_j),(i,h+1)}]\in[f,\mf g]$, for $h<p_i$,
namely, if an arrow has the tail at the center of the foremost right box, then it lies in $[f,\mf g]$,
and similarly, if an arrow has the head at the center of the foremost left box, then it lies in $[f,\mf g]$:
$$
\begin{tikzpicture}[scale=0.4]
\draw (0,0)--(10,0)--(10,2)--(9,2)--(9,3)--(7,3)--(7,4.5)--(5.5,4.5)--(5.5,5)--(4.5,5)--(4.5,4.5)--(3,4.5)--(3,3)--(1,3)--(1,2)--(0,2)--(0,0);
\draw [->] (10,1.5)--(5.5,0.5);
\end{tikzpicture}
\in [f,\mf g]\,,
\qquad
\begin{tikzpicture}[scale=0.4]
\draw (0,0)--(10,0)--(10,2)--(9,2)--(9,3)--(7,3)--(7,4.5)--(5.5,4.5)--(5.5,5)--(4.5,5)--(4.5,4.5)--(3,4.5)--(3,3)--(1,3)--(1,2)--(0,2)--(0,0);
\draw [->] (6.5,1.5)--(3,3.5);
\end{tikzpicture}
\in [f,\mf g]\,.
$$
Hence, for the quotient space $\mf g/[f,\mf g]$, we can take
as representatives the arrows with the head in the foremost right box (of the corresponding row),
with the property that, when shifted to the left, they have the tail in the foremost left box:
$$
E_{(jk),(ih)}=
\begin{tikzpicture}[scale=0.4]
\draw (0,0)--(10,0)--(10,2)--(9,2)--(9,3)--(7,3)--(7,4.5)--(5.5,4.5)--(5.5,5)--(4.5,5)--(4.5,4.5)--(3,4.5)--(3,3)--(1,3)--(1,2)--(0,2)--(0,0);
\draw [->] (6,1.5)--(9,2.5);
\draw [dotted,->] (0,1.5)--(3,2.5);
\end{tikzpicture}
\,.
$$
These are the matrices $E_{(j1),(i,p_i-\ell)}$ with $\ell$ satisfying $0\leq\ell\leq\min\{p_i,p_j\}-1$.
By definition, $U\subset\mf g$ is the linear span of all these matrices.
We have proved that $U+[f,\mf g]=\mf g$.
We are left to prove that this sum is a direct sum.
Indeed, $\dim U\geq\dim\mf g-\dim[f,\mf g]=\dim\mf g^f$.
Let $\widetilde{e}\in\mf g$ be the operator of ``shift'' to the right:
$$
\widetilde{e}=\sum_{(ih)\in\mc J\,|\,h<p_i}E_{(ih),(i,h+1)}
\,.
$$
By an obvious symmetry argument,
we have $\dim \mf g^{\widetilde{e}}=\dim\mf g^f$.
On the other hand, we have the injective linear map $U\to\mf g^{\widetilde{e}}$ given by
\begin{equation}\label{eq:ge}
E_{(j1),(i,p_i-\ell)}\mapsto\sum_{k=0}^\ell E_{(j,k+1),(i,p_i+k-\ell)}
\,\,,\qquad
0\leq\ell\leq\min(p_i,p_j)-1
\,.
\end{equation}
Hence, $\dim U\leq\dim\mf g^{\widetilde{e}}=\dim\mf g^f$.
This proves that $\dim U=\dim\mf g^f$, and therefore that $\mf g=U\oplus[f,\mf g]$.
\end{proof}

\subsection{Description of $\mf g^f$ and $\mf g_0^f$}\label{sec:5.2a}

Recall from Section \ref{sec:3.2} that any subspace $U\subset\mf g$ complementary to $[f,\mf g]$
is dual to $\mf g^f$.
In particular, consider the space $U$ defined in Proposition \ref{20150601:prop}
and its basis defined in \eqref{eq:basisU}.
We can find the corresponding dual basis of $\mf g^f$.
\begin{proposition}\label{prop:gf}
The basis of $\mf g^f$, dual to the basis $\{E_{(j1),(i,p_i-k)}\,|\,1\leq i,j\leq r,\,0\leq k\leq\min\{p_i,p_j\}-1\}$
of $U$, is (cf. \eqref{eq:ge}):
\begin{equation}\label{eq:basisgf}
f_{ij;k}
:=
\sum_{h=0}^k
E_{(i,p_i+h-k),(j,h+1)}
\,\,,\,\,\,\,
1\leq i,j\leq r\,,\,\, 0\leq k\leq\min\{p_i,p_j\}-1
\,.
\end{equation}
\end{proposition}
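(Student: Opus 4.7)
The plan is to establish two facts: (1) each proposed element $f_{ij;k}$ lies in $\mf g^f$; (2) the family $\{f_{ij;k}\}$ satisfies the duality relation
\[
\bigl(f_{ij;k}\,\big|\,E_{(j'1),(i',p_{i'}-k')}\bigr) = \delta_{ii'}\delta_{jj'}\delta_{kk'}
\]
with respect to the trace form. Since $\mf g^f=[f,\mf g]^\perp$ by invariance of $(\cdot\,|\,\cdot)$ and $U$ is a complement of $[f,\mf g]$ by Proposition \ref{20150601:prop}, the trace form induces a non-degenerate pairing between $\mf g^f$ and $U$. Once (1) and (2) are in hand, (2) forces linear independence of the $f_{ij;k}$'s, and a dimension count (the number of triples $(i,j,k)$ with $0\leq k\leq\min\{p_i,p_j\}-1$ equals $\sum_{i,j}\min\{p_i,p_j\}=\dim\mf g^f$) shows that they form a basis of $\mf g^f$, which must then be the dual basis.

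For step (1), I would use the explicit formula \eqref{eq:f} for $f$ together with the standard product rule $E_{(a,b),(c,d)}E_{(a',b'),(c',d')}=\delta_{(c,d),(a',b')}E_{(a,b),(c',d')}$ to record
\[
[f,E_{(a,b),(c,d)}]=E_{(a,b+1),(c,d)}-E_{(a,b),(c,d-1)},
\]
with the convention that a term on the right is zero if its column index falls out of range $[1,p_\bullet]$. Applying this termwise to $f_{ij;k}=\sum_{h=0}^k E_{(i,p_i+h-k),(j,h+1)}$, the first contribution vanishes when $h=k$ (since $p_i+h-k+1>p_i$) and the second vanishes when $h=0$ (since $h+1-1=0$); after the index shift $h\mapsto h-1$ in the first sum, the two surviving sums coincide and cancel, giving $[f,f_{ij;k}]=0$.

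For step (2), the same product rule yields
\[
E_{(i,p_i+h-k),(j,h+1)}\cdot E_{(j'1),(i',p_{i'}-k')}=\delta_{jj'}\delta_{h+1,1}E_{(i,p_i+h-k),(i',p_{i'}-k')},
\]
so only the term $h=0$ survives, producing $\delta_{jj'}E_{(i,p_i-k),(i',p_{i'}-k')}$. Taking the trace forces $i=i'$, hence $p_i=p_{i'}$, and then $p_i-k=p_{i'}-k'$ forces $k=k'$, giving the desired Kronecker symbols.

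The only subtle point is the careful bookkeeping of boundary conditions in the telescoping argument of step (1); the duality computation in step (2) is then immediate, and the dimension count closes the argument without further work.
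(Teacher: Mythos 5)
Your proposal is correct and takes essentially the same approach as the paper: one verifies $[f,f_{ij;k}]=0$ (so $f_{ij;k}\in\mf g^f$) by the telescoping cancellation you describe, and checks $\tr\big(E_{(j'1),(i',p_{i'}-k')}\,f_{ij;k}\big)=\delta_{ii'}\delta_{jj'}\delta_{kk'}$, which the paper records as immediate computations. Your extra remarks on the non-degeneracy of the pairing between $\mf g^f$ and $U$ and the dimension count merely make explicit what the paper leaves implicit after Proposition \ref{20150601:prop}.
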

\begin{proof}
It is immediate to check that $[f,f_{ij;k}]=0$ for every $1\leq i,j\leq r$ and $0\leq k\leq\min\{p_1,p_j\}-1$,
and that $\tr(E_{(j1),(i,p_i-k)}f_{i'j';k'})=\delta_{ii'}\delta_{jj'}\delta_{kk'}$.
\end{proof}
It is useful to have an explicit description of $\mf g_0^f$:
\begin{corollary}\label{cor:g0f}
The space $\mf g_0^f$ is spanned by the elements
$f_{ij;p_i-1}$ with $1\leq i,j\leq r$ such that $p_i=p_j$.
\end{corollary}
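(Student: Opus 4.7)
The plan is to use Proposition \ref{prop:gf}, which gives a basis $\{f_{ij;k}\}$ of $\mf g^f$, and single out the elements of $\ad x$-weight $0$, since the basis is compatible with the $\ad x$-grading (each $f_{ij;k}$ being a sum of elementary matrices of a single $\ad x$-weight by \eqref{eq:adx}).

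First I would verify that $f_{ij;k}$ is indeed an eigenvector for $\ad x$, and compute its eigenvalue. By \eqref{eq:adx}, each summand $E_{(i,p_i+h-k),(j,h+1)}$ has $\ad x$-weight
\[
\tfrac12(p_i-p_j) - \bigl((p_i+h-k)-(h+1)\bigr) = k+1 - \tfrac{p_i+p_j}{2},
\]
which is independent of $h$. Hence $f_{ij;k}$ is an $\ad x$-eigenvector of weight $k+1-\tfrac{p_i+p_j}{2}$.

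Since $\{f_{ij;k}\}$ is a basis of $\mf g^f$ consisting of $\ad x$-eigenvectors, and since $\mf g_0^f = \mf g^f \cap \mf g_0$, a basis of $\mf g_0^f$ is obtained by selecting those $f_{ij;k}$ of weight $0$. Thus I would solve
\[
k+1 = \tfrac{p_i+p_j}{2}, \qquad 0 \leq k \leq \min\{p_i,p_j\}-1.
\]
If $p_i=p_j$, the unique solution is $k=p_i-1 = \min\{p_i,p_j\}-1$, which lies in the valid range. If $p_i \neq p_j$, then (assuming without loss of generality $p_i>p_j$) the required $k=\tfrac{p_i+p_j}{2}-1$ is strictly larger than $p_j-1=\min\{p_i,p_j\}-1$, and therefore out of range (and if $p_i+p_j$ is odd, not even an integer). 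The only surviving basis vectors are therefore $f_{ij;p_i-1}$ with $p_i=p_j$, proving the claim.

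There is no real obstacle here: the work is the elementary $\ad x$-weight computation above, combined with the range constraint on $k$ built into the indexing of the basis \eqref{eq:basisgf}.
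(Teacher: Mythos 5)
Your proof is correct and follows essentially the same route as the paper: computing the $\ad x$-eigenvalue of $f_{ij;k}$ from \eqref{eq:adx} (the paper writes it as $\tfrac12(p_i-p_j)-(p_i-k-1)$, equal to your $k+1-\tfrac{p_i+p_j}{2}$) and then using the constraint $k\leq\min\{p_i,p_j\}-1$ to see the weight vanishes only when $p_i=p_j$ and $k=p_i-1$. Your explicit check that the weight of each summand is independent of $h$ is a small extra verification the paper leaves implicit, but otherwise the arguments coincide.
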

\begin{proof}
By equation \eqref{eq:adx},
the $\ad x$-eigenvalue of $f_{ij;k}$
is $\delta=\frac12(p_i-p_j)-(p_i-k-1)$.
Recalling that $k\leq\min\{p_i,p_j\}-1$,
we get that $\delta=0$ if and only if $p_i=p_j$ and $k=p_i-1$.
\end{proof}
\begin{corollary}\label{cor:s1g0f}
The element $S_1$ in \eqref{eq:s1} commutes with $\mf g_0^f$.
\end{corollary}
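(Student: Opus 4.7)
The plan is to invoke the explicit description of $\mf g_0^f$ given by Corollary \ref{cor:g0f} and then compute the bracket $[S_1, f_{ij;p_i-1}]$ directly on basis elements. By Corollary \ref{cor:g0f}, it suffices to verify that $[S_1, f_{ij;p_i-1}] = 0$ for all $1\leq i,j\leq r$ with $p_i=p_j$. Using formula \eqref{eq:basisgf} with $k=p_i-1$, the basis element simplifies to
$$
f_{ij;p_i-1}=\sum_{h=0}^{p_i-1}E_{(i,h+1),(j,h+1)},
$$
i.e.\ the matrix with a $p_i\times p_j$ identity block along the ``diagonal'' from row $i$ to row $j$.

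Next, I would expand $[S_1,f_{ij;p_i-1}]=\sum_{\ell=1}^{r_1}[E_{(\ell 1),(\ell p_1)},f_{ij;p_i-1}]$ using the standard elementary-matrix commutator formula
$$
[E_{(\ell 1),(\ell p_1)},E_{(i,h+1),(j,h+1)}]
=\delta_{(\ell p_1),(i,h+1)}E_{(\ell 1),(j,h+1)}-\delta_{(j,h+1),(\ell 1)}E_{(i,h+1),(\ell p_1)}.
$$
The first delta forces $\ell=i$ and $h=p_1-1$; the second forces $\ell=j$ and $h=0$.

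The argument then splits into two cases according to the common value $p_i=p_j$:
\begin{enumerate}[(1)]
\item If $p_i=p_j<p_1$, then necessarily $i,j>r_1$, so neither $\ell=i$ nor $\ell=j$ can occur for $\ell\in\{1,\dots,r_1\}$, and each term in the sum vanishes.
\item If $p_i=p_j=p_1$, then $i,j\leq r_1$, and the two surviving summands contribute $E_{(i,1),(j,p_1)}$ and $-E_{(i,1),(j,p_1)}$ respectively, which cancel.
\end{enumerate}
In either case $[S_1,f_{ij;p_i-1}]=0$, which proves the corollary.

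The computation is essentially a bookkeeping exercise, so I do not anticipate a real obstacle; the only point requiring a bit of care is the case split above, which is dictated by Corollary \ref{cor:g0f} and the fact that $S_1$ lives entirely in the top row of the pyramid (indices with $i\leq r_1$, columns $1$ and $p_1$).
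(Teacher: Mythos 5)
Your proof is correct and follows essentially the same route as the paper: a direct elementary-matrix computation of $[S_1,f_{ij;k}]$ on the basis elements singled out by Corollary \ref{cor:g0f}. The only (immaterial) difference is organizational — the paper computes $[S_1,f_{ij;k}]=\delta_{p_i,p_1}E_{(i1),(j,k+1)}-\delta_{p_j,p_1}E_{(i,p_i-k),(j,p_1)}$ for general $k$ and then specializes to $k=p_i-1$, $p_i=p_j$, whereas you substitute $k=p_i-1$ from the start and split into the cases $p_i=p_j<p_1$ and $p_i=p_j=p_1$.
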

\begin{proof}
We have
\begin{equation*}
\begin{split}
& [S_1,f_{ij;k}]
=
\sum_{a=1}^{r_1}
\sum_{h=0}^k
[E_{(a1),(ap_1)},E_{(i,p_i+h-k),(j,h+1)}] \\
& =
\delta_{p_i,p_1}E_{(i1),(j,k+1)}-\delta_{p_j,p_1}E_{(i,p_i-k),(j,p_1)}
\,.
\end{split}
\end{equation*}
To conclude, we observe that the RHS of the above equation is zero for $p_i=p_j$ and $k=p_i-1$.
\end{proof}
By Theorem \ref{thm:structure-W},
the $\mc W$-algebra $\mc W=\mc W(\mf g,f)$
is, as a differential algebra, the algebra of differential polynomials
in the corresponding set of generators,
\begin{equation}\label{eq:basisW}
w_{ij;k}
:=
w(f_{ij;k})
\,\,,\,\,\,\,
1\leq i,j\leq r\,,\,\, 0\leq k\leq\min\{p_i,p_j\}-1
\,.
\end{equation}
As a consequence of Corollary \ref{cor:s1g0f}, we have
\begin{corollary}\label{cor:casimirs}
For $1\leq i,j\leq r$ such that $p_i=p_j$,
the elements $w_{ij;p_i-1}$ are central for the $1$-st $\lambda$-bracket $\{\cdot\,_\lambda\,\cdot\}_1^{\mc W}$
of the family of $\mc W$-algebras $\mc W_\epsilon(\mf{gl}_N,f,S_1)$, $\epsilon\in\mb F$.
\end{corollary}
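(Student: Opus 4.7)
The plan is to use Corollary \ref{cor:s1g0f} and reduce the Casimir property to a vanishing statement at the Lie-algebra level. On the affine PVA $\mc V_\epsilon(\mf g,S_1)$ the $1$-st $\lambda$-bracket on generators is $\{a_\lambda b\}_1=(S_1|[a,b])=([S_1,a]|b)$ by invariance of the trace form, and hence vanishes whenever $a$ centralizes $S_1$. Writing $\mf g^{S_1}\subset\mf g$ for the centralizer, the Leibniz rules and sesquilinearity extend this to $\{\widetilde w_\lambda g\}_1=0$ for every $\widetilde w\in\mc V(\mf g^{S_1})$ and every $g\in\mc V(\mf g)$.

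The main task becomes constructing a lift $\widetilde w\in\mc V(\mf g^{S_1})$ of $w_{ij;p_i-1}$ along $\rho$. Given such a lift, for any $v\in\mc W$ with preimage $\widetilde v\in\mc V(\mf g)$, Theorem \ref{daniele1}(b) yields
\[
\{w_{ij;p_i-1}{}_\lambda v\}_1^{\mc W}=\rho\{\widetilde w_\lambda\widetilde v\}_1=0,
\]
establishing the claim. Observe that $\mf g^{S_1}$ is $\ad x$-graded (since $S_1\in\mf g_d$ is homogeneous) and contains $\mf g_{\geq 1}$ in full, because $[\mf g_{\geq 1},S_1]\subset\mf g_{\geq 1+d}=0$. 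Hence $\ker\rho\subset\mc V(\mf g^{S_1})$, and $\rho$ restricts to a surjection $\mc V(\mf g^{S_1})\twoheadrightarrow\mc V(\mf g^{S_1}\cap\mf g_{\leq\frac12})$; the problem reduces to verifying $w_{ij;p_i-1}\in\mc V(\mf g^{S_1}\cap\mf g_{\leq\frac12})$.

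The leading summand $f_{ij;p_i-1}$ in the decomposition $w_{ij;p_i-1}=f_{ij;p_i-1}+r$ of Theorem \ref{thm:structure-W} lies in $\mf g_0^f\subset\mf g^{S_1}\cap\mf g_{\leq\frac12}$ by Corollaries \ref{cor:g0f} and \ref{cor:s1g0f}. To handle the correction $r\in\langle U^\perp\rangle$, I would invoke Remark \ref{rem:general-U} and replace the slice $U$ of Section \ref{sec:5.1} by an alternative slice $U'$, compatible with the grading, such that $(U')^\perp\subset\mf g^{S_1}\cap\mf g_{\leq\frac12}$; this is possible because one can complete a basis of a complement of $\mf g^{S_1}\cap\mf g_{\leq\frac12}$ inside $\mf g_{\geq-\frac12}$ to a slice transverse to $[f,\mf g_{\geq\frac12}]$. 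The resulting element $w'(f_{ij;p_i-1})\in\mc W$ then lies in $\mc V(\mf g^{S_1}\cap\mf g_{\leq\frac12})$ and provides a Casimir.

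The main obstacle is identifying this alternative-slice element with the original $w_{ij;p_i-1}$. A conformal-weight count resolves this: every generator $f_{ab;c}\in\mf g^f$ has weight $\Delta=\frac{p_a+p_b}{2}-c\geq 1$, with equality precisely when $c=\min\{p_a,p_b\}-1$ and $p_a=p_b$. Thus weight-one elements of $\mc V(\mf g^f)$ are \emph{linear} combinations of the generators $\{f_{ab;p_a-1}:p_a=p_b\}$, and the difference $w'(f_{ij;p_i-1})-w_{ij;p_i-1}$ is a linear combination of other $w_{ab;p_a-1}$'s with $p_a=p_b$. Running the slice-change argument for every such pair shows the entire family $\{w_{ab;p_a-1}:p_a=p_b\}$ spans a common space of Casimirs, which contains $w_{ij;p_i-1}$ in particular.
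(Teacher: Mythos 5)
Your overall skeleton is sound up to a point: the $1$-st $\lambda$-bracket on $\mc V_\epsilon(\mf g,S_1)$ is indeed $\{a_\lambda b\}_1=(S_1|[a,b])$, so any differential polynomial in elements of $\mf g^{S_1}$ has vanishing $1$-st bracket with everything, and via Theorem \ref{daniele1}(b) it would suffice to know that $w_{ij;p_i-1}$ lies in $\mc V(\mf g^{S_1}\cap\mf g_{\leq\frac12})$ (your intermediate claim ``$\ker\rho\subset\mc V(\mf g^{S_1})$'' is false, since $\ker\rho$ is an ideal, but it is not needed for this reduction). The genuine gap is in how you try to establish that membership. First, the slice you need does not exist in general: for the minimal nilpotent of $\mf{gl}_3$ one has $\mf g^{S_1}\cap\mf g_0=\mf g_0^f=\Span\{E_{(11),(11)}+E_{(12),(12)},\,E_{(21),(21)}\}$, so $\mf g^f+(\mf g^{S_1}\cap\mf g_{\leq\frac12})$ misses $E_{(11),(11)}-E_{(12),(12)}$ and is a proper subspace of $\mf g_{\leq\frac12}$; hence no complement $(U')^\perp$ of $\mf g^f$ in $\mf g_{\leq\frac12}$ can be contained in $\mf g^{S_1}\cap\mf g_{\leq\frac12}$. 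Second, even granting such a slice, the inference from $w'(q)=q+r'$ with $r'\in\langle (U')^\perp\rangle$ to $w'(q)\in\mc V(\mf g^{S_1}\cap\mf g_{\leq\frac12})$ is a non sequitur: the differential ideal $\langle (U')^\perp\rangle$ consists of sums of products of elements of $(U')^\perp$ with \emph{arbitrary} elements of $\mc V(\mf g_{\leq\frac12})$, so membership of $r'$ in that ideal says nothing about $w'(q)$ being a differential polynomial purely in centralizer variables (compare $w_{11;0}$ in \eqref{eq:minimal-gener}, which contains $q_{(11),(11)}q_{(12),(12)}$ with neither factor in $\mf g^{S_1}$). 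Your final conformal-weight bookkeeping (weight-one generators are exactly the $f_{ab;p_a-1}$ with $p_a=p_b$, and both families of generators are bases of the weight-one component of $\mc W$) is fine, but it rests on the two broken steps above, so the argument does not go through.

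For comparison, the paper's proof is a one-liner of a different nature: it invokes the explicit formula for the $\lambda$-brackets of the generators $w(a)$, $a\in\mf g^f$, obtained in \cite[Eq.6.2]{DSKV16}, in which the $1$-st bracket enters only through $[S,a]$-type terms; combined with Corollary \ref{cor:s1g0f} (that $S_1$ commutes with $\mf g_0^f$) and Corollary \ref{cor:g0f} (that $f_{ij;p_i-1}$ with $p_i=p_j$ span $\mf g_0^f$), centrality of $w_{ij;p_i-1}$ for $\{\cdot\,_\lambda\,\cdot\}_1^{\mc W}$ is immediate. If you want a proof independent of that reference, you would need either to verify directly (e.g.\ from the bi-Adler identity \eqref{eq:bi-adler-L1}, as is done for the constrained case in \eqref{eq:constraint-1st}) that these generators are central, or to prove the membership $w_{ij;p_i-1}\in\mc V(\mf g^{S_1}\cap\mf g_{\leq\frac12})$ by a genuinely different mechanism than a change of slice.
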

\begin{proof}
It follows from \cite[Eq.6.2]{DSKV16}.
\end{proof}

\subsection{Explicit form of $L(\partial)$}\label{sec:5.2}

For $i,j\in\{1,\dots,r\}$, let
\begin{equation}\label{eq:Wij}
W_{ij}(\partial)
=
\sum_{k=0}^{\min\{p_i,p_j\}-1}w_{ij;k}(-\partial)^k
\in\mc W[\partial]
\,.
\end{equation}
Denote by $W(\partial)$ the $r\times r$ matrix differential operator with entries \eqref{eq:Wij}
(transposed):
\begin{equation}\label{eq:Wd}
W(\partial)
=
\sum_{i,j=1}^r 
W_{ij}(\partial)
E_{ji}
=
\left(\begin{array}{cccc}
W_{11}(\partial) & W_{21}(\partial) & \dots & W_{r1}(\partial) \\
W_{12}(\partial) & W_{22}(\partial) & \dots & W_{r2}(\partial) \\
\vdots & \vdots & \ddots & \vdots \\
W_{1r}(\partial) & W_{2r}(\partial) & \dots & W_{rr}(\partial)
\end{array}\right)
\,,
\end{equation}
and by $(-\partial)^p$ the diagonal $r\times r$ matrix
with diagonal entries $(-\partial)^{p_i}$, $i=1,\dots,r$:
\begin{equation}\label{eq:dp}
(-\partial)^p
=
\sum_{i=1}^r (-\partial)^{p_i}E_{ii}
=
\left(\begin{array}{ccc}
(-\partial)^{p_1} & & 0 \\
& \ddots& \\
0 & & (-\partial)^{p_r}
\end{array}\right)
\,.
\end{equation}
\begin{theorem}\label{thm:L1-explicit}
The matrix pseudodifferential operator $L_1(\partial)\in\Mat_{r_1\times r_1}\mc W((\partial^{-1}))$
defined by \eqref{eq:La} is equal to
\begin{equation}\label{eq:L1-explicit}
L_1(\partial)
:=
|\id_N\partial+\rho(Q)|_{I_1J_1}
=
|-(-\partial)^p+W(\partial)|_{I_{rr_1}J_{r_1r}}
\,,
\end{equation}
where $I_{rr_1}\in\Mat_{r\times r_1}\mb F$ and $J_{r_1r}\in\Mat_{r_1\times r}\mb F$
are as in \eqref{eq:ENM} and \eqref{eq:EMN} respectively.
\end{theorem}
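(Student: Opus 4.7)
The plan is to use the hereditary property of generalized quasideterminants to reduce the statement to an explicit $r\times r$ identity, and then verify that identity modulo the defining ideal $\langle U^\perp\rangle$ of $\mc W$ via the structure theorem (Theorem \ref{thm:structure-W}).

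Introduce the enlarged selection matrices $\tilde I_1 := \sum_{i=1}^r E_{(i,1),i} \in \Mat_{N\times r}\mb F$ and $\tilde J_1 := \sum_{i=1}^r E_{i,(i,p_i)} \in \Mat_{r\times N}\mb F$. Using $p_i = p_1$ for $i\le r_1$, a direct check shows $I_1 = \tilde I_1\, I_{rr_1}$ and $J_1 = J_{r_1r}\, \tilde J_1$, so by the hereditary property \eqref{eq:hereditary}
$$
L_1(\partial) = \bigl||\id_N\partial + \rho(Q)|_{\tilde I_1\tilde J_1}\bigr|_{I_{rr_1}J_{r_1r}}.
$$
It thus suffices to prove the $r\times r$ identity $|\id_N\partial+\rho(Q)|_{\tilde I_1\tilde J_1} = -(-\partial)^p + W(\partial)$.

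Existence of the left-hand side follows by the same path-counting analysis as in Theorem \ref{thm:L1}(b): the leading coefficient of $\tilde J_1(\id_N\partial+\rho(Q))^{-1}\tilde I_1$ is $\diag((-1)^{p_i-1}\partial^{-p_i})$, with subleading coefficient a generic matrix, so Proposition \ref{prop:generic} applies. Its entries lie in $\mc W$ by an extension of Theorem \ref{prop:L2} to the enlarged selection matrices. The right-hand side is in $\Mat_{r\times r}\mc W[\partial]$ by construction, so by Theorem \ref{thm:structure-W} it suffices to verify the equality after applying $\pi_{\mf g^f}$. Using Proposition \ref{prop:gf} and the decomposition $\mf g_{\leq\frac12} = \mf g^f\oplus U^\perp$, one shows $\pi_{\mf g^f}(E_{(j,k),(i,h)}) = 0$ unless $h=1$ and $k\in[\max(1,p_j-p_i+1),p_j]$, in which case it equals $f_{j,i;p_j-k}$. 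Hence $\bar A := \id_N\partial + \rho(Q) \bmod\langle U^\perp\rangle$ has nonzero off-diagonal-block entries only in rows of the form $(i,1)$: in rows $h\geq 2$, the block $\bar A^{(i,i)}$ coincides with $\partial\id_{p_i} + F^{(i,i)}$, while its first row carries the generators $f_{i,i;p_i-k}$; for $i\neq j$, the block $\bar A^{(i,j)}$ is supported on its first row with entries $f_{j,i;p_j-k}$.

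The reduced system $\bar A X = \tilde I_1 e_j$ is then easy to invert. The row-$(i,h)$ equations for $h\geq 2$ yield the recursion $X^{(i)}_h = -\partial^{-1}X^{(i)}_{h-1}$, hence $X^{(i)}_h = (-\partial)^{1-h}X^{(i)}_1$. Setting $Z^{(i)} := X^{(i)}_{p_i}$, so $X^{(i)}_1 = (-\partial)^{p_i-1}Z^{(i)}$, and substituting into the first-row equations, using $\partial(-\partial)^{p_i-1} = -(-\partial)^{p_i}$ and reindexing inner sums by $l = p_{j'}-k$, produces
$$
-(-\partial)^{p_i}Z^{(i)} + \sum_{j'=1}^r\sum_{l=0}^{\min(p_i,p_{j'})-1} f_{j',i;l}(-\partial)^l\, Z^{(j')} = \delta_{ij}.
$$
This reads $\bar M Z = e_j$, where $\bar M$ is precisely the image of $-(-\partial)^p + W(\partial)$ under $\pi_{\mf g^f}$; hence $(\tilde J_1\bar A^{-1}\tilde I_1)^{-1} = \bar M$, which completes the proof. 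The main obstacle is extending the Poisson-invariance computation of Theorem \ref{prop:L2} to the enlarged matrices $\tilde I_1, \tilde J_1$ so that $|\id_N\partial+\rho(Q)|_{\tilde I_1\tilde J_1}$ has entries in $\mc W$; beyond this, the remaining block-matrix inversion is forced by the structure of $\bar A$.
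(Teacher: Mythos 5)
Your reduction via the hereditary property is fine as far as it goes (indeed $I_1=\tilde I_1 I_{rr_1}$ and $J_1=J_{r_1r}\tilde J_1$, so $L_1(\partial)=\bigl|\,|\id_N\partial+\rho(Q)|_{\tilde I_1\tilde J_1}\bigr|_{I_{rr_1}J_{r_1r}}$, granted existence), but the intermediate identity you reduce to, $|\id_N\partial+\rho(Q)|_{\tilde I_1\tilde J_1}=-(-\partial)^p+W(\partial)$, is \emph{false} whenever $r>r_1$, and so is the ``extension of Theorem \ref{prop:L2}'' on which your argument hinges. The proof of Theorem \ref{prop:L2} (see the Appendix) works only because, for $a=q_{(\tilde j\tilde k),(\tilde i\tilde h)}\in\mf g_{\geq\frac12}$, the deltas $\delta_{(ip_1),(\tilde j\tilde k)}$ and $\delta_{(\tilde i\tilde h),(j1)}$ vanish: the selected boxes $(i,p_1)$ and $(j,1)$ with $i,j\leq r_1$ are the extreme boxes of the pyramid. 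For $i>r_1$ the boxes $(i,p_i)$ and $(i,1)$ are not extreme and these deltas survive, so gauge invariance fails. Concretely, for the minimal nilpotent of $\mf{gl}_3$ (partition $(2,1)$) the $(1,1)$ entry of $|\id_3\partial+\rho(Q)|_{\tilde I_1\tilde J_1}$ is $q_{(12),(11)}-(\partial+q_{(11),(11)})\circ(\partial+q_{(12),(12)})$, whose coefficient of $\partial$ is $-(q_{(11),(11)}+q_{(12),(12)})$: this is not an element of $\mc W$ (it fails invariance under $a=q_{(11),(21)}\in\mf g_{\frac12}$) and differs from $-w_{11;1}=-(q_{(11),(11)}+q_{(12),(12)}+q_{(21),(12)}q_{(11),(21)})$ of \eqref{eq:minimal-gener}; the missing quadratic term is produced only by the \emph{outer} quasideterminant. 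Your verification after applying $\pi_{\mf g^f}$ is correct, but $\pi_{\mf g^f}$ is injective only on $\mc W$, not on $\mc V(\mf g_{\leq\frac12})$, so agreement modulo $\langle U^\perp\rangle$ does not give the claimed equality: the two sides have the same image under $\pi_{\mf g^f}$ and are nevertheless different operators.

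The repair is to reduce \emph{before} introducing the enlarged selection matrices, which is what the paper does. Since the entries of $L_1(\partial)$ itself already lie in $\mc W$ by Corollary \ref{thm:L2}(a) (no extension of Theorem \ref{prop:L2} is needed), one has $L_1=w(\pi_{\mf g^f}L_1)$, and since $w\circ\pi_{\mf g^f}$ is a differential algebra homomorphism it passes inside the quasideterminant, giving $L_1=|\id_N\partial+f+w(\pi_{\mf g^f}\pi_{\leq\frac12}Q)|_{I_1J_1}$ with $w(\pi_{\mf g^f}\pi_{\leq\frac12}Q)=\sum_{i,j,k}w_{ij;k}E_{(j1),(i,p_i-k)}$. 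Your block recursion $X^{(i)}_h=-\partial^{-1}X^{(i)}_{h-1}$, applied to this reduced matrix rather than to the unreduced $\rho(A)$, is then exactly the needed computation and identifies $J_1(\cdots)^{-1}I_1$ with $J_{r_1r}\bigl(-(-\partial)^p+W(\partial)\bigr)^{-1}I_{rr_1}$ (the paper does the same via the geometric expansion \eqref{eq:L2-pr2b}--\eqref{eq:L2-pr3b}), which is the statement. A minor additional point: your existence claim for $|\id_N\partial+\rho(Q)|_{\tilde I_1\tilde J_1}$ is also too quick, since the $(i,j)$ entry of $\tilde J_1(\id_N\partial+\rho(Q))^{-1}\tilde I_1$ has order $\leq-\tfrac{p_i+p_j}{2}$, so the matrix is not of the form ``constant leading coefficient plus generic subleading'' and Proposition \ref{prop:generic} does not apply verbatim; but this becomes moot once the detour through $\tilde I_1,\tilde J_1$ at the unreduced level is abandoned.
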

\begin{proof}
According to Corollary \ref{thm:L2},
the matrix pseudodifferential $L_1(\partial)$ has coefficients with entries in $\mc W(\mf g,f)$.
Hence, by Theorem \ref{thm:structure-W},
$L_1(\partial)$ is unchanged if we apply first the map 
$\pi_{\mf g^f}:\,\mc V(\mf g_{\leq\frac12})\to\mc V(\mf g^f)$
and then the map $w:\,\mc V(\mf g^f)\to\mc W(\mf g,f)$
to the entries of its coefficients:
\begin{equation}\label{eq:expl-pr1}
L_1(\partial)=w(\pi_{\mf g^f}L_1(\partial))
\,.
\end{equation}
Since $\pi_{\mf g^f}$ and $w$ are homomorphisms of differential algebras,
they commute with taking generalized quasideterminants.
Hence, \eqref{eq:expl-pr1} can be rewritten as
\begin{equation}\label{eq:expl-pr2}
L_1(\partial)
=
\big|
\id_N\partial+f+
w(\pi_{\mf g^f}\pi_{\leq\frac12}Q)
\big|_{I_1J_1}
\,.
\end{equation}
Here we used the definition \eqref{rho} of the map $\rho:\,\mc V(\mf g)\to\mc V(\mf g_{\leq\frac12})$.
By the definition \eqref{eq:basisW} of the generators $w_{ij;k}\in\mc W$,
and since the bases \eqref{eq:basisgf} of $\mf g^f$
and \eqref{eq:basisU} of $U$ are dual to each other,
we immediately get that
\begin{equation}\label{eq:expl-pr3}
w(\pi_{\mf g^f}\pi_{\leq\frac12}Q)
=
\sum_{i,j=1}^r\sum_{k=0}^{\min\{p_1,p_j\}-1}w_{ij;k} E_{(j1),(i,p_i-k)}
\,.
\end{equation}
Combining \eqref{eq:expl-pr2} and \eqref{eq:expl-pr3}, and recalling \eqref{eq:f}, we get
\begin{equation}\label{eq:expl-pr4}
L_1(\partial)
=
\Big|
\id_N\partial+
\sum_{i=1}^r\sum_{h=1}^{p_1-1}E_{(i,h+1),(ih)}
+
\sum_{i,j=1}^r\sum_{k=0}^{\min\{p_1,p_j\}-1} w_{ij;k} E_{(j1),(i,p_i-k)}
\Big|_{I_1J_1}
\,.
\end{equation}
In the proof of Theorem \ref{prop:L2}
we computed the entries of the inverse of the matrix $L_1(\partial)$,
which are given by equations \eqref{eq:L2-pr2} and \eqref{eq:L2-pr3}. 
According to equation \eqref{eq:expl-pr4},
the same formulas hold if we replace, 
in the definition of $L_1(\partial)$,
the matrix $\pi_{\leq\frac12}Q$ 
by the matrix $w(\pi_{\mf g^f}\pi_{\leq\frac12}Q)$, given by \eqref{eq:expl-pr3}.
This means replacing in the RHS of \eqref{eq:L2-pr3}
the expression $\pi_{\leq\frac12}q_{(i_rh_r),(i_{r-1},h_{r-1}-n_{r-1})}$
by $\delta_{h_{r-1}-n_{r-1},1}
w_{i_ri_{r-1};p_{i_r}-h_r}$
if $p_{i_r}-h_r\leq\min\{p_{i_r},p_{i_{r-1}}\}-1$,
and by $0$ otherwise.
Hence, we get, after some algebraic manipulations
\begin{equation}\label{eq:L2-pr2b}
L_1^{-1}(\partial)_{ij} 
=
-\delta_{ij}(-\partial)^{-p_1}+
\sum_{s=1}^\infty \bar X_{s;ij}(\partial) 
\,,
\end{equation}
where
\begin{equation}\label{eq:L2-pr3b}
\begin{split}
& 
\bar X_{s;ij}(\partial)
=
\!\!
-\sum_{i_1,\dots,i_{s-1}=1}^r
\!\!
\sum_{k_1=0}^{\min\{p_{i_1},p_i\}-1}
\!
\sum_{k_2=0}^{\min\{p_{i_2},p_{i_1}\}-1}
\!\!
\dots
\!\!
\sum_{k_s=0}^{\min\{p_j,p_{i_{s-1}}\}-1}
\!\!
\\&\qquad
(-\partial)^{-p_1}
w_{i_1i;k_1}
(-\partial)^{k_1-p_{i_1}}
w_{i_2i_1;k_2}
(-\partial)^{k_2-p_{i_2}}
\dots\\&\qquad\dots
w_{i_{s-1}i_{s-2};k_{s-1}}
(-\partial)^{k_{s-1}-p_{i_{s-1}}}
w_{ji_{s-1};k_s}
(-\partial)^{k_s-p_1}
\\&
=
-\sum_{i_1,\dots,i_{s-1}=1}^r
(-\partial)^{-p_1}
W_{i_1i}(\partial)
(-\partial)^{-p_{i_1}}
W_{i_2i_1}(\partial)
(-\partial)^{-p_{i_2}}
\dots\\&\qquad\dots
W_{i_{s-1}i_{s-2}}(\partial)
(-\partial)^{-p_{i_{s-1}}}
W_{ji_{s-1}}(\partial)
(-\partial)^{-p_1}
\,.
\end{split}
\end{equation}
On the other hand, the RHS of \eqref{eq:L2-pr2b}, combined with \eqref{eq:L2-pr3b},
is exactly the $(ij)$-entry of the matrix
$J_1(-(-\partial)^p+W(\partial))^{-1}I_1$,
computed using the geometric series expansion.
The claim follows.
\end{proof}
We can write the matrix $W(\partial)$ in block form as
$W(\partial)
=
\left(\begin{array}{ll}
W_1(\partial) & W_2(\partial) \\
W_3(\partial) & W_4(\partial)
\end{array}\right)$,
where
\begin{equation}\label{eq:blockW}
\begin{split}
& W_1(\partial)
=
\big(
W_{ji}(\partial)
\big)_{1\leq i,j\leq r_1}
\,\,,\,\,\,\,
W_2(\partial)
=
\big(
W_{ji}(\partial)
\big)_{1\leq i\leq r_1<j\leq r}
\,,\\
& W_3(\partial)
=
\big(
W_{ji}(\partial)
\big)_{1\leq j\leq r_1<i\leq r}
\,\,,\,\,\,\,
W_4(\partial)
=
\big(
W_{ji}(\partial)
\big)_{r_1<i,j\leq r}
\,.
\end{split}
\end{equation}
Then, by \cite[Prop.4.2]{DSKVnew} (cf. formula \eqref{eq:spec-quasidet}),
we can rewrite equation \eqref{eq:L1-explicit}
as the following explicit formula for the operator $L_1(\partial)$:
\begin{equation}\label{eq:L1-explicit2}
L_1(\partial)
=
-\id_{r_1}(-\partial)^{p_1}+W_1(\partial)
-W_2(\partial)(-(-\partial)^{q}+W_4(\partial))^{-1}W_3(\partial)
\,,
\end{equation}
where $q=(p_{r_1+1}\geq\dots\geq p_r>0)$ is the partition of $N-r_1p_1$,
obtained by removing from the partition $p$ all the maximal parts.

\section{Summary: 
explicit generators and \texorpdfstring{$\lambda$}{lambda}-brackets for the
\texorpdfstring{$\mc W$}{W}-algebra,
and explicit algorithm for the associated bi-Hamiltonian hierarchy}
\label{sec:6}

Let $f\in\mf{gl}_N$ be a non-zero nilpotent element,
and let $S\in\mf g_d$ be a non-zero element of maximal degree
with respect to the Dynkin grading \eqref{eq:grading} of $\mf{gl}_N$, associated to $f$.
We have the corresponding pencil of Poisson vertex algebras $\mc W_\epsilon(\mf{gl}_N,f,S)$, $\epsilon\in\mb F$,
defined in Section \ref{sec:3.1}.
We summarize below the main results of the paper.

First, as a differential algebra, the $\mc W$-algebra $\mc W(\mf{gl}_N,f)\subset\mc V(\mf g_{\leq\frac12})$
is the algebra of differential polynomials on the generators $\{w_{ij;k}\}$,
parametrized by $1\leq i,j\leq r$, $0\leq k\leq\min\{p_i,p_j\}-1$.
This set of generators is related to the choice of the subspace $U\subset\mf g$ complementary to $[f,\mf g]$
made in Section \ref{sec:5.1}.
Theorem \ref{thm:L1-explicit}
provides a method of computing explicitly all these generators $w_{ij;k}$,
which is obtained by combining equations \eqref{eq:La} and \eqref{eq:L1-explicit} for the operator $L_1(\partial)$
given by
\begin{equation}\label{eq:generators}
\begin{split}
& L_1(\partial) =
|-(-\partial)^p\id_r+W(\partial)|_{I_{rr_1}J_{r_1r}} \\
& =
|\partial\id_N+f+\pi_{\leq\frac12}Q|_{I_1J_1}
\,\in\Mat_{r_1\times r_1}\mc V(\mf g_{\leq\frac12})((\partial^{-1}))
\,,
\end{split}
\end{equation}
where $W(\partial)$ is the matrix (cf. \eqref{eq:Wij} and \eqref{eq:Wd})
$$
W(\partial)
=
\Bigg(
\sum_{k=0}^{\min\{p_i,p_j\}-1}w_{ji;k}(-\partial)^k
\Bigg)_{1\leq i,j\leq r}
\,,
$$
$(-\partial)^p$ is the diagonal matrix with diagonal entries $(-\partial)^{p_i}$, $i=1,\dots,r$,
the matrices $I_{rr_1}$ and $J_{r_1r}$ are as in \eqref{eq:ENM} and \eqref{eq:EMN},
and the matrices $I_1,J_1$ are as in \eqref{eq:factor1}.
Equation \eqref{eq:generators}
defines uniquely the generators $w_{ij;k}$ as elements of the differential algebra $\mc V(\mf g_{\leq\frac12})$
if it is combined with the additional information that $w_{ij;k}-f_{ij;k}$
lies in the differential ideal $\langle U^\perp\rangle\subset\mc V(\mf g_{\leq\frac12})$
generated by $U^\perp$ (cf. Theorem \ref{thm:structure-W} and equations \eqref{eq:basisgf} and \eqref{eq:basisW}).

Furthermore, 
we have a method for computing explicitly 
the $0$-th $\lambda$-bracket $\{\cdot\,_\lambda\,\cdot\}^{\mc W}_0$
between all the generators $w_{ij;k}$'s of the $\mc W$-algebra.
This is provided by the Adler identity \eqref{eq:adler}
for the operator $L_1(\partial)\in\Mat_{r_1\times r_1}\mc W((\partial^{-1}))$
defined by the first equation in \eqref{eq:generators}, namely
\begin{equation}\label{eq:adler-L1}
\begin{split}
\{(L_1)_{ij}(z)_\lambda (L_1)_{hk}(w)\}^{\mc W}_0
& = (L_1)_{hj}(w+\lambda+\partial)\iota_z(z\!-\!w\!-\!\lambda\!-\!\partial)^{-1}((L_1)_{ik})^*(\lambda-z)
\\
& - (L_1)_{hj}(z)\iota_z(z\!-\!w\!-\!\lambda\!-\!\partial)^{-1}(L_1)_{ik}(w)
\,.
\end{split}
\end{equation}
Equation \eqref{eq:adler-L1} is an implicit formula for all the $\lambda$-brackets
between the generators $w_{ij;k}$.
Similarly,
we have an implicit formula for the $1$-st $\lambda$-bracket $\{\cdot\,_\lambda\,\cdot\}^{\mc W}_1$,
depending on the choice of the element $S\in\mf g_d$,
obtained by the condition that the operator $L_1(\partial)+\epsilon\bar S$
is of Adler type with respect to the $\lambda$-bracket
$\{\cdot\,_\lambda\,\cdot\}^{\mc W}_\epsilon=\{\cdot\,_\lambda\,\cdot\}^{\mc W}_0+\epsilon\{\cdot\,_\lambda\,\cdot\}^{\mc W}_1$,
for every $\epsilon\in\mb F$,
where $S\in\mf g_d$ and $\bar S\in\Mat_{r_1\times r_1}\mb F$ are related by \eqref{eq:s0}:
\begin{equation}\label{eq:bi-adler-L1}
\begin{split}
\{(L_1)_{ij}(z)_\lambda (L_1)_{hk}(w)\}^{\mc W}_1
& = 
\bar S_{ik}
\iota_z(z\!-\!w\!-\!\lambda)^{-1}
\big(
(L_1)_{hj}(w+\lambda)
- (L_1)_{hj}(z)
\big)
\\
& +
\bar S_{hj}
\iota_z(z\!-\!w\!-\!\lambda\!-\!\partial)^{-1}
\big(
((L_1)_{ik})^*(\lambda-z)-(L_1)_{ik}(w)
\big)
\,.
\end{split}
\end{equation}

Finally, 
we have the following algorithm to construct an integrable hierarchy of bi-Hamiltonian equations
for the bi-Poisson structure of the family $\mc W_\epsilon(\mf{gl}_N,f,S)$, $\epsilon\in\mb F$:

\begin{enumerate}[1.]
\item
Let $p=(p_1,\dots,p_r)$, with $p_1\geq\dots\geq p_r\ge1$,
be the partition of $N$ associated to the Jordan form of $f$.
Let $r_1\leq r$ be the multiplicity of the largest part $p_1$.
As a differential algebra,
the affine $\mc W$-algebra is
$$
\mc W:=\mc W(\mf{gl}_N,f)
\simeq
\mb F[w_{ij;k}^{(n)}\,\big|\, 1\leq i,j\leq r,\,0\leq k\leq\min\{p_i,p_j\}-1,\,n\in\mb Z_+\big]\,,
$$
the algebra of differential polynomials in the variables $w_{ij;k}$.
\item
Let $W_{ji}(\partial)=\sum_{k=0}^{\min\{p_i,p_j\}-1}w_{ji;k}(-\partial)^k\in\mc W[\partial]$,
for $1\leq i,j\leq r$,
and let 
$W_1(\partial)=\big(W_{ji}(\partial)\big)_{1\leq i,j\leq r_1}$,
$W_2(\partial)=\big(W_{ji}(\partial)\big)_{1\leq i\leq r_1<j\leq r}$,
$W_3(\partial)=\big(W_{ji}(\partial)\big)_{1\leq j\leq r_1<i\leq r}$,
$W_4(\partial)=\big(W_{ji}(\partial)\big)_{r_1<i,j\leq r}$.
Denote by $(-\partial)^q$ the $(r-r_1)\times(r-r_1)$ diagonal matrix
with diagonal entries $(-\partial)^{p_i}$, $r_1<i\leq r$.
Then
\begin{equation}\label{eq:L1-explicit2-b}
L_1(\partial)
=
-\id_{r_1}(-\partial)^{p_1}+W_1(\partial)
-W_2(\partial)(-(-\partial)^q+W_4(\partial))^{-1}W_3(\partial)
\end{equation}
is an $r_1\times r_1$-matrix pseudodifferential operator
of $\bar S$-Adler type for every $\bar S\in\Mat_{r_1\times r_1}\mb F$.
More precisely,
if $S\in\mf g_d$ and $\bar S\in\Mat_{r_1\times r_1}\mb F$
are related by \eqref{eq:s0},
then $L_1(\partial)$ is of $\bar S$-Adler type with respect to the compatible $\lambda$-brackets
of the family of PVAs $\mc W_\epsilon(\mf{gl}_N,f,S)$, $\epsilon\in\mb F$.
\item
Let $\bar r\leq r_1$ be the rank of $\bar S$,
and let $\bar S=\bar I\bar J$ be a factorization of $\bar S$,
with $\bar I\in\Mat_{r_1\times\bar r}\mb F$ and $\bar J\in\Mat_{\bar r\times r_1}\mb F$.
Then the generalized quasideterminant
$$
L(\partial)=|L_1(\partial)|_{\bar I\bar J}
$$
is an $r\times r$ matrix pseudodifferential operator (with entries of coefficients in the field of fractions $\mc K$ of $\mc W$)
of bi-Adler type with respect to the bi-Poisson structure 
of the family $\mc W_\epsilon(\mf{gl}_N,f,S)$, $\epsilon\in\mb F$.
\item
The Hamiltonian densities
$$
h_{0,B}=0
\,\,,\,\,\,\, 
h_{n,B}=-\frac{K}{n}\Res_\partial\tr B(\partial)^n
\,\,,\,\,\,\, n\neq0
\,, 
$$
indexed by $n\in\mb Z_+$
and $B(\partial)\in\Mat_{r\times r}\mc K((\partial^{-1}))$ 
such that $B(\partial)^K=L(\partial)$, $K\geq1$,
are in involution with respect to the $\lambda$-brackets
of the family $\mc W_\epsilon(\mf{gl}_N,f,S)$, $\epsilon\in\mb F$:
$$
\{\tint h_{m,B},\tint h_{n,C}\}_0^{\mc W}=\{\tint h_{m,B},\tint h_{n,C}\}_1^{\mc W}=0
\,\,,\,\,\,\,
m,n\in\mb Z_+,\,B,C \,\text{ roots of }\, L(\partial)\,,
$$
and they satisfy the generalized Lenard-Magri scheme 
on the bi-PVA subalgebra $\mc W_1\subset\mc K$
generated by the coefficients of $L(\partial)$:
$$
\{\tint h_{n,B},w\}_0^{\mc W}=\{\tint h_{n+K,B},w\}_1^{\mc W}
\,\,,\,\,\,\,
\text{ for }\,w\in\mc W_1,\,n\in\mb Z_+
\,\text{ and }\, B(\partial)^K=L(\partial)\,.
$$
Hence, we get the corresponding hierarchy of Hamiltonian equations
$$
\frac{dw}{dt_{n,B}}
=
\{\tint h_{n,B},w\}_0
\,\,,\,\,\,\,w\in\mc W
$$
(bi-Hamiltonian on $\mc W_1$),
which is integrable, provided that the Hamiltonian functionals $\tint h_{n,B}$
span an infinite-dimensional space.
\end{enumerate}

\section{Examples}\label{sec:7}

In this section we show how to apply the methods described in the present paper
by working out in all detail a few examples:
the case of the principal nilpotent element (cf. Section \ref{sec:7.1}),
of a rectangular nilpotent element (cf. Section \ref{sec:7.2}),
of a short nilpotent element (cf. Section \ref{sec:7.3}),
of a minimal nilpotent element (cf. Section \ref{sec:7.4}),
and of a vector and matrix constrained nilpotent element (cf. Section \ref{sec:7.5}).
We find explicit expressions for the generators $w_{ij;k}$ of the $\mc W$-algebras
and their $\lambda$-brackets,
and we describe the associated bi-Hamiltonian integrable hierarchy
for some choice of the element $S\in\mf g_d$.
In all cases we compare our methods and results with the the previous known formulas
which appeared in literature.

\subsection{Example 1: principal nilpotent $f$ and $N$-th KdV hierarchy}\label{sec:7.1}

Consider the trivial partition $p=N$,
corresponding to the principal nilpotent element 
$f^{\text{pr}}=\sum_{i=1}^{N-1}E_{i+1,i}\in\mf{gl}_N$.
For this partition $r=r_1=1$.
Hence, formula \eqref{eq:L1-explicit2-b} gives
(denoting $w_{11;k}=w_k$)
\begin{equation}\label{eq:L1-principal}
L_1(\partial)
=
-(-\partial)^N
+\sum_{k=0}^{N-1}w_k(-\partial)^k
\,,
\end{equation}
the ``generic'' differential operator of order $N$ (which in \cite{DSKVnew} we denoted, up to a sign, by $L_{(N)}(\partial)$),
over the algebra of differential polynomials $\mc W=\mb F[w_k^{(n)}\,|\,k=0,\dots,N-1,\,n\in\mb Z_+]$.

The generators $w_k$, $k=0,\dots,N-1$,
can be computed explicitly, as elements of $\mc V(\mf g_{\leq\frac12})$,
by equation \eqref{eq:generators}.
We get that
\begin{equation}\label{eq:principal-gener}
\begin{split}
L_1(\partial)
=|\id_N\partial+f+\pi_{\leq\frac12}Q|_{1N}
=
q_{N1}-
\left(\begin{array}{llll}
\partial+q_{11}&q_{21}&\dots&q_{N-1,1}
\end{array}\right) \\
\circ
\left(\begin{array}{cccll}
1&\partial\!+\!q_{22}&q_{32}&\dots&q_{N-1,2} \\
0&1&\ddots&\ddots&\vdots \\
\vdots&\ddots&\ddots&\ddots&q_{N-1,N-2} \\
\vdots&&\ddots&\ddots&\partial\!+\!q_{N-1,N-1} \\
0&\dots&\dots&0&1
\end{array}\right)^{-1}
\circ
\left(\begin{array}{l}
q_{N2} \\ \vdots \\ q_{N,N-1} \\ \partial+q_{NN}
\end{array}\right)
\,.
\end{split}
\end{equation}
Here we used the usual formula for the quasideterminant of a matrix,
cf. \cite[Prop.4.2]{DSKVnew}.
We can expand the inverse matrix in the RHS of \eqref{eq:principal-gener}
in geometric series, to get the following more explicit equation
for all the generators $w_k$ of the $\mc W$-algebra $\mc W(\mf{gl}_N,f^{\text{pr}})\subset\mc V(\mf g_{\leq\frac12})$:
\begin{equation}\label{eq:principal-gener-explicit}
\begin{split}
& -(-z)^N+\sum_{k=1}^{N-1}w_k(-z)^k 
=
q_{N1}+
\sum_{s=1}^{N-1}(-1)^s \\
& \sum_{2\leq h_1<\dots<h_s\leq N}
(\delta_{h_1-1,1}(z+\partial)+q_{h_1-1,1})
(\delta_{h_2-1,h_1}(z+\partial)+q_{h_2-1,h_1})
\dots \\
& \qquad\qquad\dots
(\delta_{h_s-1,h_{s-1}}(z+\partial)+q_{h_s-1,h_{s-1}})
(\delta_{N,h_s}z+q_{N,h_s})
\,.
\end{split}
\end{equation}
For example, $w_{N-1}=\tr Q=q_{11}+\dots+q_{NN}$.
This formula for the generators of $\mc W(\mf{gl}_N,f^{\text{pr}})$
agrees with the results in \cite{MR15}.

Furthermore, 
the compatible $\lambda$-brackets among the generators $w_k$, $k=0,\dots,N-1$, 
of the $\mc W$-algebra $\mc W(\mf{gl}_N,f^{\text{pr}})$ 
are obtained by the Adler identities \eqref{eq:adler-L1} and \eqref{eq:bi-adler-L1}.
They are
\begin{equation}\label{eq:20150911-1}
\begin{split}
& \{{w_i}_\lambda{w_j}\}^{\mc W}_0
=
\sum_{n=0}^{N-i-1}
\sum_{a=\max\{0,n-j\}}^{N+n-j}
\Bigg(
\binom{n}{a}(-1)^{a}w_{n+i+1}(\lambda+\partial)^{a}w_{j+a-n} \\
& +\!\!\!
\sum_{b=0}^{N-n-i-1}
\!\!\!
\binom{j+a}{a}
\binom{i\!+\!n\!+\!b\!+\!1}{b}
(-1)^{a+1}
w_{j+a-n}(\lambda+\partial)^{a+b}w_{i+n+b+1}
\Bigg)\,,
\end{split}
\end{equation}
and
$$
\{{w_i}_\lambda{w_j}\}^{\mc W}_1
=
\sum_{n=0}^{N-i-j-1}
\bigg(
\binom{n+j}{j}(-\lambda)^n
-
\binom{n+i}{i}(\lambda+\partial)^n
\bigg)
w_{i+j+n+1}
\,,
$$
where, in the RHS of both formulas, we let $w_N=-1$.
These formulas agree
with the corresponding formulas in \cite[Sec.2.4]{DSKV15},
after the change of variable $w_i=(-1)^iu_{-i-1}$,
and up to the overall sign related to the different choice of sign in the definition of Adler type operators.

The corresponding integrable hierarchy is the $N$-th Gelfand-Dickey hierarchy \cite{GD76}
$$
\frac{dL_1(\partial)}{dt_k}
=
[L_1(\partial)^{\frac kN}_+,L_1(\partial)]
\,,\,\, k\in\mb Z_+\,.
$$
It is easy to show
that these equations with $k\notin N\mb Z_+$ are linearly independent, see \cite[Sec.3.2]{DSKV15}.

\subsection{Example 2: rectangular nilpotent and matrix $n$-th KdV hierarchy}\label{sec:7.2}

Consider the partition $p=(p_1,\dots,p_1)$ of $N$,
consisting of $r_1$ equal parts of size $p_1$.
It corresponds to the so called rectangular nilpotent element $f$.
For this choice \eqref{eq:L1-explicit2} becomes
\begin{equation}\label{eq:L1-rectangular}
L_1(\partial)
=
-\id_{r_1}(-\partial)^{p_1}
+\sum_{k=0}^{p_1-1}W_k(-\partial)^k
\,\,,\,\,\,\,
W_k=\big(w_{ji;k}\big)_{1\leq i,j\leq r_1}\in\Mat_{r_1\times r_1}\mc W
\,,
\end{equation}
the ``generic'' $r_1\times r_1$ matrix differential operator of order $p_1$ 
(which, up to a sign, in \cite{DSKVnew} we denoted by $L_{(p_1r_1)}(\partial)$, cf. \cite[Ex.3.5]{DSKVnew}).

The generators $w_{ji;k}$, $1\leq i,j\leq r_1$, $0\leq k\leq p_1-1$,
can be computed explicitly, as elements of $\mc V(\mf g_{\leq\frac12})$,
by equation \eqref{eq:generators}.
We identify 
\begin{equation}\label{eq:rectangular-factors}
\Mat_{N\times N}\mb F
\simeq
\Mat_{p_1\times p_1}\mb F\otimes\Mat_{r_1\times r_1}\mb F
\,,
\end{equation}
by mapping $E_{(ih),(jk)}\mapsto E_{hk}\otimes E_{ij}$.
Under this identification, we have
\begin{equation*}
\begin{split}
& \id_N\mapsto\id_{p_1}\otimes\id_{r_1}
\,\,,\,\,\,\,
f\mapsto \sum_{k=1}^{p_1-1}E_{k+1,k}\otimes \id_{r_1}
\,,\\
& \pi_{\leq\frac12}Q\mapsto
\sum_{i,j=1}^{r_1}\sum_{1\leq h\leq k\leq p_1}q_{(jk),(ih)}E_{hk}\otimes E_{ij}
\,.
\end{split}
\end{equation*}
Hence, according to \eqref{eq:generators}, we have, by the usual formulas for quasideterminants
(cf. \cite[Prop.4.2]{DSKVnew}),
\begin{equation}\label{eq:rectangular-gener}
\begin{split}
& L_1(\partial)
=
\Big|
(\id_{p_1}\otimes\id_{r_1})\partial
+\sum_{k=1}^{p_1-1}E_{k+1,k}\otimes\id_{r_1}
+\sum_{i,j=1}^{r_1}\sum_{1\leq h\leq k\leq p_1}q_{(jk),(ih)}E_{hk}\otimes E_{ij}
\Big|_{I_1J_1} \\
& =\!
\sum_{i,j=1}^{r_1}
q_{(jp_1),(i1)}E_{ij}
-
\Bigg(\!
\sum_{i,j=1}^{r_1}
\left(\begin{array}{llll}
\!\!\delta_{ij}\partial\!+\!q_{(j1),(i1)}&q_{(j2),(i1)}&\dots&q_{(j,p_1-1),(i1)}\!\!
\end{array}\right)\otimes E_{ij}
\!\Bigg)
\\
& \circ
\Bigg(\!
\id_{p_1-1}\!\otimes\!\id_{r_1}+
\sum_{i,j=1}^{r_1}
\left(\begin{array}{rccl}
0&\delta_{ij}\partial\!\!+\!\!q_{(j2),(i2)}&\dots&q_{(j,p_1-1),(i2)} \\
\vdots&\ddots&\ddots&\vdots \\
\vdots&&\ddots&\delta_{ij}\partial\!\!+\!\!q_{(j,p_1-1),(i,p_1-1)} \\
0&\dots&\dots&0
\end{array}\right)\!\otimes\! E_{ij}
\!\Bigg)^{-1} \\
& \circ
\Bigg(
\sum_{i,j=1}^{r_1}
\left(\begin{array}{l}
q_{(jp_1),(i2)} \\ \vdots \\ q_{(jp_1),(i,p_1-1)} \\ \delta_{ij}\partial+q_{(jp_1),(ip_1)}
\end{array}\right)
\otimes E_{ij}
\Bigg)
\,.
\end{split}
\end{equation}
As we did in Section \ref{sec:7.1},
we expand the inverse matrix in the RHS of \eqref{eq:rectangular-gener}
in geometric series, to get a more explicit equation
for all the generators $w_{ji;k}$, $1\leq i,j\leq r_1$, $0\leq k\leq p_1-1$,
of the $\mc W$-algebra $\mc W(\mf{gl}_N,f^{\text{rect}})\subset\mc V(\mf g_{\leq\frac12})$.
Equating the $(ij)$-entry of the RHS's of equations \eqref{eq:L1-rectangular} and \eqref{eq:rectangular-gener},
we get
\begin{equation}\label{eq:rectangular-gener-explicit}
\begin{split}
& -(-z)^{p_1}\delta_{ij}+\sum_{k=0}^{p_1-1}w_{ji;k}(-z)^k 
=
q_{(jp_1),(i1)}+
\sum_{s=1}^{p_1-1}(-1)^s
\sum_{i_1,\dots,i_s=1}^{r_1}
\sum_{2\leq h_1<\dots<h_s\leq p_1} \\
& (\delta_{i_1,i}\delta_{h_1-1,1}(z+\partial)+q_{(i_1,h_1-1),(i1)})
(\delta_{i_2,i_1}\delta_{h_2-1,h_1}(z+\partial)+q_{(i_2,h_2-1),(i_1h_1)})
\dots \\
& \dots
(\delta_{i_s,i_{s-1}}\delta_{h_s-1,h_{s-1}}(z+\partial)+q_{(i_s,h_s-1),(i_{s-1},h_{s-1})})
(\delta_{i_s,j}\delta_{p_1,h_s}z+q_{(jp_1),(i_sh_s)})
\,.
\end{split}
\end{equation}
For example, $w_{ji;p_1-1}=q_{(j1),(i1)}+\dots+q_{(jp_1),(ip_1)}$.

Furthermore, 
we can compute the compatible $\lambda$-brackets between the generators $w_{ji;k}$, 
$1\leq i,j\leq r_1$, $k=0,\dots,p_1-1$, 
of the $\mc W$-algebra $\mc W(\mf{gl}_N,f^{\text{pr}})$.
By \eqref{eq:L1-rectangular}, we have
$$
\{{w_{\beta\alpha;h}}_\lambda{w_{\delta\gamma;k}}\}^{\mc W}_{\epsilon}
=
\Res_z\Res_w(-1)^{h+k}z^{-h-1}w^{-k-1}
\{(L_1)_{\alpha\beta}(z)_\lambda(L_1)_{\gamma\delta}(w)\}^{\mc W}_{\epsilon}
\,.
$$
Hence, by the Adler identities \eqref{eq:adler-L1} 
we get the $0$-th $\lambda$-bracket
($1\leq\alpha,\beta,\gamma,\delta\leq r_1$, $0\leq h,k\leq p_1-1$):
\begin{equation}\label{eq:rectangular-0}
\begin{split}
& \{{w_{\beta\alpha;h}}_\lambda{w_{\delta\gamma;k}}\}^{\mc W}_0 
=
\sum_{n=0}^{p_1-h-1}
\sum_{a=\max\{0,n-k\}}^{p_1+n-k}
\Bigg(
\binom{n}{a}(-1)^{a}w_{\beta\gamma;h+n+1}(\lambda+\partial)^{a}w_{\delta\alpha;k+a-n} \\
& + \sum_{b=0}^{p_1-n-h-1}
\!\!\!
\binom{h\!+\!n\!+\!b\!+\!1}{b}\binom{k+a}{a}(-1)^{a+1}
w_{\beta\gamma;k+a-n}(\lambda+\partial)^{a+b}w_{\delta\alpha;h+n+b+1}
\Bigg)\,,
\end{split}
\end{equation}
where, in the RHS, we let $w_{ji;p_1}=-\delta_{i,j}$.
To compute the $1$-st Poisson structure of the family of PVAs $\mc W_\epsilon(\mf{gl}_N,f,S)$, $\epsilon\in\mb F$,
we fix an element $S\in\mf g_d$ or, equivalently, 
a matrix $\bar S\in\Mat_{r_1\times r_1}\mb F$.
The corresponding $1$-st $\lambda$-bracket is obtained by \eqref{eq:bi-adler-L1}:
\begin{equation}\label{eq:rectangular-1}
\begin{split}
& \{{w_{\beta\alpha;h}}_\lambda{w_{\delta\gamma;k}}\}^{\mc W}_1
=
\sum_{n=0}^{p_1-h-k-1}
\binom{n+k}{k}
\bar S_{\alpha\delta}w_{\beta\gamma;h+k+n+1}
(-\lambda)^n \\
& -
\sum_{n=0}^{p_1-h-k-1}
\binom{n+h}{h}(\lambda+\partial)^n
\bar S_{\gamma\beta}w_{\delta\alpha;h+k+n+1}
\,.
\end{split}
\end{equation}
These formulas agree
with the corresponding formulas in \cite[Sec.4.2]{DSKV15}
(but there we only considered the case $\bar S=\id_{r_1}$).

The integrable hierarchy corresponding to the Adler type operator \eqref{eq:L1-rectangular}
is the $r_1\times r_1$ matrix $p_1$-th KdV hierarchy.
This is also a well known hierarchy, see e.g. \cite{DSKV15}
for the bi-Poisson structure and the bi-Hamiltonian equations,
similar to the scalar case $r_1=1$, discussed in the previous section.

However, for $r_1>1$ there are more possibilities for constructing
bi-Adler type operators, choosing different $S\in\mf g_d$:
if $\bar S\in\Mat_{r_1\times r_1}\mb F$ is the matrix corresponding to $S$ via \eqref{eq:s0}
and $\bar S=\bar I\bar J$ is its canonical factorization,
with $\bar I\in\Mat_{r_1\times\bar r}\mb F$ and $\bar J\in\Mat_{\bar r\times r_1}\mb F$
($\bar r$ is the rank of $\bar S$),
then $L(\partial)=|L_1(\partial)|_{\bar I\bar J}$ is the operator of bi-Adler type
with respect to the compatible $\lambda$-brackets of the family of PVAs $\mc W_\epsilon(\mf{gl}_N,f,S)$, $\epsilon\in\mb F$.
If $\bar S$ is non-degenerate (i.e. $\bar r=r_1$), then $L(\partial)=S^{-1}L_1(\partial)$
(up to conjugation by a constant non-degenerate matrix),
while if $\bar S$ is degenerate, then $L(\partial)$ is a non-trivial generalized quasideterminant.

\subsection{Example 3: short nilpotent $f$}\label{sec:7.3}

The short nilpotent element $f$ in $\mf g=\mf{gl}_N$, with even $N$,
is associated to the partition $p=(2,\dots,2)$, consisting of $r=\frac{N}{2}$ parts equal to $2$.
So, it is the special case of Example 2 from Section \ref{sec:7.2}
when $p_1=2$.
In this case, the Adler type operator $L_1(\partial)\in\Mat_{r\times r}\mc W[\partial]$ 
in \eqref{eq:L1-explicit2} has the form
\begin{equation}\label{eq:L1-short}
L_1(\partial)
=
-\id_{r}\partial^2-W_1\partial+W_0
\,\,,\,\,\,\,
W_k=\big(w_{ji;k}\big)_{1\leq i,j\leq r}\in\Mat_{r\times r}\mc W
\,,
\end{equation}

Equation \eqref{eq:rectangular-gener-explicit} for the generators $w_{ji;k}$ of $\mc W(\mf g,f)$,
$i,j=1,\dots,r$, $k=0,1$, as elements of the differential algebra $\mc V(\mf g_{\leq\frac12})$ gives,
in this case,
\begin{equation}\label{eq:short-gener-explicit}
\begin{split}
& w_{ji;1}
=
q_{(j1),(i1)}+q_{(j2),(i2)} \,,\\
& w_{ji;0}
=
q_{(j2),(i1)}-q_{(j2),(i2)}^\prime
-\sum_{k=1}^rq_{(k1),(i1)}q_{(j2),(k2)}
\,.
\end{split}
\end{equation}

In order to compare these generators with the formulas
for the generators of $\mc W(\mf g,f)$ found in \cite[Th.4.2]{DSKV14a},
we first introduce some notation.
For $h,k=1,2$, we define the maps 
$\bar q_{kh}:\,\mf{gl}_r\hookrightarrow\mf{gl}_N$
given by
$$
\bar q_{kh}(q_{ji})=q_{(jk),(ih)}
\,\,,\,\,\,\,i,j=1,\dots,r\,.
$$
We also denote $\bar\id=\bar q_{11}+\bar q_{22}$ and $\bar h=\bar q_{11}-\bar q_{22}$.
With this notation, we have
$$
\mf g_{-1}=\bar q_{21}(\mf{gl}_r)
\,\,,\,\,\,\,
\mf g_{1}=\bar q_{12}(\mf{gl}_r)
\,\,,\,\,\,\,
\mf g^f_0=\bar\id(\mf{gl}_r)
\,\,,\,\,\,\,
[e,\mf g_{-1}]=[f,\mf g_1]=\bar h(\mf{gl}_r)
\,.
$$

Recall that, in the present paper, we consider the subspace $U\subset\mf g$ complementary to $[f,\mf g]$
as in \eqref{eq:basisU}.
We have the corresponding direct sum decomposition
$\mf g=\mf g^f\oplus U^{\perp}$, where
$$
\mf g^f
=
\bar q_{21}(\mf{gl}_r)\oplus\bar\id(\mf{gl}_r)
\,\text{ and }\,
U^{\perp}
=
\bar q_{12}(\mf{gl}_r)\oplus\bar q_{22}(\mf{gl}_r)
\,.
$$
The corresponding quotient map $\mf g\to\mf g^f$
induces the differential algebra homomorphism $\pi:\,\mc V(\mf g_{\leq\frac12})\to\mc V(\mf g^f)$
which, by Theorem \ref{thm:structure-W},
restricts to an isomorphism $\mc W(\mf g,f)\stackrel{\sim}{\rightarrow}\mc V(\mf g^f)$,
and we denote by $w:\,\mc V(\mf g^f)\stackrel{\sim}{\rightarrow}\mc W(\mf g,f)$
the inverse map.
It will be convenient to denote $w_1=w\circ\bar\id$ and $w_0=w\circ\bar q_{21}:\,\mf{gl}_r\to\mc W$,
so that 
$$
w_k(q_{ji})=w_{ji;k}
\,\text{ for every } i,j=1,\dots,r\,,\,\,\,\,k=0,1
\,.
$$
On the other hand,
in \cite{DSKV14a}, as a complementary subspace to $[f,\mf g]$ in $\mf g$ we chose $\mf g^e$,
its orthocomplement being $[e,\mf g]$.
We have the corresponding vector space decompositions
$$
\mf g=\mf g^f\oplus[e,\mf g]
\,\,\text{ and }\,\,
[e,\mf g]=\mf g_1\oplus[e,\mf g_{-1}]
=\bar q_{12}(\mf{gl}_r)\oplus\bar h(\mf{gl}_r)
\,.
$$
We denote by $\sharp$ the corresponding quotient map $\mf g\twoheadrightarrow\mf g^f$.
In particular,
\begin{equation}\label{eq:sharp}
\sharp\circ\bar q_{21}=\bar q_{21}
\,\,,\,\,\,\,
\sharp\circ\bar q_{11}=\sharp\circ\bar q_{22}=\frac12\bar\id
\,\,,\,\,\,\,
\sharp\circ\bar q_{12}=0
\,.
\end{equation}
Again by Theorem \ref{thm:structure-W},
the corresponding homomorphism of differential algebras $\mc V(\mf g_{\leq\frac12})\to\mc V(\mf g^f)$
restricts to an isomorphism $\mc W(\mf g,f)\stackrel{\sim}{\rightarrow}\mc V(\mf g^f)$,
and we denote by $\psi:\,\mc V(\mf g^f)\stackrel{\sim}{\rightarrow}\mc W(\mf g,f)$
the inverse map.

In the present paper we consider the generators 
$w_{1}(a)=w\bar\id(a)$ and $w_0(a)=w\bar q_{21}(a)$, $a\in\mf{gl}_r$,
while in \cite{DSKV14a} we considered the alternative set of generators
$\psi_{1}(a)=\psi(\bar\id(a))$ 
and $\psi_{0}=\psi(\bar q_{21}(a))$, $a\in\mf{gl}_r$.
Formulas \eqref{eq:short-gener-explicit} for the generators $w_{ji;k}$ can be rewritten as
\begin{equation}\label{eq:short-gener-linear}
w_1(a)
=
\bar\id(a)
\,\,,\,\,\,\,
w_0(a)
=
\bar q_{21}(a)-\bar q_{22}(a)^\prime
-\sum_{h,k=1}^r\bar q_{11}(E_{kh}a)\bar q_{22}(E_{hk})
\,.
\end{equation}
Applying the maps $\sharp$ and $\psi$ to both equations in \eqref{eq:short-gener-linear},
and using \eqref{eq:sharp},
we find the relation between these generators and those in \cite{DSKV14a}:
\begin{equation}\label{eq:short-gen-compare}
\begin{split}
& w_1(a)
=
\psi_1(a)
=
\bar\id(a)
\,,\\
& w_0(a)
=
\psi_0(a)-\frac12\bar\id(a)^\prime
-\frac14\sum_{h,k=1}^r\bar\id(E_{kh}a)\bar\id(E_{hk})
\,.
\end{split}
\end{equation}
It is now straightforward to check that
equations \eqref{eq:short-gener-linear} and \eqref{eq:short-gen-compare}
agree with \cite[Eq.(4.4)]{DSKV14a}.

We can also compute the $\lambda$-brackets between the generators \eqref{eq:short-gener-linear}.
Equation \eqref{eq:rectangular-0} gives ($a,b\in\mf{gl}_r$)
\begin{equation}\label{eq:short-0}
\begin{split}
& \{w_1(a)_\lambda w_1(b)\}^{\mc W}_0
=
w_1([a,b])+2\tr(ab)\lambda
\,,\\
& \{w_1(a)_\lambda w_0(b)\}^{\mc W}_0
=
w_0([a,b])-w_1(ab)\lambda-\tr(ab)\lambda^2
\,,\\
& \{w_0(a)_\lambda w_0(b)\}^{\mc W}_0
=
\sum_{i,j=1}^r\Big(
w_1(aE_{ij})w_0(bE_{ji})
-w_0(aE_{ij})w_1(bE_{ji}) \\
&\qquad
+w_1(aE_{ij})(\lambda+\partial)w_1(bE_{ji})
\Big)
+(\lambda+\partial)w_0(ba)+w_0(ab)\lambda \\
&\qquad
+(\lambda+\partial)^2w_1(ba)-w_1(ab)\lambda^2
-\tr(ab)\lambda^3
\,.
\end{split}
\end{equation}
It is straightforward to check that equations \eqref{eq:short-0}
agree with the analogous equations in \cite[Thm.4.4]{DSKV14a},
if we take into account formula \eqref{eq:short-gen-compare}
for the change of generators.
Note that the coefficient $2$ in first equation of \eqref{eq:short-0}
comes from the fact that $\tr_{\mf{gl}_N}(\bar\id(a)\bar\id(b))=2\tr_{\mf{gl}_r}(ab)$.
Similarly, equation \eqref{eq:rectangular-1} gives
\begin{equation}\label{eq:short-1}
\begin{split}
& \{w_1(a)_\lambda w_1(b)\}^{\mc W}_1
=0
\,,\\
& \{w_1(a)_\lambda w_0(b)\}^{\mc W}_1
=\tr(\bar S[a,b])
\,,\\
& \{w_0(a)_\lambda w_0(b)\}^{\mc W}_1
=
w_1(a\bar Sb-b\bar Sa)+\tr(\bar S\{a,b\})\lambda
\,,
\end{split}
\end{equation}
where $\{a,b\}=ab+ba$ is the anticommutator in $\mf{gl}_r$.
Again, one can check that equations \eqref{eq:short-1}
agree with the analogous equations in \cite[Thm.4.4]{DSKV14a}.

\subsection{Example 4: minimal nilpotent $f$}\label{sec:7.4}

The minimal nilpotent element $f$ in $\mf g=\mf{gl}_N$
is associated to the partition $p=(2,1,\dots,1)$.
In this case, $L_1(\partial)$ is the following scalar pseudodifferential operator:
\begin{equation}\label{eq:L1-minimal}
L_1(\partial)
=
-\partial^2-w_{11;1}\partial+w_{11;0} \\
- w_{+1}(\id_{N-2}\partial+W_{++})^{-1}w_{1+}
\,,
\end{equation}
where
\begin{equation}\label{eq:minimal-notation}
\begin{split}
& w_{+1}
=
\left(\begin{array}{lll} w_{21;0} & \dots & w_{r1;0} \end{array}\right) 
\,,\\
& W_{++}
=
\left(\begin{array}{ccc} 
w_{22;0} & \dots & w_{r2;0} \\
\vdots & \ddots & \vdots \\
w_{2r;0} & \dots & w_{rr;0} 
\end{array}\right)
\,\,,\,\,\,\,
w_{1+}
=
\left(\begin{array}{l} w_{12;0} \\ \vdots \\ w_{1r;0} \end{array}\right)
\,.
\end{split}
\end{equation}

In order to find a formula for the generators of $\mc W(\mf g,f)$,
we need to compute $L_1(\partial)$ using the definition \eqref{eq:La}
and equate the result to \eqref{eq:L1-minimal}.
For our choice of $f$, \eqref{eq:La} becomes
\begin{equation}\label{eq:L1-minimal-def}
L_1(\partial)
=
\Big|
\left(\begin{array}{ccc}
\partial+q_{(11),(11)} & q_{(12),(11)} & q_{+(11)} \\
1 & \partial+q_{(12),(12)} & q_{+(12)} \\
q_{(11)+} & q_{(12)+} & \id_{N-2}\partial+Q_{++}
\end{array}\right)
\Big|_{12}
\,,
\end{equation}
where ($k=1,2$)
\begin{equation*}
\begin{split}
& q_{+(1k)}
=
\left(\begin{array}{lll} q_{(21),(1k)} & \dots & q_{(r1),(1k)} \end{array}\right) 
\,,\\
& Q_{++}
=
\left(\begin{array}{ccc} 
q_{(21),(21)} & \dots & q_{(r1),(21)} \\
\vdots & \ddots & \vdots \\
q_{(21),(r1)} & \dots & q_{(r1),(r1)} 
\end{array}\right)
\,\,,\,\,\,\,
q_{(1k)+}
=
\left(\begin{array}{l} q_{(1k),(21)} \\ \vdots \\ q_{(1k),(r1)} \end{array}\right)
\,.
\end{split}
\end{equation*}
We can compute the quasideterminant \eqref{eq:L1-minimal-def} by the usual formula, see \cite[Prop.4.2]{DSKVnew}.
As a result we get, after a straightforward computation,
\begin{equation}\label{eq:L1-minimal-gener}
\begin{split}
&\vphantom{\Big(}
L_1(\partial)
=
q_{(12),(11)}
-q_{+(11)}(\partial+Q_{++})^{-1}\circ q_{(12)+}
-\Big(\partial+q_{(11),(11)} \\
&\vphantom{\Big(}
-q_{+(11)}(\partial+Q_{++})^{-1}\circ q_{(11)+}\Big)\circ 
\Big(1-q_{+(12)}(\partial+Q_{++})^{-1}\circ q_{(11)+}\Big)^{-1}\circ \\
&\vphantom{\Big(}
\Big(\partial+q_{(12),(12)}-q_{+(12)}(\partial+Q_{++})^{-1}\circ q_{(12)+}\Big)
\,.
\end{split}
\end{equation}
In order to find the explicit formula for the generators, we need to equate \eqref{eq:L1-minimal}
and \eqref{eq:L1-minimal-gener}.
In fact, to find generators, it suffices to find the first few terms in the expansions
of the pseudodifferential operators  \eqref{eq:L1-minimal} and \eqref{eq:L1-minimal-gener}
and equate them.
From \eqref{eq:L1-minimal} we get
\begin{equation}\label{eq:L1-minimal-expansion}
L_1(\partial)
=
-\partial^2-w_{11;1}\partial+w_{11;0}
-w_{+1}w_{1+}\partial^{-1}
+\Big(w_{+1}w_{1+}^\prime+w_{+1}W_{++}w_{1+}\Big)\partial^{-2}+\dots
\,.
\end{equation}
Finding the expansion of \eqref{eq:L1-minimal-gener} up to order $\partial^{-2}$,
we find the expression of the generators.
This is a rather long computation of which we omit the details.
The answer is
\begin{equation}\label{eq:minimal-gener}
\begin{split}
&\vphantom{\Big(}
w_{11;1}=q_{(11),(11)}+q_{(12),(12)}+q_{+(12)}q_{(11)+} \,,\\
&\vphantom{\Big(}
w_{11;0}=q_{(12),(11)}+q_{+(11)}q_{(11)+}+q_{+(12)}q_{(12)+}
-q_{(12),(12)}^\prime-q_{(11),(11)}q_{(12),(12)} \\
&\vphantom{\Big(}\qquad 
-w_{11;1}q_{+(12)}q_{(11)+}
+q_{+(12)}Q_{++}q_{(11)+}-q_{+(12)}^\prime q_{(11)+} \,,\\
&\vphantom{\Big(}
w_{+1}=q_{+(11)}+q_{+(12)}Q_{++}-q_{+(12)}^\prime
-q_{(11),(11)}q_{+(12)}
-q_{+(12)}q_{(11)+}q_{+(12)} \,,\\
&\vphantom{\Big(}
w_{1+}=q_{(12)+}+Q_{++}q_{(11)+}+q_{(11)+}^\prime
-q_{(12),(12)}q_{(11)+}
-q_{(11)+}q_{+(12)}q_{(11)+} \,,\\
&\vphantom{\Big(}
W_{++}=Q_{++}-q_{(11)+}q_{+(12)} \,.
\end{split}
\end{equation}

As we did in the case of a short nilpotent in Section \ref{sec:7.3},
we want to compare these generators with the analogous formulas
in \cite{DSKV14a},
The $\ad x$-eigenspace decomposition \eqref{eq:grading} is, in this case,
\begin{equation*}
\begin{split}
& \mf g_{-1}=\mb F q_{(12),(11)}
\,,\,\,
\mf g_{-\frac12}= \Span\{ q_{(12),(j1)},\,q_{(j1),(11)} \}_{j=2}^{N-2} \,,\\
& \mf g_0=\mb F q_{(11),(11)}\oplus\mb F q_{(12),(12)}\oplus\Span\{ q_{(j1),(i1)} \}_{i,j=1}^{N-2} \,,\\
& \mf g_{+\frac12}= \Span\{ q_{(11),(j1)},\,q_{(j1),(12)} \}_{j=2}^{N-2}
\,,\,\,
\mf g_{+1}=\mb F q_{(11),(12)} 
\,.
\end{split}
\end{equation*}
Note that 
$$
\mf g^f=\mf g_{-1}\oplus\mf g_{-\frac12}\oplus\mb F\bar\id\oplus\Span\{ q_{(j1),(i1)} \}_{i,j=1}^{N-2}
\,,
$$
where $\bar\id=q_{(11),(11)}+q_{(12),(12)}$.
As complementary subspaces to $\mf g^f$ we have two different choices:
$$
U^\perp = \mb F q_{(12),(12)}\oplus\mf g_{+\frac12}\oplus\mf g_{+1}
\,,
$$
that we used in the present paper,
and 
$$
[e,\mf g] = \mb F q_{(12),(12)}\oplus\mf g_{+\frac12}\oplus\mf g_{+1}
\,,
$$
that we used in \cite{DSKV14a}.
Let us denote, as in Section \ref{sec:7.3},
by $\pi:\,\mf g\to\mf g^f$ the quotient map with kernel $U^\perp$,
which induces a differential algebra isomorphism $\mc W(\mf g,f)\simeq\mc V(\mf g^f)$,
whose inverse is the map $w$,
and by $\sharp:\,\mf g\to\mf g^f$ the quotient map with kernel $[e,\mf g]$,
which induces a different isomorphism $\mc W(\mf g,f)\simeq\mc V(\mf g^f)$,
whose inverse we denote by $\psi$.
Applying the maps $\sharp$ and $\psi$ to all formulas \eqref{eq:minimal-gener},
we express all the generators $w_{ji;k}=w(f_{ji;k})$ (cf. \eqref{eq:basisgf}-\eqref{eq:basisW})
in terms of the alternative generators $\psi_{ji;k}:=\psi(f_{ji;k})$.
We have
\begin{equation}\label{eq:minimal-gener-compare}
\begin{split}
& w_{11;1}
=
\psi(\bar\id)
\,,\,\,
w_{11;0}=\psi(f)-\frac12\psi(\bar\id)^\prime-\frac14\psi(\bar\id)^2
\,,\\
& w_{+1}=\psi(q_{+(11)})
\,,\,\,
w_{1+}=\psi(q_{(12)+})
\,,\,\,
W_{++}=\psi(Q_{++})
\,.
\end{split}
\end{equation}
It is now straightforward to check that
equations \eqref{eq:minimal-gener} and \eqref{eq:minimal-gener-compare}
agree with \cite[Thm.3.2]{DSKV14a}.

We can also compute the $0$-th and $1$-st $\lambda$-brackets 
$\{\cdot\,_\lambda\,\cdot\}^{\mc W}_0$ and $\{\cdot\,_\lambda\,\cdot\}^{\mc W}_1$
between generators,
either by using the explifit formulas \eqref{eq:minimal-gener} of the generators,
or by combining the expression \eqref{eq:L1-minimal} of $L_1(\partial)$
with the Adler identity \eqref{eq:adler-L1}
and the bi-Adler identity \eqref{eq:bi-adler-L1} respectively.
This is a long computation that we are going to make, in all detail,
in the next Section in the more general case of a ``vector constrained'' nilpotent element $f$.
We here report only the resulting formulas for the two PVA structures:
\begin{equation}\label{eq:minimal-0}
\begin{split}
& \{{w_{11;1}}_\lambda{w_{11;1}}\}^{\mc W}_0
=
2\lambda
\,,\\
& \{{w_{11;1}}_\lambda{w_{11;0}}\}^{\mc W}_0
=
-w_{11;1}\lambda-\lambda^2
\,,\,\,
\{{w_{11;0}}_\lambda{w_{11;1}}\}^{\mc W}_0
=
-(\lambda+\partial)w_{11;1}+\lambda^2
\,,\\
& \{{w_{11;0}}_\lambda{w_{11;0}}\}^{\mc W}_0
=
(\partial+2\lambda)w_{11;0}+w_{11;1}(\lambda+\partial)w_{11;1}+(\partial+2\lambda)\partial w_{11;1}-\lambda^3
\,,\\
& \{{w_{11;1}}_\lambda{w_{+1}}\}^{\mc W}_0
=
-\{{w_{+1}}_\lambda{w_{11;1}}\}^{\mc W}_0
=
-w_{+1}
\,,\\
& \{{w_{11;1}}_\lambda{w_{1+}}\}^{\mc W}_0
=
-\{{w_{1+}}_\lambda{w_{11;1}}\}^{\mc W}_0
=
w_{1+}
\,,\\
& \{{w_{11;0}}_\lambda{w_{+1}}\}^{\mc W}_0
=
(\partial+\lambda)w_{+1}-w_{+1}W_{++}+w_{11;1}w_{+1}
\,,\\
& \{{w_{+1}}_\lambda{w_{11;0}}\}^{\mc W}_0
=
w_{+1}\lambda+w_{+1}W_{++}-w_{11;1}w_{+1}
\,,\\
& \{{w_{11;0}}_\lambda{w_{1+}}\}^{\mc W}_0
=
(\partial+2\lambda)w_{1+}+W_{++}w_{1+}-w_{11;1}w_{1+}
\,,\\
& \{{w_{1+}}_\lambda{w_{11;0}}\}^{\mc W}_0
=
(\partial+2\lambda)w_{1+}-W_{++}w_{1+}+w_{11;1}w_{1+}
\,,\\
& \{{w_{+1}}_\lambda{w_{+1}}\}^{\mc W}_0
=
\{{w_{1+}}_\lambda{w_{1+}}\}^{\mc W}_0
=0 \,\\
& \{{w_{1+}}_\lambda{w_{+1}}\}^{\mc W}_0
=
-(\lambda+\partial+w_{11;1}-W_{++})(\lambda-W_{++})+w_{11;0}\id_{N-2}
\,,\\
& \{{w_{+1}^T}_\lambda{w_{1+}^T}\}^{\mc W}_0
=
(\lambda+\partial+W_{++}^T)(\lambda-w_{11;1}+W_{++}^T)-w_{11;0}\id_{N-2}
\,,\\
& \{{w_{11;k}}_\lambda{W_{++}}\}^{\mc W}_0
=
\{{W_{++}}_\lambda{w_{11;k}}\}^{\mc W}_0
=
0\,,\,\,k=0,1\,,\\
& \{{(w_{+1})_k}_\lambda{(W_{++})_{ij}}\}^{\mc W}_0
=-\delta_{ik}(w_{+1})_j
=-\{{(W_{++})_{ij}}_\lambda{(w_{+1})_k}\}^{\mc W}_0
\,,
\\
&
\{{(w_{1+})_k}_\lambda{(W_{++})_{ij}}\}^{\mc W}_0
=\delta_{kj}(w_{1+})_i
=-\{{(W_{++})_{ij}}_\lambda{(w_{1+})_k}\}^{\mc W}_0
\,,\\
& \{{(W_{++})_{ij}}_\lambda{(W_{++})_{hk}}\}^{\mc W}_0
=
\delta_{ki}(W_{++})_{hj}-\delta_{jh}(W_{++})_{ik}+\delta_{jh}\delta_{ik}\lambda
\,.
\end{split}
\end{equation}
and 
\begin{equation}\label{eq:minimal-1}
\begin{split}
& \{{w_{11;0}}_\lambda{w_{11;0}}\}^{\mc W}_1
=
2\lambda
\,,\\
& \{ {(w_{+1})_i}_{\lambda} (w_{1+})_j \}^{\mc W}_1
=-\{ {(w_{1+})_j}_{\lambda} (w_{+1})_i \}^{\mc W}_1
=-\delta_{ij}
\,\\
&\text{ all other } \lambda-\text{brackets of generators} = 0
\,.
\end{split}
\end{equation}
It is straightforward to check that equations \eqref{eq:minimal-0} and \eqref{eq:minimal-1}
agree with the analogous equations in \cite[Thm.3.4]{DSKV14a},
if we take into account formula \eqref{eq:minimal-gener-compare}
for the change of generators
and the opposite sign in the definition of Adler type operators.

\subsection{Example 5: vector and matrix constrained KP hierarchies}\label{sec:7.5}

Consider the partition $p=p_1+1+\dots+1$,
where the multiplicity of $1$ is $s=N-p_1$.
In this case \eqref{eq:L1-explicit2} becomes
the following scalar pseudodifferential operator
\begin{equation}\label{eq:L1-constraint}
L_1(\partial)
=
-(-\partial)^{p_1}
+\sum_{k=0}^{p_1-1}w_{11;k}(-\partial)^k
-w_{+1}(\partial+W_{++})^{-1}\circ w_{1+}
\,,
\end{equation}
where $w_{+1}$, $w_{1+}$ and $W_{++}$ are as in \eqref{eq:minimal-notation}.
In the last term of the RHS of \eqref{eq:L1-constraint}, $\partial$ stands for $\id_s\partial$.

It is possible, as we did in \eqref{eq:minimal-gener}
for the special case $p_1=2$,
to use the definition \eqref{eq:La} of $L_1(\partial)$ to find explicitly all the generators $w_{ji;k}$
of the $\mc W$-algebra as elements of $\mc V(\mf g_{\leq\frac12})$.
This, however, seems computationally too involved to be solved explicitly for every $p_1$.

Now we shall demonstrate how to use the Adler and bi-Adler identities \eqref{eq:adler-L1} and \eqref{eq:bi-adler-L1}
to find explicit formulas for the $0$-th and $1$-st Poisson structure of $\mc W_\epsilon(\mf g,f,S)$, $\epsilon\in\mb F$,
respectively.

First, we use the sesquilinearity and Leibniz rule axioms
to compute the LHS $\{L_1(z)_\lambda L_1(w)\}$
of both \eqref{eq:adler-L1} and \eqref{eq:bi-adler-L1},
where $\{\cdot\,_\lambda\,\cdot\}$
is either $\{\cdot\,_\lambda\,\cdot\}^{\mc W}_0$ or $\{\cdot\,_\lambda\,\cdot\}^{\mc W}_1$.
By \eqref{eq:L1-constraint} we have 
\begin{equation}\label{eq:constrained-eq1}
\begin{split}
& \{L_1(z)_\lambda L_1(w)\}
=
\sum_{h,k=0}^{p_1-1}(-z)^h(-w)^k\{{w_{11;h}}_\lambda{w_{11;k}}\} \\
& -\sum_{h=0}^{p_1-1}(-z)^h
\big\{{w_{11;h}}_\lambda {w_{+1}(w+\partial+W_{++})^{-1} w_{1+}} \big\} \\
& -\sum_{k=0}^{p_1-1}(-w)^k
\big\{ {w_{+1}(z+\partial+W_{++})^{-1} w_{1+}} _\lambda {w_{11;k}} \big\} \\
& +
\big\{ {w_{+1}(z+\partial+W_{++})^{-1} w_{1+}} _\lambda {w_{+1}(w+\partial+W_{++})^{-1} w_{1+}} \big\}
\,.
\end{split}
\end{equation}
In order to compute the $\lambda$-brackets in the RHS of \eqref{eq:constrained-eq1}
we use standard $\lambda$-bracket techniques.
In particular, applying \cite[Lem.2.3-2.5]{DSKVnew}, 
we compute the second term in the RHS of \eqref{eq:constrained-eq1}:
\begin{equation}\label{eq:constrained-eq2}
\begin{split}
& -\sum_{h=0}^{p_1-1}(-z)^h
\big\{{w_{11;h}}_\lambda {w_{+1}(w+\partial+W_{++})^{-1} w_{1+}} \big\} 
=
-\sum_{h=0}^{p_1-1}(-z)^h \\
& \Bigg(
\{{w_{11;h}}_\lambda w_{+1}\}(w+\partial+W_{++})^{-1} w_{1+} 
+w_{+1}(w+\lambda+\partial+W_{++})^{-1}  \{{w_{11;h}}_\lambda {w_{1+}} \} \\
& -w_{+1} (w+\lambda+\partial+W_{++})^{-1} \{{w_{11;h}}_\lambda {W_{++}} \} (w+\partial+W_{++})^{-1} w_{1+}
\Bigg)
\,,
\end{split}
\end{equation}
where all products are row by column multiplications,
and similarly the third term in the RHS of \eqref{eq:constrained-eq1} is
\begin{equation}\label{eq:constrained-eq3}
\begin{split}
& -\sum_{k=0}^{p_1-1}(-w)^k
\big\{ {w_{+1}(z+\partial+W_{++})^{-1} w_{1+}} _\lambda {w_{11;k}} \big\} 
=-\sum_{k=0}^{p_1-1}(-w)^k \\
& \Bigg(
\{ {w_{+1}}_{\lambda+\partial} {w_{11;k}} \}_\to (z\!+\!\partial\!+\!W_{++})^{-1} w_{1+}
+\{ { w_{1+}^T }_{\lambda+\partial} {w_{11;k}} \}_\to (z\!-\!\lambda\!-\!\partial\!+\!W_{++}^T)^{-1} w_{+1}^T \\
& -\sum_{i,j=1}^s \{ {(W_{++})_{ij}}_{\lambda+\partial} {w_{11;k}} \}_\to 
\Big((z\!-\!\lambda\!-\!\partial\!+\!W_{++}^T)^{-1} w_{+1}^T\Big)_i \Big((z\!+\!\partial\!+\!W_{++})^{-1} w_{1+}\Big)_j
\Bigg)
\,,
\end{split}
\end{equation}
where the superscript $T$ denotes taking the transpose of a vector or a matrix.
In writing these formulas we use parenthesis in order to indicate where $\partial$ acts. 
For example, if we write an expression such as $a(\partial bc)d$, we mean $ab'cd+abc'd$.
To derive \eqref{eq:constrained-eq3} we used the fact that,
if $S_{ij}(z)=z\delta_{ji}+(W_{++})_{ij}$, 
then 
\begin{equation}\label{eq:S*}
(S^*)^{-1}_{ij}(\lambda-z)=(z-\lambda-\partial+W_{++}^T)^{-1}_{ij}1
\,.
\end{equation}
Combining \eqref{eq:constrained-eq2} and \eqref{eq:constrained-eq3},
we get the fourth term in the RHS of \eqref{eq:constrained-eq1}:
\begin{equation}\label{eq:constrained-eq4}
\begin{split}
& \big\{ {w_{+1}(z+\partial+W_{++})^{-1} w_{1+}} _\lambda {w_{+1}(w+\partial+W_{++})^{-1} w_{1+}} \big\} \\
& = \Bigg(
\{ {(w_{+1})_i}_{\lambda+\partial} {(w_{+1})_h} \}_\to (z+\partial+W_{++})^{-1}_{ij} (w_{1+})_j
\Bigg)
(w+\partial+W_{++})^{-1}_{hk} (w_{1+})_k \\
& +
(w_{+1})_h(w+\lambda+\partial+W_{++})^{-1}_{hk}  \{ {(w_{+1})_i}_{\lambda+\partial} (w_{1+})_k \}_\to 
(z+\partial+W_{++})^{-1}_{ij} (w_{1+})_j \\
& -
(w_{+1})_h (w+\lambda+\partial+W_{++})^{-1}_{hp} 
\Bigg(
\{ {(w_{+1})_i}_{\lambda+\partial} (W_{++})_{pq} \}_\to  \\
&\qquad\qquad
(z+\partial+W_{++})^{-1}_{ij} (w_{1+})_j
\Bigg)
(w+\partial+W_{++})^{-1}_{qk} (w_{1+})_k \\
& +
\Bigg(
\{ {(w_{1+}^T)_j}_{\lambda+\partial} (w_{+1})_h\}_\to
(z-\lambda-\partial+W_{++}^T)^{-1}_{ji} (w_{+1}^T)_i 
\Bigg)
(w+\partial+W_{++})^{-1}_{hk} (w_{1+})_k \\
& +
(w_{+1})_h(w+\lambda+\partial+W_{++})^{-1}_{hk}  \{ {(w_{1+}^T)_j}_{\lambda+\partial} {(w_{1+})_k} \}_\to
(z-\lambda-\partial+W_{++}^T)^{-1}_{ji} (w_{+1}^T)_i \\
& -
(w_{+1})_h (w+\lambda+\partial+W_{++})^{-1}_{hp} 
\Bigg(
\{ {(w_{1+}^T)_j}_{\lambda+\partial} {(W_{++})_{pq}} \}_\to \\
&\qquad\qquad
(z-\lambda-\partial+W_{++}^T)^{-1}_{ji} (w_{+1}^T)_i 
\Bigg)
(w+\partial+W_{++})^{-1}_{qk} (w_{1+})_k 
\end{split}\end{equation}%
\begin{equation*}\begin{split}%
& -
\Bigg(
\{ {(W_{++})_{lr}}_{\lambda+\partial} (w_{+1})_h\}_\to
\Big((z-\lambda-\partial+W_{++}^T)^{-1}_{li} (w_{+1}^T)_i\Big) \\
&\qquad\qquad
\Big((z+\partial+W_{++})^{-1}_{rj} (w_{1+})_j\Big)
\Bigg)
(w+\partial+W_{++})^{-1}_{hk} (w_{1+})_k \\ 
& -
(w_{+1})_h(w+\lambda+\partial+W_{++})^{-1}_{hk}  
\{ {(W_{++})_{lr}}_{\lambda+\partial} {(w_{1+})_k} \}_\to \\
&\qquad\qquad
\Big((z-\lambda-\partial+W_{++}^T)^{-1}_{li} (w_{+1}^T)_i\Big) \Big((z+\partial+W_{++})^{-1}_{rj} (w_{1+})_j\Big) \\
& +
(w_{+1})_h (w+\lambda+\partial+W_{++})^{-1}_{hp} 
\Bigg(
\{ {(W_{++})_{lr}}_{\lambda+\partial} {(W_{++})_{pq}} \}_\to 
\Big((z-\lambda-\partial+W_{++}^T)^{-1}_{li} \\
&\qquad\qquad
(w_{+1}^T)_i\Big) 
\Big((z+\partial+W_{++})^{-1}_{rj} (w_{1+})_j\Big)
\Bigg)
(w+\partial+W_{++})^{-1}_{qk} (w_{1+})_k
\,,
\end{split}
\end{equation*}
where we use the Einstein convention of summing over repeated indices.

Next, we compute the RHS of \eqref{eq:adler-L1}.
Note that if $L_1(\partial)$ is given by \eqref{eq:L1-constraint},
then
\begin{equation}\label{eq:L1-constraint-b}
L_1(z)
=
\sum_{k=0}^{p_1}w_{11;k}(-z)^k
-w_{+1}(z+\partial+W_{++})^{-1} w_{1+}
\,,
\end{equation}
where $w_{11;p}=-1$,
and (cf. \eqref{eq:S*})
\begin{equation}\label{eq:L1-constraint-c}
L_1^*(\lambda-z)
=
\sum_{k=0}^{p_1}(-z+\lambda+\partial)^k w_{11;k}
-w_{1+}^T(z-\lambda-\partial+W_{++}^T)^{-1} w_{+1}^T
\,.
\end{equation}
Hence, the RHS of \eqref{eq:adler-L1} is
\begin{equation}\label{eq:constrained-eq5}
\begin{split}
& 
L_1(w+\lambda+\partial)\iota_z(z\!-\!w\!-\!\lambda\!-\!\partial)^{-1}L_1^*(\lambda-z)
- L_1(z)\iota_z(z\!-\!w\!-\!\lambda\!-\!\partial)^{-1}L_1(w) \\
& =
\sum_{h,k=0}^{p_1}
w_{11;h}
\Big(
(\!-\!w\!-\!\lambda\!-\!\partial)^h(\!-\!z\!+\!\lambda\!+\!\partial)^k 
- (\!-\!z)^h(\!-\!w)^k
\Big)
\iota_z(z\!-\!w\!-\!\lambda\!-\!\partial)^{-1}
w_{11;k} \\
& -
\sum_{h=0}^{p_1}
w_{11;h}
\iota_z(z\!-\!w\!-\!\lambda\!-\!\partial)^{-1}
\Big(
(-w-\lambda-\partial)^h
w_{1+}^T(z-\lambda-\partial+W_{++}^T)^{-1} w_{+1}^T \\
& \qquad\qquad\qquad\qquad\qquad\qquad
-(-z)^h
w_{+1}(w+\partial+W_{++})^{-1} w_{1+}
\Big) \\
& -
\sum_{k=0}^{p_1}
\Big(
w_{+1}(w+\lambda+\partial+W_{++})^{-1}\circ w_{1+}
(-z+\lambda+\partial)^k \\
& \qquad\qquad
-
\big( w_{+1}(z+\partial+W_{++})^{-1} w_{1+} \big)
(-w)^k
\Big)
\iota_z(z\!-\!w\!-\!\lambda\!-\!\partial)^{-1}
w_{11;k} \\
& +
w_{+1}(w+\lambda+\partial+W_{++})^{-1} w_{1+}
\iota_z(z\!-\!w\!-\!\lambda\!-\!\partial)^{-1}
w_{1+}^T(z-\lambda-\partial+W_{++}^T)^{-1} w_{+1}^T \\
& -
\big( w_{+1}(z+\partial+W_{++})^{-1} w_{1+} \big)
\iota_z(z\!-\!w\!-\!\lambda\!-\!\partial)^{-1}
w_{+1}(w+\partial+W_{++})^{-1} w_{1+}
\,.
\end{split}
\end{equation}
We expand the first term in the RHS of \eqref{eq:constrained-eq5}
in powers of $-z$ and $-w$.
As a result we get (cf. \eqref{eq:20150911-1})
\begin{equation}\label{eq:constrained-eq6}
\begin{split}
& 
\sum_{h,k=0}^{p_1}
w_{11;h}
\Big(
(\!-\!w\!-\!\lambda\!-\!\partial)^h(\!-\!z\!+\!\lambda\!+\!\partial)^k 
- (\!-\!z)^h(\!-\!w)^k
\Big)
\iota_z(z\!-\!w\!-\!\lambda\!-\!\partial)^{-1}
w_{11;k} \\
& =
\sum_{h,k=0}^{p_1}
(-z)^h(-w)^k
\sum_{n=0}^{p_1-h-1}
\sum_{a=\max\{0,n-k\}}^{p_1+n-k}
\Bigg(
\binom{n}{a}(-1)^{a}w_{11;n+h+1}(\lambda+\partial)^{a}w_{11;k+a-n} \\
& +\!\!\!
\sum_{b=0}^{p_1-n-h-1}
\!\!\!
\binom{k+a}{a}
\binom{h\!+\!n\!+\!b\!+\!1}{b}
(-1)^{a+1}
w_{11;k+a-n}(\lambda+\partial)^{a+b}w_{11;h+n+b+1}
\Bigg)
\,.
\end{split}
\end{equation}
The second and third terms in the RHS of \eqref{eq:constrained-eq5}
can be expanded in view of the following identity,
which can be easily proved for all $n\in\mb Z_+$,
\begin{equation}\label{eq:identity-rational}
\begin{split}
& (x^n(y+T)^{-1}-y^n(x+T)^{-1})(x-y)^{-1}
=
\sum_{i=0}^{n-2}\sum_{j=0}^{n-2-i} x^iy^j(-T)^{n-2-i-j} \\
& +\! \sum_{i=0}^{n-1}\!\! x^i(-T)^{n-1-i} (y\!+\!T)^{-1}
\!+\!
\sum_{i=0}^{n-1}\!\! y^i(-\!T)^{n-1-i} (x\!+\!T)^{-1}
\!+\!
(-\!T)^n(x\!+\!T)^{-1}(y\!+\!T)^{-1}
\,,
\end{split}
\end{equation}
where $x,y,T$ are commuting variables.
For example, the second term in the RHS of \eqref{eq:constrained-eq5}
can be rewritten as
\begin{equation}\label{eq:constrained-eq7a}
\begin{split}
& -
\sum_{h=0}^{p_1}
w_{11;h}
\iota_z(z\!-\!w\!-\!\lambda\!-\!\partial)^{-1}
\Big(
(-w-\lambda-\partial)^h
w_{1+}^T(z-\lambda-\partial+W_{++}^T)^{-1} w_{+1}^T \\
& \qquad\qquad\qquad\qquad\qquad\qquad
-(-z)^h
w_{+1}(w+\partial+W_{++})^{-1} w_{1+}
\Big) \\
& =
-\sum_{h=0}^{p_1}(-1)^h
w_{11;h}\Big|_{\mu=\partial}
w_{+1}
\big((w+\lambda+\mu)^h(z-\lambda-\mu+\partial+W_{++})^{-1} \\
&\qquad
-z^h(w+\partial+W_{++})^{-1}\big)(z-w-\lambda-\mu)^{-1}w_{1+}
\,,
\end{split}
\end{equation}
and, therefore, applying identity \eqref{eq:identity-rational} 
with $x=z$, $y=w+\lambda+\mu$
and $T=-\lambda-\mu+\partial+W_{++}$,
it is equal to
\begin{equation}\label{eq:constrained-eq7b}
\begin{split}
&
\sum_{h=0}^{p_1}
\sum_{i=0}^{h-2}\sum_{j=0}^{h-2-i} 
w_{11;h}
(-z)^i(-w-\lambda-\partial)^j
w_{1+}^T
(-\lambda-\partial+W_{++}^T)^{h-2-i-j} w_{+1}^T \\
& -
\sum_{h=0}^{p_1}
\sum_{i=0}^{h-1}
w_{11;h}
(-z)^i
\Big(
(w+\partial+W_{++})^{-1}
w_{1+}
\Big)^T
(-\lambda-\partial+W_{++}^T)^{h-1-i} w_{+1}^T \\
& -
\sum_{h=0}^{p_1}
\sum_{i=0}^{h-1} 
w_{11;h}
(-\!w\!-\!\lambda\!-\!\partial)^i 
w_{1+}^T
(-\!\lambda\!-\!\partial+W_{++}^T)^{h-1-i} (z\!-\!\lambda\!-\!\partial\!+\!W_{++}^T)^{-1} w_{+1}^T \\
& +
\sum_{h=0}^{p_1}
w_{11;h}
\Big((w\!+\!\partial\!+\!W_{++})^{-1}w_{1+}\Big)^T
(-\!\lambda\!-\!\partial\!+\!W_{++}^T)^h (z\!-\!\lambda\!-\!\partial\!+\!W_{++}^T)^{-1} w_{+1}^T
\,.
\end{split}
\end{equation}
Similarly, the third term in the RHS of \eqref{eq:constrained-eq5}
can be rewritten as
\begin{equation}\label{eq:constrained-eq8a}
\begin{split}
& -
\sum_{k=0}^{p_1}
\Big(
w_{+1}(w+\lambda+\partial+W_{++})^{-1} w_{1+}
(-z+\lambda+\partial)^k
\iota_z(z\!-\!w\!-\!\lambda\!-\!\partial)^{-1}
w_{11;k} \\
& \qquad\qquad
-
\big( w_{+1}(z+\partial+W_{++})^{-1} w_{1+} \big)
(-w)^k
\iota_z(z\!-\!w\!-\!\lambda\!-\!\partial)^{-1}
w_{11;k} 
\Big) \\
& = -
\sum_{k=0}^{p_1} (-1)^k \big(\big|_{\mu=\partial}w_{11;k}\big) w_{+1}
\big(
(z-\lambda-\mu)^k
(w+\lambda+\mu+\partial+W_{++})^{-1} \\
&\qquad
-w^k
(z+\partial+W_{++})^{-1}
\big)
(z-w-\lambda-\mu)^{-1}
w_{1+}
\,,
\end{split}
\end{equation}
and, therefore, applying identity \eqref{eq:identity-rational} 
with $x=z-\lambda-\mu$, $y=w$
and $T=\lambda+\mu+\partial+W_{++}$,
it is equal to
\begin{equation}\label{eq:constrained-eq8b}
\begin{split}
& 
-\sum_{k=0}^{p_1}\sum_{i=0}^{k-2}\sum_{j=0}^{k-2-i}
w_{+1}
(\lambda+\partial+W_{++})^{k-2-i-j}
w_{1+}
(-z+\lambda+\partial)^i
(-w)^j
w_{11;k} \\
& +
\sum_{k=0}^{p_1}\sum_{i=0}^{k-1}
w_{+1}
(\lambda+\partial+W_{++})^{k-1-i}
(w+\lambda+\partial+W_{++})^{-1} w_{1+}
(-z+\lambda+\partial)^i
w_{11;k} \\
& +
\sum_{k=0}^{p_1}\sum_{i=0}^{k-1}
w_{+1}
(\lambda+\partial+W_{++})^{k-1-i}
\big( (z+\partial+W_{++})^{-1} w_{1+} \big)
(-w)^i
w_{11;k} \\
& -\sum_{k=0}^{p_1}
w_{+1}
(\lambda+\partial+W_{++})^k
(w+\lambda+\partial+W_{++})^{-1}
\big( (z+\partial+W_{++})^{-1} w_{1+} \big)
w_{11;k}
\,.
\end{split}
\end{equation}
Finally, the last two terms in the RHS of \eqref{eq:constrained-eq5}
can be rewritten as
\begin{equation}\label{eq:constrained-eq9a}
\begin{split}
&
w_{+1}(w+\lambda+\partial+W_{++})^{-1} w_{1+}
\iota_z(z\!-\!w\!-\!\lambda\!-\!\partial)^{-1}
w_{1+}^T(z-\lambda-\partial+W_{++}^T)^{-1} w_{+1}^T \\
& -
\big( w_{+1}(z+\partial+W_{++})^{-1} w_{1+} \big)
\iota_z(z\!-\!w\!-\!\lambda\!-\!\partial)^{-1}
w_{+1}(w+\partial+W_{++})^{-1} w_{1+} \\
& =
\!
\Big(w_{+1}(w\!+\!\mu\!+\!\partial\!+\!W_{++})^{-1} w_{1+}\Big)
(z\!-\!w\!-\!\mu)^{-1}
\Big|_{\mu=\lambda\!+\!\partial}
\Big(w_{+1}(z\!-\!\mu\!+\!\partial\!+\!W_{++})^{-1} w_{1+}\Big)
\\
& -
\Big( w_{+1}(z\!+\!\partial\!+\!W_{++})^{-1} w_{1+} \Big)
(z\!-\!w\!-\!\mu)^{-1}
\Big|_{\mu=\lambda\!+\!\partial}
\Big( w_{+1}(w\!+\!\partial\!+\!W_{++})^{-1} w_{1+} \Big) 
\,.
\end{split}
\end{equation}
We next use the following identity
of rational functions, which can be easily checked:
\begin{equation}\label{eq:identity-rational2}
\begin{split}
& ((x+S)^{-1}(y+T)^{-1}-(x+T)^{-1}(y+S)^{-1})(x-y)^{-1} \\
& =
(x+S)^{-1}(x+T)^{-1}(y+S)^{-1}(y+T)^{-1}(S-T)
\,.
\end{split}
\end{equation}
where $x,y,S,T$ are commuting variables.
Using \eqref{eq:identity-rational2} 
with $x=z$, $y=w+\mu$, $S=(\partial+W_{++})\otimes1$ and $T=-\mu+1\otimes(\partial+W_{++})$,
we can rewrite the RHS of \eqref{eq:constrained-eq9a} as
\begin{equation}\label{eq:constrained-eq9b}
\begin{split}
& 
w_{+1}(w+\lambda+\partial+W_{++})^{-1}
\big( (z+\partial+W_{++})^{-1} w_{1+} \big) \\
&\qquad\times 
\big( (\partial+W_{++})(w+\partial+W_{++})^{-1} w_{1+} \big)^T
(z-\lambda-\partial+W_{++}^T)^{-1} w_{+1}^T \\
& -w_{+1}(\lambda+\partial+W_{++})(w+\lambda+\partial+W_{++})^{-1}
\big( (z+\partial+W_{++})^{-1} w_{1+} \big) \\
&\qquad\times
\big( (w+\partial+W_{++})^{-1} w_{1+} \big)^T
(z-\lambda-\partial+W_{++}^T)^{-1} w_{+1}^T 
\,.
\end{split}
\end{equation}
We now combine, on one hand, 
equations \eqref{eq:constrained-eq1},
\eqref{eq:constrained-eq2},
\eqref{eq:constrained-eq3}
and \eqref{eq:constrained-eq4},
and, on the other hand, 
equations \eqref{eq:constrained-eq5},
\eqref{eq:constrained-eq6},
\eqref{eq:constrained-eq7b},
\eqref{eq:constrained-eq8b},
and \eqref{eq:constrained-eq9b}.
Comparing the results, we get the desired formulas for all the $\lambda$-brackets.
For example, 
comparing the coefficient of $(-z)^h(-w)^k$, for $h,k\geq0$, in \eqref{eq:constrained-eq1}
and in \eqref{eq:constrained-eq6}, \eqref{eq:constrained-eq7b}, and \eqref{eq:constrained-eq8b},
we get
\begin{equation}\label{eq:constrained-lambda-1}
\begin{split}
& \{{w_{11;h}}_\lambda{w_{11;k}}\}^{\mc W}_0 \\
& =
\sum_{n=0}^{p_1-h-1}
\sum_{a=\max\{0,n-k\}}^{p_1+n-k}
\Bigg(
\binom{n}{a}(-1)^{a}w_{11;n+h+1}(\lambda+\partial)^{a}w_{11;k+a-n} \\
&\quad +\!\!\!
\sum_{b=0}^{p_1\!-\!n\!-\!h\!-\!1}
\binom{k+a}{a}
\binom{h\!+\!n\!+\!b\!+\!1}{b}
(-1)^{a+1}
w_{11;k+a-n}(\lambda+\partial)^{a+b}w_{11;h+n+b+1}
\Bigg) \\
& +
\!\!\!
\sum_{a=0}^{p_1\!-\!h\!-\!k\!-\!2}\sum_{b=0}^{p_1\!-\!h\!-\!k\!-\!2\!-\!a}
\!\!\!
\binom{k+a}{a}
w_{11;a+b+h+k+2}
(-\lambda-\partial)^{a}
\Big((-\!\lambda\!-\!\partial\!+\!W_{++}^T)^{b} w_{+1}^T\Big)^T
w_{1+} \\
& -
\sum_{a=0}^{p_1\!-\!h\!-\!k\!-\!2}\sum_{b=0}^{p_1\!-\!h\!-\!k\!-\!2\!-\!a}
\binom{h+a}{a}
w_{+1}
(\lambda+\partial+W_{++})^{b}
w_{1+}
(\lambda+\partial)^{a}
w_{11;a+b+h+k+2} 
\,.
\end{split}
\end{equation}
Note that for $p_1\leq2$ the last two terms in the RHS of \eqref{eq:constrained-lambda-1}
vanish, in accordance with the first four lines in \eqref{eq:minimal-0}.
Next, comparing the terms with non-negative powers of $z$
in \eqref{eq:constrained-eq2}
and in \eqref{eq:constrained-eq7b}-\eqref{eq:constrained-eq8b},
we get
\begin{equation*}
\begin{split}
& - \sum_{h=0}^{p_1-1}(-z)^h
\{{w_{11;h}}_\lambda w_{+1}\}^{\mc W}_0 (w+\partial+W_{++})^{-1} w_{1+} 
\\
& - \sum_{h=0}^{p_1-1}(-z)^h
w_{+1}(w+\lambda+\partial+W_{++})^{-1}  \{{w_{11;h}}_\lambda {w_{1+}} \}^{\mc W}_0 \\
& + \sum_{h=0}^{p_1-1}(-z)^h
w_{+1} (w+\lambda+\partial+W_{++})^{-1} \{{w_{11;h}}_\lambda {W_{++}} \}^{\mc W}_0 
(w+\partial+W_{++})^{-1} w_{1+}
\\
& =
-
\sum_{h=0}^{p_1}
\sum_{i=0}^{h-1} 
(-z)^i
w_{11;h}
\big( (w+\partial+W_{++})^{-1} w_{1+} \big)^T
(-\lambda-\partial+W_{++}^T)^{h-1-i} w_{+1}^T \\
& +
\sum_{k=0}^{p_1}
\sum_{i=0}^{k-1}
w_{+1}
(\lambda+\partial+W_{++})^{k-1-i}
(w+\lambda+\partial+W_{++})^{-1} w_{1+}
(-z+\lambda+\partial)^i
w_{11;k} 
\,,
\end{split}
\end{equation*}
which implies
\begin{equation}\label{eq:constrained-lambda-2}
\begin{split}
& 
\{{w_{11;h}}_\lambda w_{+1}\}^{\mc W}_0
=
\sum_{a=0}^{p_1-h-1}
w_{11;a+h+1}
\Big((-\lambda-\partial+W_{++}^T)^{a} w_{+1}^T\Big)^T
\,,\\
& 
\{{w_{11;h}}_\lambda {w_{1+}} \} ^{\mc W}_0
= -
\!\!\!
\sum_{a=0}^{p_1-h-1}
\!
\sum_{b=0}^{p_1-h-1-a}
\!\!\!
\binom{a+h}{a}
(\lambda\!+\!\partial\!+\!W_{++})^{b}
w_{1+}
(\lambda\!+\!\partial)^a
w_{11;a+b+h+1} 
\,,\\
&
\{{w_{11;h}}_\lambda {W_{++}} \}^{\mc W}_0
=
0
\,.
\end{split}
\end{equation}
Similarly, looking at the terms with non-negative powers of $w$
in \eqref{eq:constrained-eq3}
and in \eqref{eq:constrained-eq7b}-\eqref{eq:constrained-eq8b},
we get
\begin{equation*}
\begin{split}
& - \sum_{k=0}^{p_1-1}(-w)^k
{\{ {w_{+1}}_{\lambda+\partial} {w_{11;k}} \}^{\mc W}_0}_\to (z\!+\!\partial\!+\!W_{++})^{-1} w_{1+} 
\\
& - \sum_{k=0}^{p_1-1}(-w)^k
{\{ { w_{1+}^T }_{\lambda+\partial} {w_{11;k}} \}^{\mc W}_0}_\to (z\!-\!\lambda\!-\!\partial\!+\!W_{++}^T)^{-1} w_{+1}^T 
\\
& + \sum_{k=0}^{p_1-1}(-w)^k
\sum_{i,j=1}^s {\{ {(W_{++})_{ij}}_{\lambda+\partial} {w_{11;k}} \}^{\mc W}_0}_\to 
\Big((z\!-\!\lambda\!-\!\partial\!+\!W_{++}^T)^{-1} w_{+1}^T\Big)_i 
\\
&\qquad\qquad\times
\Big((z\!+\!\partial\!+\!W_{++})^{-1} w_{1+}\Big)_j
\\
& =
-
\sum_{h=0}^{p_1}
w_{11;h}
\sum_{i=0}^{h-1} (-\!w\!-\!\lambda\!-\!\partial)^i 
w_{1+}^T
(z\!-\!\lambda\!-\!\partial\!+\!W_{++}^T)^{-1} (-\!\lambda\!-\!\partial\!+\!W_{++}^T)^{h-1-i} w_{+1}^T
 \\
& +
\sum_{k=0}^{p_1}
\sum_{i=0}^{k-1}
w_{+1}
(\lambda+\partial+W_{++})^{k-1-i}
\big( (z+\partial+W_{++})^{-1} w_{1+} \big)
(-w)^i
w_{11;k} 
\,,
\end{split}
\end{equation*}
which implies
\begin{equation}\label{eq:constrained-lambda-3}
\begin{split}
& 
\{ {w_{+1}}_{\lambda} {w_{11;k}} \}^{\mc W}_0
=
-
\sum_{a=0}^{p_1-k-1}
w_{+1}
(\lambda+\partial+W_{++})^{a}
w_{11;a+k+1} 
\,,\\
& 
\{ { w_{1+}^T }_{\lambda} {w_{11;k}} \}^{\mc W}_0
=
\!\!\!
\sum_{a=0}^{p_1\!-\!k\!-\!1}
\sum_{b=0}^{p_1\!-\!k\!-\!1\!-\!a}
\!\!\!
\binom{a\!+\!k}{a}
w_{11;a+b+k+1}
(-\!\lambda\!-\!\partial)^a 
w_{1+}^T
(-\!\lambda\!-\!\partial\!+\!W_{++}^T)^{b}1
\,,\\
&
\{ {(W_{++})_{ij}}_{\lambda} {w_{11;k}} \}^{\mc W}_0 
=
0
\,.
\end{split}
\end{equation}
Finally, taking all the remaining terms, with negative powers in both $z$ and $w$,
in \eqref{eq:constrained-eq4},
and in \eqref{eq:constrained-eq7b}, \eqref{eq:constrained-eq8b} and \eqref{eq:constrained-eq9b},
we get the remaining $\lambda$-brackets:
\begin{equation}\label{LHS}
\begin{split}
& 
\{ {(w_{+1})_i}_{\lambda} (w_{1+})_j \}^{\mc W}_0
=
- \sum_{k=0}^{p_1}
(\lambda+\partial+W_{++})^k_{ji}
w_{11;k}
\,,\\
& 
\{ {(w_{1+})_j}_{\lambda} (w_{+1})_i\}^{\mc W}_0
=
\sum_{h=0}^{p_1}
w_{11;h}
(-\lambda-\partial+W_{++}^T)^h_{ij}1
\,,\\
&
\{ {(W_{++})_{ij}}_{\lambda} {(W_{++})_{hk}} \}^{\mc W}_0
=
\delta_{ik}(W_{++})_{hj}-\delta_{hj}(W_{++})_{ik}
+\delta_{ik}\delta_{hj} \lambda
\,,\\
& \{{(w_{+1})_k}_\lambda{(W_{++})_{ij}}\}^{\mc W}_0
=-\delta_{ik}(w_{+1})_j
=-\{{(W_{++})_{ij}}_\lambda{(w_{+1})_k}\}^{\mc W}_0
\,,
\\
&
\{{(w_{1+})_k}_\lambda{(W_{++})_{ij}}\}^{\mc W}_0
=\delta_{kj}(w_{1+})_i
=-\{{(W_{++})_{ij}}_\lambda{(w_{1+})_k}\}^{\mc W}_0
\,.
\end{split}
\end{equation}
It is immediate to check that, in the special case $p_1=2$,
the $\lambda$-brackets \eqref{eq:constrained-lambda-1}, \eqref{eq:constrained-lambda-2}, 
\eqref{eq:constrained-lambda-3}, and \eqref{LHS},
reduce to \eqref{eq:minimal-0}. 

Next, we derive in a similar way the $1$-st Poisson structure $\{\cdot\,_\lambda\,\cdot\}^{\mc W}_1$,
by computing the RHS of \eqref{eq:bi-adler-L1}:
\begin{equation}\label{eq:constrained-1st-1}
\begin{split}
& \iota_z(z\!-\!w\!-\!\lambda)^{-1}
\big(
L_1(w+\lambda)
- L_1(z)
\big)
+
\iota_z(z\!-\!w\!-\!\lambda\!-\!\partial)^{-1}
\big(
(L_1)^*(\lambda-z)-L_1(w)
\big) \\
& =
\sum_{k=0}^{p_1}w_{11;k}((-w-\lambda)^k-(-z)^k)(z-w-\lambda)^{-1} \\
&\qquad
+ \sum_{k=0}^{p_1}((-z+\lambda+\partial)^k-(-w)^k)(z-w-\lambda-\partial)^{-1} w_{11;k} \\
&\qquad
-w_{+1}(z-w-\lambda)^{-1}
\big((w+\lambda+\partial+W_{++})^{-1}-(z+\partial+W_{++})^{-1}\big)w_{1+} \\
&\qquad
-\Big|_{\mu=\lambda+\partial}
w_{+1}(z-w-\mu)^{-1}
\big((z-\mu+\partial+W_{++})^{-1}-(w+\partial+W_{++})^{-1}\big)w_{1+}
\\
& =
\sum_{h=0}^{p_1\!-\!1}\sum_{k=0}^{p_1\!-\!1\!-\!h}\sum_{\ell=0}^{p_1\!-\!h\!-\!k\!-\!1}
\!\!\! (-z)^h(-w)^k
\Big(
\binom{\ell+k}{k}(-\lambda)^\ell
-\binom{\ell+h}{h}(\lambda+\partial)^\ell
\Big) w_{11;\ell+h+k+1} \\
&\qquad
-w_{+1}(w+\lambda+\partial+W_{++})^{-1}(z+\partial+W_{++})^{-1}w_{1+} \\
&\qquad
+\big((z-\lambda-\partial+W_{++}^T)^{-1}w_{+1}^T\big)^T(w+\partial+W_{++})^{-1}w_{1+}
\,.
\end{split}
\end{equation}
For the first equality we used \eqref{eq:L1-constraint-b}-\eqref{eq:L1-constraint-c},
while the second equality is obtained by a straightforward computation.
Comparing 
equation \eqref{eq:constrained-1st-1}
with equations \eqref{eq:constrained-eq1},
\eqref{eq:constrained-eq2},
\eqref{eq:constrained-eq3}
and \eqref{eq:constrained-eq4},
we get, in view of \eqref{eq:bi-adler-L1},
the $1$-st Poisson structure:
\begin{equation}\label{eq:constraint-1st}
\begin{split}
&  \{ {w_{11;h}}_{\lambda} w_{11;k} \}^{\mc W}_1
=
\sum_{\ell=0}^{p_1\!-\!h\!-\!k\!-\!1}
\!\!\!
\Big(
\binom{\ell+k}{k}(-\lambda)^\ell
-\binom{\ell+h}{h}(\lambda+\partial)^\ell
\Big) w_{11;\ell+h+k+1}
\,,\\
& \{ {(w_{+1})_i}_{\lambda} (w_{1+})_j \}^{\mc W}_1
=-\{ {(w_{1+})_j}_{\lambda} (w_{+1})_i \}^{\mc W}_1
=-\delta_{ij}
\,,\\
&\text{ all other } \lambda-\text{brackets of generators} = 0
\,.
\end{split}
\end{equation}
Again, in the special case $p_1=2$,
the $\lambda$-brackets \eqref{eq:constraint-1st}
reduce to \eqref{eq:minimal-1}.

Note that, in agreement with Corollary \ref{cor:casimirs},
all entries of the matrix $W_{++}$ 
are central with respect to the $\lambda$-bracket
$\{\cdot\,_\lambda\,\cdot\}_1^{\mc W}$.
Therefore we can consider the differential ideal
$\mc J$ of $\mc W$ generated by these elements,
which will be a PVA ideal with respect to the $\lambda$-bracket $\{\cdot\,_\lambda\,\cdot\}_1^{\mc W}$.
We may then apply a Dirac reduction with respect 
to the $\lambda$-bracket $\{\cdot\,_\lambda\,\cdot\}_0^{\mc W}$,
to get a bi-PVA structure on the quotient $\mc W/\mc J$,
with compatible $\lambda$-brackets
induced by the $1$-st $\lambda$-bracket $\{\cdot\,_\lambda\,\cdot\}_1^{\mc W}$, which remains local,
and by the Dirac modification $\{\cdot\,_\lambda\,\cdot\}_0^{\mc W,D}$ of the $0$-th $\lambda$-bracket,
which becomes non-local.

The image of the operator $L_1(\partial)$ in the quotient space $\mc W/\mc J$ is
$$
\overline{L}_1(\partial)
=
-(-\partial)^{p_1}
+\sum_{k=0}^{p_1-1}w_{11;k}(-\partial)^k
-w_{+1}\partial^{-1}\circ w_{1+}
\,\in(\mc W/\mc J)((\partial^{-1}))\,,
$$
and therefore the corresponding integrable hierarchy
of Lax equations
$$
\frac{d}{dt_n}\overline{L}_1(\partial)=[(\overline{L}_1(\partial)^{\frac n{p_1}})_+,\overline{L}_1(\partial)]
\,,
$$
is the $(N-p_1)$-vector $p_1$-constrained KP hierarchy, \cite{YO76,Ma81,KSS91,Che92,KS92,SS93,ZC94}.
This isomorphism was originally stated in \cite{DSKV15-cor}.

\begin{remark}\label{rem:p11}
The case $p_1=1$ corresponds to $f=0$ and $S=\diag(1,0,\dots,0)$,
namely the family of affine PVAs $\mc V_\epsilon(\mf{gl}_N,S)$, $\epsilon\in\mb F$ (cf. Example \ref{ex:A}).
The corresponding bi-Adler type operator for this bi-PVA structure
is $|A(\partial)|_{11}$,
which, after the Dirac reduction by the elements $q_{ij}$ for $i=j=1$ and $i,j\geq2$,
becomes $\partial+\sum_{j=2}^N q_{j1}\partial^{-1}\circ q_{1j}$.
The corresponding hierarchy of bi-Hamiltonian equations for $N=2$
is the NLS hierarchy.
\end{remark}

We can consider, 
more generally, the partition $p=p_1+\dots+p_1+1+\dots+1$, 
with $r_1$ parts of size $p_1$ and $q_1$ parts of size $1$,
so that $r=r_1+q_1$, and $N=r_1p_1+q_1$.
In this case \eqref{eq:L1-explicit2} becomes
the following $r_1\times r_1$ matrix pseudodifferential operator
\begin{equation}\label{eq:L1-constraint-matrix}
L_1(\partial)
=
-\id_{r_1}(-\partial)^{p_1}
+\sum_{k=0}^{p_1-1}W_{1;k}(-\partial)^k
-W_2(\id_{q_1}\partial+W_4)^{-1}\circ W_3
\,,
\end{equation}
where 
\begin{equation*}
\begin{split}
& W_{1;k}
=\big(w_{ji;k}\big)_{1\leq i,j\leq r_1}
\,,\,\,
W_2=\big(w_{ji;0}\big)_{1\leq i\leq r_1<j\leq r}
\,,\\
& W_3=\big(w_{ji;0}\big)_{r_1<i\leq r,\,1\leq j\leq r_1}
\,,\,\,
W_4=\big(w_{ji;0}\big)_{r_1<i,j\leq r}
\,.
\end{split}
\end{equation*}
It is possible to compute (but we will not do it) the corresponding 
compatible PVA structures $\{\cdot\,_\lambda\,\cdot\}^{\mc W}_0$
and $\{\cdot\,_\lambda\,\cdot\}^{\mc W}_1$
for the family of PVAs $\mc W_\epsilon(\mf{gl}_N,f,S_1)$,
generalizing the formulas obtained for $r_1=1$.

As before, 
all entries $w_{ij;0}$, $r_1< i,j\leq r$ of the matrix $W_4$ are central with respect to the $1$-st 
PVA $\lambda$-bracket $\{\cdot\,_\lambda\,\cdot\}_1^{\mc W}$.
Therefore we can consider the differential ideal
$\mc J=\langle w_{ij;0}\rangle_{r_1< i,j\leq r}$ of $\mc W$ generated by these elements,
which will be a PVA ideal with respect to the $\lambda$-bracket $\{\cdot\,_\lambda\,\cdot\}_1^{\mc W}$,
and we may then apply a Dirac reduction with respect to the $\lambda$-bracket $\{\cdot\,_\lambda\,\cdot\}_0^{\mc W}$.
As a result, we get a bi-PVA structure on the quotient $\mc W/\mc J$,
with compatible $\lambda$-brackets
induced by the $1$-st $\lambda$-bracket $\{\cdot\,_\lambda\,\cdot\}_1^{\mc W}$, which remains local,
and by the Dirac modification $\{\cdot\,_\lambda\,\cdot\}_0^{\mc W,D}$ of the $0$-th $\lambda$-bracket,
which becomes non-local.
The image of the operator $L_1(\partial)$ in the quotient space $\mc W/\mc J$ is
$$
\overline{L}_1(\partial)
=
-(-\partial)^{p_1}
+\sum_{k=0}^{p_1-1}W_{1;k}(-\partial)^k
-W_2\partial^{-1}\circ W_3
\,\in(\mc W/\mc J)((\partial^{-1}))\,.
$$
The corresponding integrable hierarchy of Lax equations
$$
\frac{d}{dt_n}\overline{L}_1(\partial)=[(\overline{L}_1(\partial)^{\frac n{p_1}})_+,\overline{L}_1(\partial)]
\,,
$$
is a matrix analogue of the $q_1$-vector $p_1$-constrained KP hierarchy.

\appendix

\section{Simpler proof of Theorem \ref{prop:L2}}\label{sec:appendix}

\begin{lemma}\label{lem:app1}
Let $\mc V$ be a differential algebra with a $\lambda$-bracket $\{\cdot\,_\lambda\,\cdot\}$.
Let  $A(\partial)\in\Mat_{N\times N}\mc V((\partial^{-1}))$
be a matrix pseudodifferential operator of Adler type
with respect to the $\lambda$-bracket $\{\cdot\,_\lambda\,\cdot\}$,
and assume that $A(\partial)$ is invertible in $\Mat_{N\times N}\mc V((\partial^{-1}))$.
Then
\begin{equation}\label{eq:app1}
\begin{split}
\{A_{ij}(z)_\lambda (A^{-1})_{hk}(w)\}
& = - \delta_{hj} \sum_{t=1}^N 
\iota_z(z\!-\!w\!-\!\lambda\!-\!\partial)^{-1}(A_{it})^*(\lambda-z)(A^{-1})_{tk}(w)
\\
& 
+ \delta_{ik} \sum_{t=1}^N
(A^{-1})_{ht}(w+\lambda+\partial)A_{tj}(z)\iota_z(z\!-\!w\!-\!\lambda)^{-1}
\,.
\end{split}
\end{equation}
\end{lemma}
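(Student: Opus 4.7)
The plan is to apply the $\lambda$-bracket $\{A_{ij}(z)_\lambda\cdot\}$ as a ``derivation'' to the defining identity $A(\partial)\circ A^{-1}(\partial)=\id_N$ and then invert. At the level of symbols at $w$, this identity reads
$$
\sum_t A_{ht}(w+\partial)(A^{-1})_{tk}(w)=\delta_{hk},
$$
where the $\partial$ in $w+\partial$ acts on the $\mc V$-coefficients of $(A^{-1})_{tk}(w)$ to its right. First, I would apply $\{A_{ij}(z)_\lambda\cdot\}$ to both sides. Using the left Leibniz rule together with the sesquilinearity relation $\{a_\lambda\partial b\}=(\lambda+\partial)\{a_\lambda b\}$ and standard bookkeeping analogous to the proof of Theorem \ref{prop:L2}, one obtains
$$
\sum_t A_{ht}(w+\lambda+\partial)\{A_{ij}(z)_\lambda(A^{-1})_{tk}(w)\}
=-\sum_t\{A_{ij}(z)_\lambda A_{ht}(w+\partial)\}_\to(A^{-1})_{tk}(w),
$$
the subscript $\to$ meaning that the $\partial$ introduced by the substitution acts to the right on $(A^{-1})_{tk}(w)$.

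Next, I would substitute the Adler identity \eqref{eq:adler} for the inner bracket. Its second summand (the one proportional to $A_{hj}(z)$) contains the factor $A_{it}(w+\partial)(A^{-1})_{tk}(w)$; summation over $t$ collapses it via $\sum_t A_{it}(w+\partial)(A^{-1})_{tk}(w)=\delta_{ik}$, and since all remaining $\partial$'s inside the intervening $\iota_z$-factor annihilate the constant $\delta_{ik}$, this factor reduces to $\iota_z(z-w-\lambda)^{-1}$, giving the clean contribution $\delta_{ik}A_{hj}(z)\iota_z(z-w-\lambda)^{-1}$. The first summand (proportional to $A_{hj}(w+\lambda+\partial)$) retains the unsimplifiable bilinear combination $\sum_t(A_{it})^*(\lambda-z)(A^{-1})_{tk}(w)$.

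Finally, to isolate $\{A_{ij}(z)_\lambda(A^{-1})_{hk}(w)\}$ on the left-hand side, I would left-multiply by $(A^{-1})_{sh}(w+\lambda+\partial)$ and sum over $h$. The key observation is that $A^{-1}(T)A(T)=\id_N$ as operators in any shifted argument $T=w+\lambda+\partial$, which is a purely formal consequence of $A^{-1}(\partial)A(\partial)=\id_N$. This yields $\sum_h(A^{-1})_{sh}(w+\lambda+\partial)A_{ht}(w+\lambda+\partial)=\delta_{st}$ on the left, isolating the desired bracket, and an analogous telescoping $\sum_h(A^{-1})_{sh}(w+\lambda+\partial)A_{hj}(w+\lambda+\partial)=\delta_{sj}$ on the first summand on the right, producing the $\delta_{hj}$-term of \eqref{eq:app1}. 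The second summand retains $\sum_h(A^{-1})_{sh}(w+\lambda+\partial)A_{hj}(z)\iota_z(z-w-\lambda)^{-1}$, producing the $\delta_{ik}$-term. Relabelling the summation index recovers \eqref{eq:app1}.

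The main obstacle is the careful tracking of the two distinct copies of $\partial$ that appear simultaneously in intermediate expressions: the one originally present in the Adler identity (acting on every factor to its right) and the one introduced by the substitution $w\to w+\partial$ (which is required to act only on $(A^{-1})_{tk}(w)$). The computation goes through because both copies commute with the summation over $t$, and because the crucial collapse $\sum_t A_{it}(w+\partial)(A^{-1})_{tk}(w)=\delta_{ik}$ produces a constant that annihilates whatever residual derivatives remain inside the $\iota_z$-factor.
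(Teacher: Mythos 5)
Your proposal is correct and follows essentially the same route as the paper's proof: apply $\{A_{ij}(z)\,_\lambda\,\cdot\,\}$ to the symbol identity $\sum_t A_{ht}(w+\partial)(A^{-1})_{tk}(w)=\delta_{hk}$, insert the Adler identity \eqref{eq:adler}, and isolate the desired bracket by multiplying with the shifted inverse $(A^{-1})(w+\lambda+\partial)$, using exactly the two collapses $\sum_t A_{it}(w+\partial)(A^{-1})_{tk}(w)=\delta_{ik}$ and $\sum_h (A^{-1})_{sh}(w+\lambda+\partial)A_{hj}(w+\lambda+\partial)=\delta_{sj}$. The only difference is that the paper multiplies by the shifted inverse before substituting the Adler identity while you substitute first; this reordering is immaterial.
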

\begin{proof}
By the identity $\sum_\ell A_{r\ell}(w+\partial)(A^{-1})_{\ell k}(w)=\delta_{r,k}$, we have
\begin{equation}\label{eq:app5}
\begin{array}{l}
\displaystyle{
0=\{A_{ij}(z)_\lambda \delta_{r,k}\}
=\sum_{\ell=1}^N
\{A_{ij}(z)_\lambda A_{r\ell}(w+x)\}\big|_{x=\partial}(A^{-1})_{\ell k}(w)
} \\
\displaystyle{
+\sum_{\ell=1}^N
A_{r\ell}(w+\lambda+\partial) \{A_{ij}(z)_\lambda (A^{-1})_{\ell k}(w)\}
\,.}
\end{array}
\end{equation}
Applying $(A^{-1})_{hr}(w+\lambda+\partial)$ to both sides of \eqref{eq:app5}
and summing over $r$, we get
\begin{equation}\label{eq:app6}
\begin{array}{l}
\displaystyle{
\{A_{ij}(z)_\lambda (A^{-1})_{h k}(w)\}
} \\
\displaystyle{
=
- \sum_{\ell,r=1}^N
(A^{-1})_{hr}(w+\lambda+\partial)
\{A_{ij}(z)_\lambda A_{r\ell}(w+x)\}\big|_{x=\partial}(A^{-1})_{\ell k}(w)
\,.}
\end{array}
\end{equation}
We finally use the Adler identity \eqref{eq:adler} to rewrite the RHS of \eqref{eq:app6} as
\begin{equation}\label{eq:app6}
\begin{array}{l}
\displaystyle{
- \sum_{\ell,r=1}^N
(A^{-1})_{hr}(w\!+\!\lambda\!+\!\partial)
\Big(
A_{rj}(w\!+\!\lambda\!+\!\partial\!+\!x)\iota_z(z\!-\!w\!-\!\lambda\!-\!\partial\!-\!x)^{-1}
(A_{i\ell})^*(\lambda-z)
} \\
\displaystyle{
- A_{rj}(z)\iota_z(z\!-\!w\!-\!\lambda\!-\!\partial\!-\!x)^{-1}A_{i\ell}(w+x)
\Big)
\Big|_{x=\partial}(A^{-1})_{\ell k}(w)
} \\
\displaystyle{
=
- \sum_{\ell,r=1}^N
\delta_{h,j}
\iota_z(z\!-\!w\!-\!\lambda\!-\!\partial)^{-1}
(A_{i\ell})^*(\lambda-z)
(A^{-1})_{\ell k}(w)
} \\
\displaystyle{
+ \sum_{\ell,r=1}^N
(A^{-1})_{hr}(w\!+\!\lambda\!+\!\partial)
A_{rj}(z)\iota_z(z\!-\!w\!-\!\lambda\!-\!\partial)^{-1}
\delta_{i,k}
\,,}
\end{array}
\end{equation}
proving equation \eqref{eq:app1}.
\end{proof}
\begin{lemma}[{\cite[{Lem.3.1(b)}]{DSKV13}}]\label{lem:app2}
Consider the pencil of affine Poisson vertex algebras $\mc V=\mc V_\epsilon(\mf g,S)$
from Example \ref{ex:A}, with $S\in\mf g_d$.
For $a\in\mf g_{\geq\frac12}$ and $g\in\mc V(\mf g)$, we have
$\rho\{a_\lambda \rho(g)\}_\epsilon=\rho\{a_\lambda g\}_\epsilon$.
\end{lemma}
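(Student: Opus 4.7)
The plan is to exploit a structural description of $\ker\rho$ and reduce the identity to a short degree-counting check. Concretely, since $g-\rho(g)\in\ker\rho$ for every $g\in\mc V(\mf g)$, the claim $\rho\{a_\lambda g\}_\epsilon=\rho\{a_\lambda\rho(g)\}_\epsilon$ is equivalent to the invariance statement
\[
\rho\{a_\lambda X\}_\epsilon=0 \quad\text{for every } X\in\ker\rho,
\]
so I would aim to prove the latter. The first step is to identify $\ker\rho$ as the differential ideal of $\mc V(\mf g)$ generated by $\{\,b-(f|b)\,:\,b\in\mf g_{\geq1}\,\}$; this is immediate from the definition \eqref{rho} of $\rho$ on generators, since $\rho$ is the identity on $\mf g_{\leq\frac12}$.

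The second step is a Leibniz/sesquilinearity reduction. Any element of $\ker\rho$ is a $\mc V(\mf g)$-linear combination of elements of the form $\partial^n(b-(f|b))$ with $b\in\mf g_{\geq1}$, $n\ge0$. Writing $X=Y\cdot\partial^n(b-(f|b))$, the Leibniz rule (ii) and sesquilinearity (i) give
\[
\{a_\lambda X\}_\epsilon
=\{a_\lambda Y\}_\epsilon\,\partial^n(b-(f|b))
+Y\,(\lambda+\partial)^n\{a_\lambda b-(f|b)\}_\epsilon.
\]
Applying $\rho$, which is a differential algebra homomorphism and hence commutes with $\partial$ and products, the first summand dies because $\rho\partial^n(b-(f|b))=\partial^n\rho(b-(f|b))=0$. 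So the whole problem collapses to showing $\rho\{a_\lambda(b-(f|b))\}_\epsilon=0$ for each $b\in\mf g_{\geq1}$.

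Finally, I would carry out the verification by direct computation using \eqref{lambda}:
\[
\{a_\lambda(b-(f|b))\}_\epsilon=[a,b]+(a|b)\lambda+\epsilon(S|[a,b]).
\]
Since $a\in\mf g_{\geq\frac12}$ and $b\in\mf g_{\geq1}$, we have $[a,b]\in\mf g_{\geq\frac32}$, hence $\pi_{\leq\frac12}[a,b]=0$ and $(f|[a,b])=0$ (the invariant form pairs $\mf g_k$ with $\mf g_{-k}$ and $f\in\mf g_{-1}$), so $\rho([a,b])=0$. The scalar $(a|b)$ vanishes by the same pairing argument, and $(S|[a,b])=0$ since $S\in\mf g_d$ can only pair with $\mf g_{-d}$, while $[a,b]$ has positive degree. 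The only potential obstacle is degenerate bookkeeping in low-depth cases (e.g.\ $d=0$, which forces $f=0$ and $\mf g_{\geq1}=0$), but those render the claim vacuous. The substantive content really is nothing more than this degree count, so no serious obstacle is expected.
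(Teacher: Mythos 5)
Your proof is correct. The paper itself gives no argument for this lemma --- it is quoted verbatim from \cite[Lem.3.1(b)]{DSKV13} --- and your reduction (replace $g$ by $g-\rho(g)\in\ker\rho$, describe $\ker\rho$ as the differential ideal generated by $b-(f|b)$, $b\in\mf g_{\geq1}$, kill the ideal term by the homomorphism property of $\rho$ and sesquilinearity/Leibniz, and finish with the degree count showing $\rho([a,b])=(a|b)=(S|[a,b])=0$ for $a\in\mf g_{\geq\frac12}$, $b\in\mf g_{\geq1}$, $S\in\mf g_d$) is exactly the standard argument behind the cited result, so nothing is missing.
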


\begin{proof}[Proof of Theorem \ref{prop:L2}]
By the definition \eqref{eq:La} of the matrix $L_1(\partial)$
and by Lemma \ref{lem:app2}, we have
\begin{equation}\label{eq:app3}
\rho\{a_\lambda L_1^{-1}(w)_{ij}\}_\epsilon
=
\rho\{a_\lambda \rho(J_1A^{-1}(w)I_1)_{ij}\}_\epsilon \\
=
\rho\{a_\lambda A^{-1}(w)_{(ip_1),(j1)}\}_\epsilon
\,.
\end{equation}
Let $a=q_{(\tilde j,\tilde k),(\tilde i,\tilde h)}\in\mf g_{\geq\frac12}$.
Note that, by the definition \eqref{eq:A} of the matrix $A(\partial)$,
we have
$a=A(z)_{(\tilde i,\tilde h),(\tilde j,\tilde k)}$.
Hence, we can apply Lemma \ref{lem:app1} to get, from \eqref{eq:app3},
\begin{equation}\label{eq:app4}
\begin{array}{l}
\displaystyle{
\rho\{a_\lambda L_1^{-1}(w)_{ij}\}_\epsilon
=
\rho
\{{A(z)_{(\tilde i,\tilde h),(\tilde j,\tilde k)}}\,_\lambda\, A^{-1}(w)_{(ip_1),(j1)}\}_\epsilon 
} \\
\displaystyle{
= 
- \delta_{(ip_1)(\tilde j,\tilde k)} \sum_{\tau\in\mc J} 
\iota_z(z\!-\!w\!-\!\lambda\!-\!\partial)^{-1}(A_{(\tilde i,\tilde h)\tau})^*(\lambda-z)(A^{-1})_{\tau(j1)}(w)
} \\
\displaystyle{
+ \delta_{(\tilde i,\tilde h)(j1)} \sum_{\tau\in\mc J}
(A^{-1})_{(ip_1)\tau}(w+\lambda+\partial)A_{\tau(\tilde j,\tilde k)}(z)\iota_z(z\!-\!w\!-\!\lambda)^{-1}
\,.}
\end{array}
\end{equation}
To conclude we observe that the RHS of \eqref{eq:app4} is zero.
Indeed,
since by assumption $a=q_{(\tilde j,\tilde k),(\tilde i,\tilde h)}\in\mf g_{\geq\frac12}$,
we have $ \delta_{(ip_1)(\tilde j,\tilde k)} = \delta_{(\tilde i,\tilde h)(j1)} =0$
for every $(\tilde i\tilde h),(\tilde j\tilde k)\in\mc J$.
\end{proof}

\end{document}